%% file: chapter-2-for-arxiv.tex
\documentclass[runningheads]{llncs}
\usepackage[T1]{fontenc}
\usepackage{tabularx}
\usepackage{multirow}
\usepackage{amsmath}
\usepackage{mathptmx}
\usepackage{helvet}
\usepackage{courier}
\usepackage{type1cm}         
\usepackage{makeidx}         
\usepackage{graphicx}        
\usepackage{multicol}        
\usepackage[bottom]{footmisc}
\usepackage{newtxtext}       %
\usepackage[varvw]{newtxmath}       
\usepackage{hyperref}
\usepackage[sort&compress,numbers]{natbib}
\usepackage{subfig}
\usepackage{verbatim}
\usepackage{rotating}
\usepackage{xcolor}
\usepackage{tikz}
\usepackage{setspace}
\def\drawcross#1#2#3{
	\draw[thick,#3] (#1-.1,#2+.1) -- (#1+.1,#2-.1) (#1-.1,#2-.1) -- (#1+.1,#2+.1);
}
\input{2013-10-19-Macros-common.tex}

\usetikzlibrary {arrows.meta,shapes.misc}
\date{}

\begin{document}
\author{David Meredith}
\title{A General Theoretical Framework for Transformed Geometric Pattern Matching in Music}
\titlerunning{Framework for Geometric Pattern Matching in Music}
\institute{Aalborg University, Aalborg, Denmark\\{\small \email{dave@create.aau.dk}}\\{\small \email{dave@titanmusic.com}}\\{\small \url{https://www.titanmusic.com}}}

\maketitle
\input{abstract02}
\input{chapter-02-theoretical-framework}

\end{document}

%% file: 2013-10-19-Macros-common.tex
\usepackage{musicography} 

\newcommand{\colWidth}{\ \ \ \ }
\newcommand{\firstColWidth}{\ \ \ \ \ \ \ \ }

\newcommand{\keyword}[1]{\textrm{\bfseries #1}}

\newcommand{\dmreturn}{\keyword{return\ }}

\newcommand{\dmif}{\keyword{if\ }}
\newcommand{\dmto}{\keyword{\ to\ }}
\newcommand{\dmfor}{\keyword{for\ }}

\newcommand{\dmwhile}{\keyword{while\ }}
\newcommand{\dmelse}{\keyword{else\ }}

\newcommand{\becomes}{\leftarrow}
\newcommand{\dmcomment}[1]{{\color{gray}\qquad\blacktriangleright\;\textsl{#1}}}
\newcommand{\dmcommentnotab}[1]{{\color{gray}\blacktriangleright\;\textsl{#1}}}

\newcommand{\mathsc}[1]{{\normalfont\textsc{#1}}}

\newcommand{\fun}[1]{\mathsc{#1}}

\newcommand{\sorted}[1]{\mathbf{#1}}

\newcommand{\point}{p}

\newcommand{\concat}{\oplus}
\newcommand{\bigconcat}{\bigoplus}
\newcommand{\dataset}{D}

\newcommand{\covsetfun}{\fun{COV}}

\newcommand{\GetCompressionFactor}{\fun{CF}}
\newcommand{\compfacfun}{\GetCompressionFactor}

\newcommand{\pattern}{P}

\newcommand{\reals}{\mathbb{R}}
\newcommand{\integers}{\mathbb{Z}}

\newcommand{\transformation}{f}

\newcommand{\tc}{F}
\newcommand{\tcFun}[1]{\tc_{\mathrm{#1}}}
\newcommand{\FTwoT}{\tcFun{2T}}
\newcommand{\FTwoTR}{\tcFun{2TR}}
\newcommand{\FTwoST}{\tcFun{2ST}}
\newcommand{\FTwoSTR}{\tcFun{2STR}}
\newcommand{\FTwoSTRMod}[1]{\tcFun{2STRmod{#1}}}
\newcommand{\ftwostrmu}{\FTwoSTRMod{\mu}}

\newcommand{\basisSize}{\beta}

\newcommand{\basis}{\mathbf{b}}
\newcommand{\objectBasis}{{\basis}_{\mathrm{obj}}}
\newcommand{\imageBasis}{{\basis}_{\mathrm{img}}}

\newcommand{\BTSF}{\mathcal T}

\newcommand{\maxmatchfun}{\fun{M}_{\max}}
\newcommand{\maxpatfun}{\fun{P}_{\max}}
\newcommand{\occsetfun}{\fun{OS}}

\newcommand{\phiftwostd}{\phi_{\FTwoST'}}
\newcommand{\phiftwostrd}{\phi_{\FTwoSTR'}}

\newcommand{\allsigmas}{\mathcal U_\sigma}

\newcommand{\len}{\fun{Len}}

\newcommand{\setfun}{\fun{Set}}

\newcommand{\llpar}{\left(}
\newcommand{\rrpar}{\right)}

\newcommand{\llfloor}{\left\lfloor}
\newcommand{\rrfloor}{\right\rfloor}



%% file: abstract02.tex
\begin{abstract}
In this paper, we present a framework of concepts, terminology and notation for geometric pattern matching, centred around the idea of a {\em transformation class}. After formally defining the concept of a transformation class, we provide detailed definitions of a number of specific transformation classes that are particularly relevant to pattern matching and pattern discovery in music, which is the primary application domain within which the framework has been developed. We then provide formal definitions of the notions of {\em match}, {\em occurrence}, {\em maximal match\/} and {\em maximal occurrence\/} and prove some results relating to these concepts. We then define and explore in detail the idea of an {\em occurrence set\/} and the more general notion of a {\em heterogeneous occurrence set}. Finally, we explore in detail the concept of {\em transformational equivalence}, including a consideration of {\em transformation groups}, along with the group-theoretic and transformational equivalence properties of selected transformation classes. 
\end{abstract}

%% file: chapter-02-theoretical-framework.tex
\section{Introduction}

In this paper, we present a framework of concepts, terminology and notation for geometric pattern matching, centred around the idea of a {\em transformation class}. After defining some preliminary ideas, notation and terminology in Section~\ref{preliminary-concepts-section}, we provide a formal definition of the concept of a transformation class in Section~\ref{section-transformation-classes}. 
We then provide detailed definitions of a number of specific transformation classes that are particularly relevant to pattern matching and pattern discovery in music, as this has been the primary application domain within which the framework has been developed. In Section~\ref{matches-and-occurrences-section}, we provide formal definitions of the notions of {\em match}, {\em occurrence}, {\em maximal match\/} and {\em maximal occurrence\/} and prove some results relating to these concepts. In Section~\ref{main-occurrence-sets-section}, we define and explore in detail the notion of an {\em occurrence set\/} and the more general idea of a {\em heterogeneous occurrence set}. Finally, in Section~\ref{transformational-equivalence-section}, we explore in detail the concept of {\em transformational equivalence}, including a consideration of {\em transformation groups}, along with the group-theoretic and transformational equivalence properties of selected transformation classes.

Although the framework described in this paper has been developed primarily with music in mind, it is actually quite general and could, in principle, be applied in any domain where data takes the form of multi-dimensional point sets and meaningful relationships between sets of data points can be modelled by geometric transformations.
The framework presented here generalizes and develops further that proposed by \citet{MeredithArxiv2022,MeredithHCII2023,MeredithMML2024}. The transformation classes discussed in this paper were applied in \cite{MeredithArxivChapter12026} to the analysis of the relationships between occurrences of the so-called ``HAYDN'' theme in Maurice Ravel's {\em Menuet sur le nom d'Haydn}.

\section{Preliminary definitions of concepts, terminology and notation}
\label{preliminary-concepts-section}
In this section we define some basic concepts, terminology and notation that will be used throughout the remainder of this paper.

\subsection{Sets and ordered sets}
\label{sets-and-ordered-sets-section}
A {\em set}, $S=\lbrace s_1, s_2, \ldots s_n\rbrace$, is a collection of {\em distinct\/} objects. If $S$ is a set, then the {\em cardinality\/} of $S$, denoted by $|S|$, is the number of elements that it contains. Thus, if $S=\lbrace s_1, s_2, \ldots s_n\rbrace$, then $|S|=n$. If $\mathit\Psi=\lbrace S_1, S_2,\ldots S_m\rbrace$, where each $S_i$ is a set, then we define the following shorthand notations:
$\bigcup_{S\in\mathit\Psi} S= S_1\cup S_2 \cup\ldots \cup S_{m}$
and
$\bigcap_{S\in\mathit\Psi} S= S_1\cap S_2 \cap\ldots \cap S_{m}$.
Note that we will use braces, $\lbrace \cdot \rbrace$, to denote sets and name them with capital italic letters. We will reserve the operators, $\subset$ and $\supset$, to mean {\em proper\/} subset and superset, respectively. That is, $A\subset B\Rightarrow A\neq B$ and $A\supset B\Rightarrow A\neq B$. If we want to express that a set, $A$, is either a proper subset of $B$ or is equal to $B$, then we write $A\subseteq B$.

An {\em ordered set}, $\sorted A=\langle a_1, a_2, \ldots a_k \rangle$, is a collection of (not necessarily distinct) objects where the order in which the objects appear in the collection matters. Examples of ordered sets include sequences, vectors, lists, arrays and tuples. Thus, the ordered sets, $\sorted A=\langle 1, 2, 3\rangle$ and $\sorted B=\langle 2, 1, 3\rangle$, are distinct, whereas the sets, $X=\lbrace 1, 2, 3\rbrace$ and $Y=\lbrace 2, 1, 3\rbrace$, are equal. Note that we use angle brackets, $\langle \cdot \rangle$, to denote ordered sets and name them with bold letters. When an ordered set represents a vector or a tuple, we typically name it with a lower-case letter, whereas when it represents a sequence, array or list, we typically name it with a capital letter. We denote the set of distinct elements in an ordered set, $\sorted A$, by $\setfun(\sorted A)$ and define it as follows: $\setfun(\sorted A)=\bigcup_{a\in \sorted A}\lbrace a\rbrace$.

Given an ordered set, $\sorted A=\langle a_1, a_2, \ldots a_k \rangle$, then $\sorted A[i]$ denotes the $(i+1)$th element in $\sorted A$ (i.e., we use zero-based indexing). For example, if $\sorted B=\langle 2, 1, 3\rangle$, then 
$\sorted B[0]= 2$, $\sorted B[1]= 1$ and $\sorted B[2]= 3$. 
We define the {\em length\/} or {\em size\/} of $\sorted A$, denoted by $\len(\sorted A)$, to be the number of elements in $\sorted A$---for example, if $\sorted A=\langle a_1, a_2, \ldots a_k \rangle$, then $\len(\sorted A)=k$ and $\sorted A[\len(\sorted A)-1]$ is the last element in $\sorted A$.
We define the {\em cardinality\/} of $\sorted A$, denoted by $|\sorted A|$, to be the number of {\em distinct\/} elements in $\sorted A$. Thus $|\sorted A|=\left|\bigcup_{i=0}^{\len(A)-1}\lbrace\sorted A[i]\rbrace\right|$.
An ordered set, $\sorted A$ can be represented as a set of ordered pairs, $\bigcup_{i=0}^{\len(\sorted A)-1}\lbrace\langle  i,\sorted A[i]\rangle\rbrace$.

We define that, if $\sorted A$ is an ordered set, and $i$ and $j$ are integers such that $0\le i\le j\le\len(\sorted A)$, then $\sorted A[i,j]=\langle \sorted A[i],\sorted A[i+1],\ldots \sorted A[j-1] \rangle$. Note that $\len(\sorted A[i,j])=j-i$. For example, if $\sorted B=\langle 5, 6, 7\rangle$, then $\sorted B[0,2]=\langle5,6\rangle$, $\sorted B[1,3]=\langle6,7\rangle$ and $\sorted B[0,3]=\sorted B$. 

If $\sorted A=\langle a_1, a_2, \ldots a_k \rangle$ and $\sorted B=\langle b_1, b_2, \ldots b_\ell \rangle$ then the {\em concatenation of\/} $\sorted B$ {\em onto} $\sorted A$ is the ordered set,
$\sorted A\oplus \sorted B=\langle a_1, a_2, \ldots a_k, b_1, b_2, \ldots b_\ell\rangle$. It follows that, for any ordered set, $\sorted A= \sorted A[0,k]\concat\sorted A[k,\len(\sorted A)]$ (note that $\sorted A[i,j]$ is only defined for $i,j\in\integers$ and $0\le i\le j\le\len(\sorted A)$).

If we want to express that a sequence is the result of concatenating some sequence of other sequences together, then we can use the $\bigoplus$ operator. For example, $\bigoplus_{i=1}^{5}\langle0\rangle=\langle0\rangle\concat\langle0\rangle\concat\langle0\rangle\concat\langle0\rangle\concat\langle0\rangle=\langle0,0,0,0,0\rangle$ and $\bigoplus_{i=1}^{5}\langle i\rangle=\langle1,2,3,4,5\rangle$. Note that the argument expression of $\bigoplus_{i=j}^k$ must be an ordered set.
If $\sorted \mathit\Phi=\langle \sorted A_1,\sorted A_2,\ldots \sorted A_n\rangle$, where each $\sorted A_i$ is an ordered set, then we define the following shorthand notation:
$
\bigoplus_{i=1}^{n} \sorted A_i = \bigoplus_{\sorted A\in\sorted \mathit\Phi}\sorted A=\sorted A_1\oplus \sorted A_2\oplus\ldots \oplus \sorted A_{n}
$.

If $S$ is a set of numbers, then we define that $\min(S)$ and $\max(S)$ return the minimum and maximum values in $S$, respectively. 

If $\sorted A$ and $\sorted B$ are two ordered sets, then we define that $\sorted A<\sorted B$ if and only if $\sorted A$ is {\em lexicographically less than\/} $\sorted B$. That is, $\sorted A<\sorted B$ if and only if
\begin{enumerate}
\item there exists an integer, $0\le i<\min(\lbrace\len(\sorted A),\len(\sorted B)\rbrace)$ such that $\sorted A[i] < \sorted B[i]$ and $\sorted A[j] = \sorted B[j]$ for all $0 \le j < i$; or
\item $\len(\sorted A)<\len(\sorted B)$ and $\sorted A = \sorted B[0,\len(\sorted A)]$.
\end{enumerate}

\subsection{Pseudocode}
\label{pseudocode-section}
Algorithms are described in this paper using a form of pseudocode that we have used in several previous publications \citep[e.g., ][]{MeredithJNMR2015,Meredith2006, Meredith2006b, MeredithLemstromWiggins2002, Meredith2007b}. This pseudocode closely resembles that used by \citet{CormenEtAl2009} in terms of the assumptions that it makes about, for example, the machine model and data structures. In this pseudocode, data-structures are typically represented as sets and ordered sets. It is generally assumed that ordered sets will be implemented as arrays or some other array-like data structures that support constant-time random access. Sets are assumed to be implemented by data structures that allow for dictionary operations (add, delete and test for membership) to be carried out in $O(\log n)$ time in the worst case (e.g., by using a binary tree) or $O(1)$ time on average (e.g., by using a hash table), where $n$ is the number of elements in the set. If a specific data structure implementation is intended, then this will be explicitly stated. 

As an example, Fig.~\ref{insertion-sort-algorithm} shows pseudocode for the well-known insertion-sort algorithm. As can be seen in this example, an algorithm is expressed as a function which begins with a header line (unnumbered) in which the name of the function is printed flush-left in $\fun{CapitalizedSmallCaps}$. The name of the function is followed on the same line by a list of parameters enclosed in parentheses (in this example, simply $\mathbf A$). The numbered lines following the header line form the body of the function and describe the steps carried out by the algorithm on its parameters. 

\begin{figure}[t]
\begin{center}
\resizebox*{.7\columnwidth}{!}{
\begin{minipage}{\linewidth}
\begin{tabbing}
\input{ColumnDefinitions}
$\fun{InsertionSort}(\mathbf A)$\\
\>$\dmcommentnotab{$\mathbf A$ is an ordered set of elements in an arbitrary order.}$\\
1\>$\dmfor j \becomes 1 \dmto \len(\mathbf A)-1$\\
2\>\>$k\becomes\mathbf A[j]$\\
\>\>$\dmcommentnotab{Following inserts $\mathbf A[j]$ into the sorted sequence $\mathbf A[0,j]$}$\\
3\>\>$i\becomes j-1$\\
4\>\>$\dmwhile i \ge 0 \land \mathbf A[i] > k$\\
5\>\>\>$\mathbf A[i+1]\becomes \mathbf A[i]$\\
6\>\>\>$i\becomes i -1$\\
7\>\>$\mathbf A[i+1] \becomes k$\\
\end{tabbing}
\end{minipage}
}
\end{center}
\caption{Pseudocode for the insertion-sort algorithm (based on \citet[p.~18]{CormenEtAl2009}).}
\label{insertion-sort-algorithm}
\end{figure}

As exemplified in Fig.~\ref{insertion-sort-algorithm}, each line starts with a number and then the line itself is printed, indented by at least one tab stop. The looping constructs, {\bfseries for} and {\bfseries while}, and the conditional constructs, {\bfseries if} and {\bfseries else}, have the same meaning as they do in languages like C, C++, Java and C\#. Block structure is indicated solely by indentation, which reduces clutter and dispenses with the need for reserved keywords such as {\bfseries do}, {\bfseries begin}, {\bfseries end} and {\bfseries then}. It also removes the need for punctuation such as semicolons to mark the ends of statements. Note that comments are written in grey italics and start with a grey, right-pointing arrowhead (e.g., $\dmcommentnotab{A comment.}$). Anything to the right of such an arrowhead on a line does not form part of the algorithm. Lines that only contain comments are not numbered.

In this pseudocode, the assignment operator is $\becomes$ and the equality operator is $=$. Variables and parameters in a function should generally be considered local to the function, unless otherwise stated, and variables are not declared or typed before they are used. Typically, as in languages like Java and C\#, we assume that, when provided as the argument of a function, a primitive type value (e.g., a basic integer or floating-point value) is passed by value, while more complex structures and objects (e.g., sets and ordered sets) are passed by reference. If this distinction is significant in any specific case, then it will be noted in the text. As an example, the parameter {\bfseries A} in Fig.~\ref{insertion-sort-algorithm} is assumed to be passed by reference. This means that there is no need for the algorithm to return a new ordered set of values, as the elements in the ordered set passed to the algorithm in the parameter {\bfseries A} will be in sorted order after the algorithm has completed. Other features of the pseudocode will be introduced at the points in the paper where they are first required. 

\begin{figure}
\begin{center}
\resizebox*{.7\columnwidth}{!}{
\includegraphics{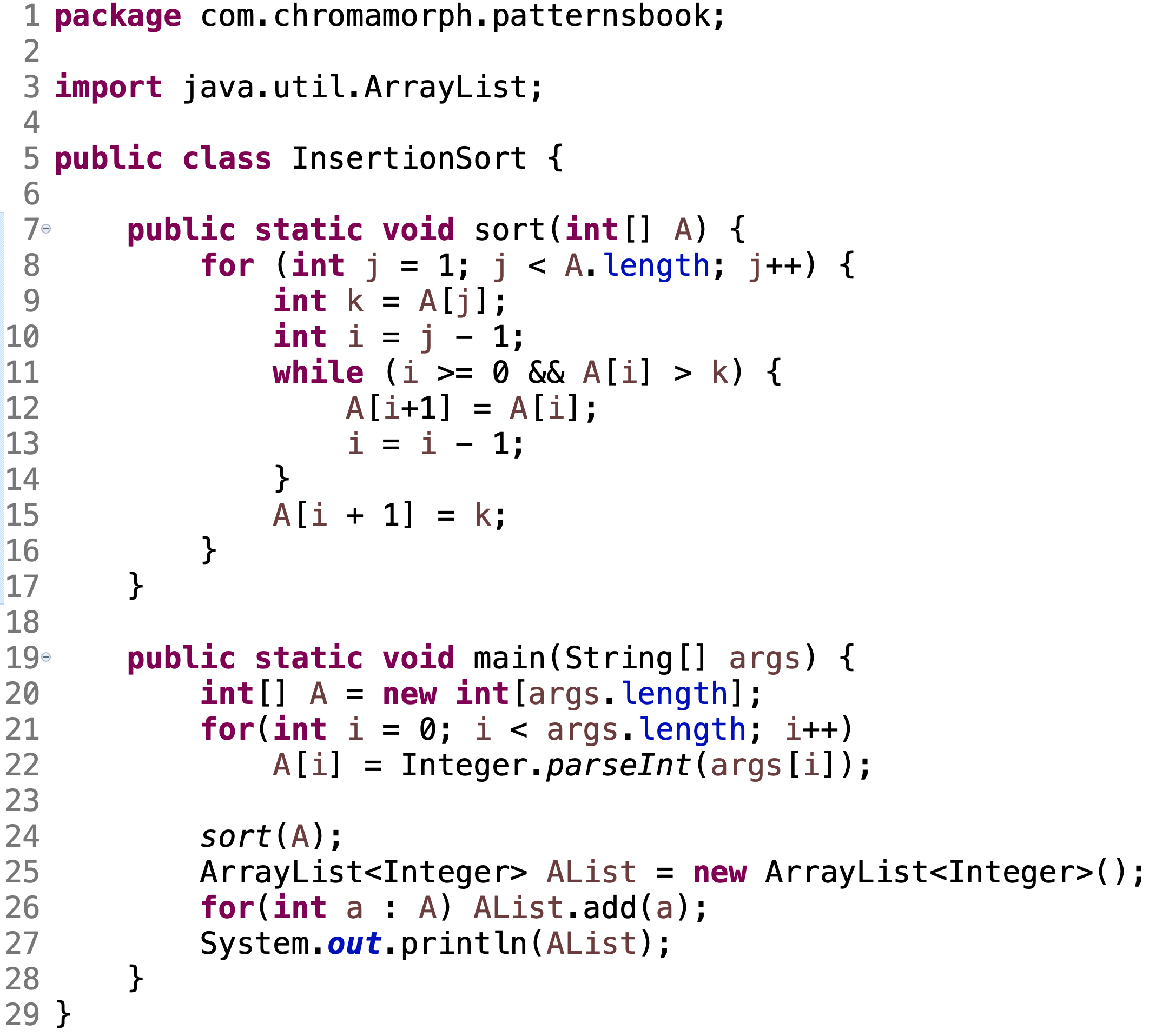}
}
\end{center}
\caption{A straightforward implementation in Java of the insertion-sort pseudocode shown in Fig.~\ref{insertion-sort-algorithm}.}
\label{insertion-sort-java}
\end{figure}

As a concrete example of how such pseudocode might be implemented, Fig.~\ref{insertion-sort-java} shows a straightforward Java implementation of the insertion sort algorithm shown in Fig.~\ref{insertion-sort-algorithm}.

\subsection{Datasets and patterns}
\label{datasets-and-patterns-section}
In this paper, we suppose that we are provided with a set of points, $\dataset\subset\mathbb{R}^k$, that we call a {\em dataset}. If $\dataset\subset\mathbb{R}^k$, then the {\em dimensionality\/} of $\dataset$ is $k$, so we can call it a {\em $k$-dimensional dataset}. We denote the {\em cardinality\/} or {\em size\/} of $\dataset$ by $|\dataset|$. Typically, in the context of an application in computational music analysis or music information retrieval, a dataset represents a piece of music, part of a piece of music, or a collection of pieces of music, with each point representing a tone, in the sense defined in footnote~2 on page~7 of \cite{MeredithArxivChapter12026}---that is, either an untied note in a score or a sequence of tied notes. Most commonly in the literature, the datasets considered are two-dimensional, with pitch represented on the $y$-axis and time represented on the $x$-axis. The eight types of two-dimensional representation introduced in \cite{MeredithArxivChapter12026}---that is, CPO, CPM, MPO, MPM, CO, CM, MO, and MM---are examples of this type of dataset. Figures~8 to 15 in \cite{MeredithArxivChapter12026} are therefore examples of datasets.
%
%

We use the term {\em pattern\/} to refer to any finite point set, $\pattern\subset\mathbb{R}^k$, usually (but not always) under consideration in the context of some specified dataset. For example, if we are interested in knowing if a point set, $A$, occurs within a dataset, then we might refer to $A$ as a ``query pattern'' or just a ``pattern'', even if it does not actually occur within the dataset. 
\subsection{Transformations}
\label{transformations-section}

In the context of a dataset, $\dataset\subset\mathbb{R}^k$,
we define a {\em transformation\/} to be a function, $\transformation:\mathbb{R}^k\rightarrow\mathbb{R}^k$.  
If a transformation, $\transformation$, maps a point $p$ onto a point $q$, then we can write $q=\transformation(p)$.
If $\transformation$ has a unique inverse, then we denote this inverse by $\transformation^{-1}$ and $\transformation^{-1}(p) = q \textrm{ if and only if } \transformation(q) = p$ for all $p,q\in\mathbb{R}^k$. 
If we have a $k$-dimensional point set, $S\subset\mathbb{R}^k$, and a transformation, $\transformation:\mathbb{R}^k\rightarrow\mathbb{R}^k$, then, for notational convenience, we define that $\transformation(S)=\bigcup_{p\in S}\lbrace \transformation(p)\rbrace$. Similarly, if $\mathbf S$ is an ordered set of $k$-dimensional points, then we define that $\transformation(\mathbf S)=\bigconcat_{i=0}^{\len(\mathbf S)-1} \langle \transformation(\mathbf S[i])\rangle$.

A transformation, $f$, is thus a binary relation, $f\subset\mathbb{R}^k\times\mathbb{R}^k$, such that for every point, $p\in\mathbb{R}^k$, there exists precisely one point, $q\in\mathbb{R}^k$, such that $(p,q)\in f$. 
If $(p,q)\in f$, then we can write $q=f(p)$ and we call $p$ the {\em argument\/} of $f$ and $q$ the {\em value\/} of $f$ for $p$. We can also say that $q$ is the {\em image\/} or {\em image point\/} of $p$ (which is the {\em object point}) under the transformation $f$.\footnote{We follow closely the terminology and notation used by \citet[pp.~1163--1168]{CormenEtAl2009}.} For example, if $k=2$ and $f$ is the transformation consisting of a translation by the vector, $\langle1,2 \rangle$, then
$
f = \left\lbrace(p,q):p,q\in\mathbb{R}^2\textrm{ and } q=p+\langle1,2 \rangle\right\rbrace
$,
where, if $p$ is a $k$-dimensional point and $v$ is a $k$-dimensional vector, then 
$
p+v=\bigoplus_{i=0}^{k-1}\langle p[i]+v[i]\rangle
$.

Note that, notwithstanding the notational shortcut just defined above, a transformation is defined to be a function that maps a {\em point\/} onto a {\em point}, not a point {\em set\/} onto a point {\em set}. This means that a mapping of a pattern that, for example, removes points from the pattern or that scales the pattern but applies different scale factors to different points in the pattern cannot be modelled (directly) within the framework presented in this paper, since it would involve applying different transformations to different points within the pattern.

Note also that the definition of transformation used in this paper is more general than the definitions given by \citet{MeredithArxiv2022,MeredithHCII2023,MeredithMML2024}, where transformations were restricted to being bijections.

\subsection{Modular arithmetic}
We define the $\bmod$ operator as follows.
\begin{definition}[$\bmod$]
We define that $a\bmod \mu=a - \mu\left\lfloor\frac a{\mu} \right\rfloor$ where $a,b,\mu\in\reals$, and $\lfloor x \rfloor$ is the largest integer less than or equal to $x$.
\label{mod-def}
\end{definition}
We also need the following lemma.
\begin{lemma}
If $a,b,\mu\in\reals$, then
$$a\bmod\mu=b\bmod\mu\iff(-a)\bmod\mu=(-b)\bmod\mu\,.$$
\begin{proof}
We first prove that the left-hand side implies the right-hand side. From Def.~\ref{mod-def}, it follows that
\begin{eqnarray}
a\bmod\mu=b\bmod\mu&\Rightarrow& a-\mu\llfloor\frac a\mu\rrfloor = b-\mu\llfloor\frac b\mu\rrfloor\nonumber\\
&\Rightarrow&a-b=\mu\llpar \llfloor\frac a\mu\rrfloor-\llfloor\frac b\mu\rrfloor\rrpar\,.\label{modlem1-eq1}
\end{eqnarray}
\begin{equation}
\llfloor\frac a\mu\rrfloor,\llfloor\frac b\mu\rrfloor\in\integers\Rightarrow \llpar \llfloor\frac a\mu\rrfloor-\llfloor\frac b\mu\rrfloor\rrpar\in\integers\,.
\label{modlem1-eq2}
\end{equation}
\begin{equation}
\text{\eqref{modlem1-eq1} and \eqref{modlem1-eq2}}\Rightarrow\exists i\in\integers:a-b=i\mu\Rightarrow-a-(-b)=-i\mu\,.
\label{modlem1-eq3} 
\end{equation}
Therefore the difference between $-a$ and $-b$ is an integer multiple of $\mu$. Therefore, $(-a)\bmod\mu=(-b)\bmod\mu$, which proves that
\begin{equation}
a\bmod\mu=b\bmod\mu\Rightarrow(-a)\bmod\mu=(-b)\bmod\mu\,.
\label{modlem1-eq4}
\end{equation}
From \eqref{modlem1-eq4}, it follows directly that
\begin{eqnarray}
(-a)\bmod\mu=(-b)\bmod\mu&\Rightarrow&(-(-a))\bmod\mu=(-(-b))\bmod\mu\nonumber\\
&\Rightarrow&a\bmod\mu=b\bmod\mu\,.
\label{modlem1-eq5}
\end{eqnarray}
Equations \eqref{modlem1-eq4} and \eqref{modlem1-eq5} prove the lemma.\qed
\end{proof}
\label{mod-lemma-1}
\end{lemma}

\section{Transformation classes}
\label{section-transformation-classes}

In \cite{MeredithArxivChapter12026}, we informally presented the concept of a {\em transformation class}, as well as the specific classes, $\FTwoSTR$, $\FTwoSTRMod{7}$ and $\FTwoSTRMod{12}$. In this section, we begin by formally defining the concept of a transformation class. We then provide formal definitions of a number of specific transformation classes that are particularly relevant to pattern matching and pattern discovery in music.

\subsection{Definition of transformation class}

Figure~\ref{RAVEL-HAYDN-GROUND-TRUTH-CPO-GT} shows a geometric representation of Ravel's {\em Menuet sur le nom d'Haydn\/} as a set of 2-dimensional points in which each point, $\langle t,p\rangle$, represents a tone in the score, where $t$ is the onset time of the tone in tatums (in this case, quavers) and $p$ is the chromatic pitch of the note (reproduced from \cite[Fig.~8 on p.~11]{MeredithArxivChapter12026}). We call such a geometric, point-set representation a {\em chromatic-pitch--onset\/} (or CPO) representation. The figure shows 31 transformed occurrences of the so-called ``HAYDN'' theme, B-A-D-D-G (see \cite{MeredithArxivChapter12026} for details).

\begin{sidewaysfigure}
\resizebox{1\linewidth}{!}{\includegraphics{"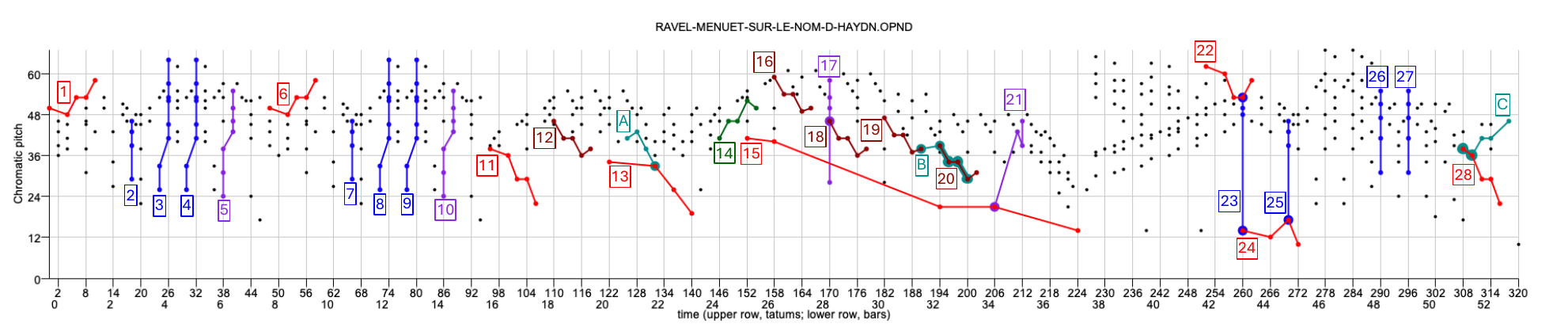"}}
\caption{A {\em chromatic-pitch--onset\/} (or CPO) representation of Ravel's {\em Menuet sur le nom d'Haydn}. Each point represents a distinct untied note or sequence of tied notes in the score. The $x$-coordinate of a point gives the onset time of the corresponding note in tatums (in this case, quavers) and the $y$--coordinate gives the chromatic pitch of the note. The occurrences of the HAYDN theme identified in Figs.~5 and 7 of \cite{MeredithArxivChapter12026} are shown, labelled with the same numbers and letters and coloured using the same colours as in those figures. When a point occurs in two patterns, it is drawn twice with different diameters and colours so that its presence in both patterns is visible. Reproduced from \cite[Fig.~8 on p.~11]{MeredithArxivChapter12026}.}
	\label{RAVEL-HAYDN-GROUND-TRUTH-CPO-GT}
\end{sidewaysfigure}

\input{RAVEL-HAYDN-GROUND-TRUTH-CPO-F2STRBook-IPTG.tex}

Table~\ref{RAVEL-HAYDN-GROUND-TRUTH-CPO-IPTG} (reproduced from \cite[p.~17]{MeredithArxivChapter12026}) shows the inter-pattern transformation graph for the transformation class, $\FTwoSTR$, and the patterns shown in Fig.~\ref{RAVEL-HAYDN-GROUND-TRUTH-CPO-GT}. 
$\FTwoSTR$ is the class of 2-dimensional transformations that consist of a scaling parallel to the $x$-axis by a real-valued scale factor, $f$, denoted by $S_f$, followed by a translation by a vector, $v$, denoted by $T_v$, optionally followed by a reflection in the $x$-axis, denoted by $R_x$.
An entry in this table in the row labelled $A$ and the column labelled $B$ gives the transformations within $\FTwoSTR$ that map pattern $A$ onto pattern $B$.\footnote{See \cite[Section~3.4, pp.~21 ff.]{MeredithArxivChapter12026} for details on inter-pattern transformation graphs.} This inter-pattern transformation graph was generated using an algorithm that, when given two patterns, $A$ and $B$, and a transformation class, $\tc$, can determine the transformations in $\tc$ (if any) that map $A$ onto $B$. In order to achieve this, a transformation class, $\tc$, must be defined in such a way that, given two patterns, $A$ and $B$, both of which are of some minimum size, $\beta(\tc)$, there must exist a finite upper limit on the number of transformations in $\tc$ that map $A$ onto $B$.

Suppose we have a transformation class, $\tc$, and two {\em ordered\/} sets of points, $\basis_1$ and $\basis_2$, such that $|\basis_1| = |\basis_2| = \len(\basis_1)=\len(\basis_2)=\basisSize(\tc)$, and there exists at least one transformation, $\transformation\in\tc$, such that $\basis_2[i]=\transformation(\basis_1[i])$ for all $0\le i<\basisSize(\tc)$. In addition, we require that all the points in $\basis_1$ are distinct and that all the points in $\basis_2$ are distinct. For a given $\tc$, $\basisSize(\tc)$ is the minimum value for which there is a finite maximum number, $\tau(\tc)$, of distinct transformations, $\transformation\in\tc$, for which this is true. For example, for the transformation class, $\FTwoT$, which consists of the set of all 2-dimensional translations, $\basisSize(\FTwoT)=1$ and $\tau(\FTwoT)=1$, since there is only one translation vector that maps any given single point onto any other single point. 

On the other hand, if we define that $\FTwoTR$ is the class of 2-dimensional transformations that consist of a translation, optionally followed by a reflection in the $x$-axis, then, given two distinct points, $p$ and $q$, there are at most {\em two\/} transformations, $g$, in $\FTwoTR$, such that $q=g(p)$---one that simply translates $p$ by the vector $q-p$, and a second that translates $p$ by $\langle q[0]-p[0],-p[1]-q[1]\rangle$ and then reflects in the $x$-axis (see Fig.~\ref{inter-point-transformations-example-1}). Therefore, $\beta(\FTwoTR)=1$ and $\tau(\FTwoTR)=2$.

\begin{figure}[t]
	\begin{center}
		{\resizebox{5cm}{!}{
			\begin{tikzpicture}
			
			\draw[thick,-{>[length=3mm,width=4mm]}] (-.1,4) -- (9.5,4);
			\draw[thick,{<[length=3mm,width=4mm]}-{>[length=3mm,width=4mm]}] (0,-.5) -- (0,8.5);		
			
			\draw (10,4) node{\LARGE $x$};
			\draw (-1,8) node{\LARGE $y$};
			
			\foreach \x in {1,...,8} {
				\draw[very thin,gray] (\x,0) -- (\x,8);
				\draw[thick] (\x,-.1) -- (\x,0);
				\draw[very thin,gray] (0,\x) -- (9,\x);
				\draw[thick] (-.1,\x) -- (0,\x);
			}
			\draw[very thin, gray](0,0) -- (9,0);
			\draw[very thin, gray](9,0) -- (9,8);
			
			\foreach \x in {-4,-3,...,4} {
				\draw (-.2,\x+4) node[anchor=east]{\x};
			}
			\foreach \x in {0,1,...,9} {
				\draw (\x+.1,4) node[anchor=north]{\x};
			}

			\filldraw  (1,5) circle[radius=.1]
				(7,7) circle[radius=.1]
				(7,1) circle[radius=.1];

			\draw (1,5) node[anchor=south west]{\LARGE $a$};
			\draw (7,7) node[anchor=south west]{\LARGE $b$};
			\draw (7,1) node[anchor=south west]{\LARGE $c$};
			
			 \draw[-{Stealth[length=7mm,width=4mm]}] (1,5) -- (7,7);
			 \draw (4,6) node[anchor=north west]{\LARGE $T_{\langle6,2\rangle}$};

			 \draw[-{Stealth[length=7mm,width=4mm]}] (1,5) -- (7,1);
			 \draw (4.4,3) node[anchor=north east]{\LARGE $T_{\langle6,-4\rangle}$};

			 \draw[-{Stealth[length=7mm,width=4mm]}] (7,1) .. controls (8,4) .. (7,7);
			 \draw (6.9,4.5) node[anchor=west]{\LARGE $R_x$};


%
%
%
%
%
%
%
%
%
%
			
			\end{tikzpicture}
		}}
	\end{center}
	\caption{There is only one transformation in $\FTwoT$ that maps point $a$ onto point $b$---a translation by $\langle6,2\rangle$. However, there are two transformations in $\FTwoTR$ that map point $a$ onto point $b$---a translation by $\langle6,2\rangle$, or a translation by $\langle6,-4\rangle$, which takes $a$ to $c$, followed by a reflection in the $x$-axis. The number of transformations in $\FTwoSTR$ that map point $a$ onto point $b$ is infinite, as there are two transformations for every possible $x$-axis scale factor.}
	\label{inter-point-transformations-example-1}
\end{figure}

However, given $p,q\in\reals^2$,  the number of transformations in $\FTwoSTR$ that map $p$ onto $q$ is infinite; whereas, if we define that $\basis_1 = \langle p,q\rangle$ and $\basis_2 = \langle p',q'\rangle$, then there are at most two transformations that map $\basis_1$ onto $\basis_2$, so $\beta(\FTwoSTR)=2$ and $\tau(\FTwoSTR)=2$.

We now present a formal definition of the concept of a transformation class.

\begin{definition}[Transformation Class]
A set of functions, $\tc\subset \lbrace\transformation: \left(\transformation:\reals^k\rightarrow\reals^k\right)\rbrace$, is a {\em transformation class}, if and only if the following constraints are satisfied:
\begin{enumerate}
\item it is possible to define a finite integer, $\lambda(\tc)>0$, such that each function, $\transformation\in\tc$, can be uniquely identified by a {\em transformation parameter vector}, $\sigma_{\tc}(\transformation)\in\reals^{\lambda(\tc)}$; 
\item it is possible to define a {\em transformation class function}, $\phi_{\tc}:\reals^{\lambda(\tc)}\times\reals^k\rightarrow\reals^k$, such that $\phi_{\tc}(\sigma_{\tc}(\transformation),\point)=\transformation(\point)$ for all $\point\in\reals^k$ and all $\transformation\in\tc$;
\item it is possible to define finite integers, $\beta(\tc)>0$ and $\tau(\tc)>0$, along with a {\em basis-pair transformation set function}, 
$$\BTSF_{\tc}:\left(
			\left(
				\reals^k
			\right)^{
				\beta(\tc)
			}
		       \right)^2
		       \rightarrow
		       2^{
		       	\left(
				\reals^{\lambda(\tc)}
			\right)
		       }\,,$$ 
such that, given $\langle\basis_1,\basis_2\rangle\in\left(\left(\reals^k\right)^{\beta(\tc)}\right)^2$,
$$\BTSF_{\tc}(\basis_1,\basis_2)=\lbrace\sigma:\basis_2[i]=\phi_{\tc}(\sigma,\basis_1[i])\text{ for all } 0\le i  < \beta(\tc)\rbrace\,$$
and $\beta(\tc)$ is the minimum value for which there exists a finite integer, $\tau(\tc)$, such that
$$
|\BTSF_{\tc}(\basis_1,\basis_2)|\le\tau(\tc) \text{ for all } (\basis_1,\basis_2)
$$
and 
$$
|\BTSF_{\tc}(\basis_1,\basis_2)|=\tau(\tc) \text{ for at least one pair, } (\basis_1,\basis_2)\,.
$$
\end{enumerate}
\label{transformation-class-definition}
\end{definition}
In this definition, $\lambda(F)$ is called the {\em parameter vector length\/} of $F$. \label{basis-definition} $\basis_1$ and $\basis_2$ are ordered sets of distinct points of length and cardinality $\beta(\tc)$ called the {\em object basis\/} and the {\em image basis}, respectively. $\beta(\tc)$ is therefore called the {\em basis size\/} of $\tc$. Given a pair of bases, $\langle\basis_1,\basis_2\rangle$, for a transformation class, $\tc$, $\tau(\tc)$ is the maximum number of transformations within $\tc$ that map $\basis_1$ onto $\basis_2$. $k$ is the {\em dimensionality\/} of the transformation class, $\tc$, and $\tc$ can thus be referred to as a {\em $k$-dimensional transformation class}. For convenience, we define that, 
\begin{enumerate}
\item if $S$ is a set of points, then $\phi_F(\sigma, S)=\bigcup_{p\in S}\lbrace\phi_F(\sigma,p)\rbrace$; and
\item if $\mathbf A$ is an ordered set of points, then $\phi_F(\sigma, \mathbf A)=\bigconcat_{i=0}^{\len(\mathbf A)-1}\langle\phi_F(\sigma,\mathbf A[i])\rangle$.
\end{enumerate}
We also define the {\em universe of transformation parameter vectors\/} for a transformation class, $\tc$, denoted by $\allsigmas(\tc)$, as follows:
\begin{equation}
\allsigmas(\tc)=\left\lbrace\sigma:\left(\exists f\in\tc : \sigma_\tc(f)=\sigma\right)\right\rbrace\,.
\label{allsigmas-definition}
\end{equation}

\subsection{Examples of transformation classes}

We will now define a number of transformation classes that are of particular relevance to pattern matching and pattern discovery in music.

\subsubsection{The transformation class $\FTwoT$}
Let $\FTwoT$ be the set of 2-dimensional translations. A translation in a geometric, pitch--time representation such as those identified in \cite{MeredithArxivChapter12026}, corresponds to the musical transformation of a time-shift combined with a transposition. Each transformation, $f\in \FTwoT$, is a function, $f:\mathbb{R}^2\rightarrow\mathbb{R}^2$, that translates a point, $p\in\mathbb{R}^2$, by a  vector, $v\in\mathbb{R}^2$. That is, $f(p)=p+v$ where $p+v$ is the point that results from translating the point, $p$, by the vector, $v$. The vector $v$ thus uniquely determines $f$ within the transformation class, $\FTwoT$. 
We can therefore define that $\lambda(\FTwoT) = 2$ and $\sigma_{\FTwoT}(f)=v$ where $f(p)=p+v$. This satisfies the first constraint in Definition~\ref{transformation-class-definition}.

We can define the transformation class function for $\FTwoT$ as follows:
\begin{equation}
\phi_{\FTwoT}(\sigma_{\FTwoT}(f),p)=p+\sigma_{\FTwoT}(f)\,,
\label{ftwot-phi}
\end{equation}
where $\sigma_{\FTwoT}(f)\in\reals^2$ for all $f\in\FTwoT$.
This satisfies the second constraint in Definition~\ref{transformation-class-definition}.

We can define that $\beta(\FTwoT)=1$ and that $\tau(\FTwoT)=1$. The basis-pair transformation set function for $\FTwoT$ can then be defined as follows:
$$
\BTSF_{\FTwoT}(\basis_1,\basis_2)=\left\lbrace\basis_2[0]-\basis_1[0]\right\rbrace\,.
$$
That is, $\basis_1$ and $\basis_2$ each contain just one 2-dimensional point, and $\BTSF_{\FTwoT}(\basis_1,\basis_2)$ simply contains the vector from $\basis_1[0]$ to $\basis_2[0]$. This satisfies the third constraint in Definition~\ref{transformation-class-definition} and $\FTwoT$ is therefore a transformation class.


\subsubsection{The transformation class $\FTwoTR$}
Let $\FTwoTR$ be the set containing those transformations that consist of a 2-dimensional translation, optionally followed by a reflection in the $x$-axis. If we use one of the geometric, pitch--time representations introduced in \cite{MeredithArxivChapter12026}, then $\FTwoTR$ contains the geometric representations that correspond to combinations of the musical transformations of transposition and inversion. Each transformation, $f\in\FTwoTR$, first translates a point, $p$,  by a vector, $v$ to give a new point, $p'=p+v$. It then may or may not reflect $p'$ in the $x$-axis to get the final resulting point $q$. Reflecting the point, $p'$, in the $x$-axis is equivalent to multiplying its $y$-coordinate by $-1$. If $f$ does not reflect $p'$, then we can simply multiply the $y$-coordinate by $1$.

\sloppypar We can define that $\lambda(\FTwoTR)=3$, so that each transformation, $f\in\FTwoTR$, is uniquely identified by a transformation parameter vector, $\sigma_{\FTwoTR}(f)\in\reals^3$, where $\langle\sigma_{\FTwoTR}(f)[0],\sigma_{\FTwoTR}(f)[1]\rangle$ is equal to the vector, $v$, by which $f$ translates its argument point, $p$, to obtain $p'$. We can then specify that $\sigma_{\FTwoTR}(f)[2]\in\lbrace1,-1\rbrace$ is the factor by which the $y$-coordinate of $p'$ is multiplied to give the final resulting point, $q$, with $\sigma_{\FTwoTR}(f)[2]=-1$ if $p'$ is reflected in the $x$-axis and $\sigma_{\FTwoTR}(f)[2]=1$ if it is not. This satisfies the first constraint in Definition~\ref{transformation-class-definition}.

We can define the transformation class function for $\FTwoTR$ as follows:
\begin{equation}
\phi_{\FTwoTR}(\sigma,p)=\langle p[0]+\sigma[0],\sigma[2](p[1]+\sigma[1])\rangle
\label{phi_F2TR}
\end{equation}
where $p\in\reals^2$ and there exists an $f\in\FTwoTR$ such that $\sigma=\sigma_{\FTwoTR}(f)$. That is, $\phi_{\FTwoTR}(\sigma,p)$ is only defined when $\langle\sigma[0], \sigma[1]\rangle\in \reals^2$ and $\sigma[2]\in\lbrace1,-1\rbrace$. This satisfies the second constraint in Definition~\ref{transformation-class-definition}.

For $\FTwoTR$, we can set $\beta(\FTwoTR)=1$ and $\tau(\FTwoTR)=2$ and then define the basis-pair transformation set function as follows:
\begin{equation}
\BTSF_{\FTwoTR}(\langle p\rangle,\langle q\rangle)=\lbrace\langle q[0]-p[0],q[1]-p[1],1\rangle,\langle q[0]-p[0],-q[1]-p[1],-1\rangle\rbrace\,.
\label{BTSF-for-F2TR}
\end{equation}
It follows that $\FTwoTR$ is a transformation class.

As an example, let us compute the transformations in $\FTwoTR$ that map point $a$ onto point $b$ in Fig.~\ref{inter-point-transformations-example-1}. In this example, $a=\langle1,1\rangle$ and $b= \langle7,3\rangle$. Applying Eq.~\eqref{BTSF-for-F2TR}, we therefore obtain
\begin{eqnarray}
\footnotesize
\BTSF_{\FTwoTR}(\langle a\rangle,\langle b\rangle)&=&\lbrace\langle b[0]-a[0],b[1]-a[1],1\rangle,\langle b[0]-a[0],-b[1]-a[1],-1\rangle\rbrace\nonumber\\
&=&\lbrace\langle 7-1,3-1,1\rangle,\langle 7-1,-3-1,-1\rangle\rbrace\nonumber\\
&=&\lbrace\langle 6,2,1\rangle,\langle 6,-4,-1\rangle\rbrace\nonumber
\end{eqnarray}
This tells us that the transformations in $\FTwoTR$ that map point $a$ onto point $b$ in Fig.~\ref{inter-point-transformations-example-1} are those whose transformation parameter vectors are $\langle 6,2,1\rangle$ and $\langle 6,-4,-1\rangle$. If we set $\sigma=\langle 6,2,1\rangle$, then with Eq.~\ref{phi_F2TR} we can confirm that
\begin{eqnarray}
\phi_{\FTwoTR}(\sigma,a)&=&\langle a[0]+\sigma[0],\sigma[2](a[1]+\sigma[1])\rangle\nonumber\\
&=&\langle 1+6,1\times(1+2)\rangle\nonumber\\
&=&\langle 7,3\rangle=b\nonumber
\end{eqnarray}
We can then set $\sigma=\langle 6,-4,-1\rangle$ and confirm that
\begin{eqnarray}
\phi_{\FTwoTR}(\sigma,a)&=&\langle a[0]+\sigma[0],\sigma[2](a[1]+\sigma[1])\rangle\nonumber\\
&=&\langle 1+6,-1\times(1-4)\rangle\nonumber\\
&=&\langle 7,3\rangle=b\nonumber
\end{eqnarray}

\subsubsection{The transformation class $\FTwoST$}
\label{ftwost-section}
Let $\FTwoST$ be the set containing those 2-dimensional transformations that consist of a scaling parallel to the $x$-axis by a real-valued scale factor, followed by a translation. 
If we use one of the geometric, pitch--time representations introduced in \cite{MeredithArxivChapter12026}, then $\FTwoST$ includes those geometric transformations that correspond to combinations of the musical transformations of transposition, diminution, augmentation and retrograde. Each transformation, $f\in\FTwoST$, therefore takes a point, $p=\langle x_p, y_p\rangle$, and multiplies the $x$-coordinate of $p$ by a scale factor, $s$, to get a new point, $p'=\langle sx_p,y_p\rangle$. $p'$ is then translated by a 2-dimensional vector, $v= \langle x_v, y_v\rangle$, to give a new point, $q=p'+v= \langle sx_p+x_v,y_p+y_v\rangle$.

We can therefore define that $\lambda(\FTwoST)= 3$ and that each transformation, $f\in\FTwoST$, can be uniquely identified by a 3-vector, $\langle s, v_x, v_y\rangle\in\reals^3$, where $s$, $v_x$ and $v_y$ can take any real values. In other words, if $\sigma_{\FTwoST}(f)=\langle s, v_x, v_y\rangle$, then $f$ is the transformation that scales parallel to the $x$-axis by a scale factor of $s$, then translates by the vector, $\langle v_x, v_y\rangle$. This satisfies the first constraint in Definition~\ref{transformation-class-definition}. As an example, the points, $p_1$ and $p_2$, in Fig.~\ref{f2st-example-a} are mapped onto points, $q_1$ and $q_2$, respectively, by the transformation in $\FTwoST$ whose parameter vector is $\langle2,2,1\rangle$.

\begin{figure}[t]
	\begin{center}
		{\resizebox{5cm}{!}{
			\begin{tikzpicture}
			\filldraw [rounded corners=8pt, fill=yellow, draw opacity=0.3, cm={cos(-63.4) ,-sin(-63.4) ,sin(-63.4) ,cos(-63.4) ,(1.1,.63)}] (0,0) rectangle (2.85, .5);
			
\filldraw [thick, rounded corners=8pt, fill=yellow, draw opacity=0.3, cm={cos(-45) ,-sin(-45) ,sin(-45) ,cos(-45) ,(4,1.65)}] (0,0) rectangle (3.4, .5);
			
			\draw[thick,-{>[length=3mm,width=4mm]}] (-.1,0) -- (6.5,0);
			\draw[thick,-{>[length=3mm,width=4mm]}] (0,-.1) -- (0,4.5);		
			
			\draw (6.7,0) node{$x$};
			\draw (0,4.7) node{$y$};
			
			\foreach \x in {1,...,4} {
				\draw[very thin,gray] (\x,0) -- (\x,4);
				\draw[thick] (\x,-.1) -- (\x,0);
				\draw[very thin,gray] (0,\x) -- (6,\x);
				\draw[thick] (-.1,\x) -- (0,\x);
			}
			\draw[very thin, gray](5,0) -- (5,4);
			\draw[very thin, gray](6,0) -- (6,4);
			
			\foreach \x in {0,1,...,4} {
				\draw (-.2,\x) node[anchor=east]{\x};
			}
			\foreach \x in {0,1,...,6} {
				\draw (\x,-.2) node[anchor=north]{\x};
			}

			\filldraw  (1,1) circle[radius=.1]
				(2,3) circle[radius=.1]
				(4,2) circle[radius=.1]
				(6,4) circle[radius=.1];

			\draw (1.1,1.25) node[]{$p_1$};
			\draw (1.9,2.75) node[]{$p_2$};
			\draw (4,2) node[anchor=south west]{$q_1$};
			\draw (6,4) node[anchor=north east]{$q_2$};
			
			 \draw[-{Stealth[length=4mm,width=4mm]}] (1.5,2) -- (5,3) node[midway, above] {$\langle2,2,1\rangle$};

%
			\end{tikzpicture}
		}}
	\end{center}
	\caption{The only transformation in $\FTwoST$ that maps the ordered pair of points, $\langle p_1,p_2\rangle$, onto the ordered pair of points, $\langle q_1,q_2\rangle$, is the one with the parameter vector, $\langle2,2,1\rangle$. }
	\label{f2st-example-a}
\end{figure}

We can define the transformation class function for $\FTwoST$ as follows,
\begin{equation}
\phi_{\FTwoST}(\sigma,p)=\langle p[0]\sigma[0]+\sigma[1],p[1]+\sigma[2]\rangle\,,
\label{eqn-f2st}
\end{equation}
where $p\in\reals^2$ and there must exist a transformation, $f\in\FTwoST$, such that $\sigma=\sigma_{\FTwoST}(f)$. This satisfies the second constraint in Definition~\ref{transformation-class-definition}.

For $\FTwoST$, we can define that $\beta(\FTwoST)=2$ and that $\tau(\FTwoST)=1$. The basis-pair transformation set function, $\BTSF_{\FTwoST}$, therefore takes any two ordered pairs of distinct points, $\objectBasis$ and $\imageBasis$, and computes the transformation parameter vectors of the transformations in $\FTwoST$ that map $\objectBasis$ onto $\imageBasis$. In the case of $\FTwoST$, there is only ever at most one such transformation for any given basis pair. Note that, given any two points, $p,q\in\mathbb{R}^2$, there is an infinite number of transformations, $f\in\FTwoST$, that map $p$ onto $q$, since, for every possible value of  the $x$-axis scale factor, $\sigma_{\FTwoST}(f)[0]$, it is possible to find a vector, $\left\langle\sigma_{\FTwoST}(f)[1],\sigma_{\FTwoST}(f)[2]\right\rangle$, that results in $p$ mapping onto $q$. We therefore need to set $\beta(\FTwoST)$ to $2$, which results in  $\tau(\FTwoST)$ being $1$.

\begin{figure}[t]
	\begin{center}
		{\resizebox{5cm}{!}{
			\begin{tikzpicture}
			\filldraw [rounded corners=8pt, fill=yellow, draw opacity=0.3, cm={cos(-63.4) ,-sin(-63.4) ,sin(-63.4) ,cos(-63.4) ,(1.1,.63)}] (0,0) rectangle (2.85, .5);

\filldraw [rounded corners=8pt, fill=yellow, draw opacity=0.3, cm={cos(-56.3) ,-sin(-56.3) ,sin(-56.3) ,cos(-56.3) ,(4.05,.65)}] (0,0) rectangle (4.2, .5);
			
			\draw[thick,-{>[length=3mm,width=4mm]}] (-.1,0) -- (6.5,0);
			\draw[thick,-{>[length=3mm,width=4mm]}] (0,-.1) -- (0,4.5);		
			
			\draw (6.7,0) node{$x$};
			\draw (0,4.7) node{$y$};
			
			\foreach \x in {1,...,4} {
				\draw[very thin,gray] (\x,0) -- (\x,4);
				\draw[thick] (\x,-.1) -- (\x,0);
				\draw[very thin,gray] (0,\x) -- (6,\x);
				\draw[thick] (-.1,\x) -- (0,\x);
			}
			\draw[very thin, gray](5,0) -- (5,4);
			\draw[very thin, gray](6,0) -- (6,4);
			
			\foreach \x in {0,1,...,4} {
				\draw (-.2,\x) node[anchor=east]{\x};
			}
			\foreach \x in {0,1,...,6} {
				\draw (\x,-.2) node[anchor=north]{\x};
			}

			\filldraw  (1,1) circle[radius=.1]
				(2,3) circle[radius=.1]
				(4,1) circle[radius=.1]
				(6,4) circle[radius=.1];

			\draw (1.1,1.25) node[]{$p_1$};
			\draw (1.9,2.75) node[]{$p_2$};
			\draw (4.2,1.2) node[]{$q_1$};
			\draw (5.8,3.8) node[]{$q_2$};
			

%
			\end{tikzpicture}
		}}
	\end{center}
	\caption{There is no transformation in $\FTwoST$ that maps the ordered pair of points, $\langle p_1,p_2\rangle$, onto the ordered pair of points, $\langle q_1,q_2\rangle$ because $q_2[1]-q_1[1]\neq p_2[1]-p_1[1]$. }
	\label{f2st-example-b}
\end{figure}

 Let us suppose that $\objectBasis=\langle p_1, p_2 \rangle$ and that  $\imageBasis=\langle q_1, q_2 \rangle$. If $\sigma\in\BTSF_{\FTwoST}(\objectBasis,\imageBasis)$, then we know from \eqref{eqn-f2st} that 
\begin{eqnarray}
q_1[0]&=&p_1[0]\sigma[0]+\sigma[1]\,,\label{one-F2ST}\\
q_1[1]&=&p_1[1]+\sigma[2]\,,\label{two-F2ST}\\
q_2[0]&=&p_2[0]\sigma[0]+\sigma[1]\textrm{, and }\label{three-F2ST}\\
q_2[1]&=&p_2[1]+\sigma[2]\,.\label{four-F2ST}
\end{eqnarray}
We can immediately make the following inferences:
\begin{eqnarray}
\eqref{two-F2ST}&\Rightarrow& \sigma[2]=q_1[1]-p_1[1]\textrm{, and}\label{nine-F2ST}\\
\eqref{four-F2ST}&\Rightarrow& \sigma[2]=q_2[1] - p_2[1]\,.\label{ten-F2ST}
\end{eqnarray}
Equations \eqref{nine-F2ST} and \eqref{ten-F2ST} together imply that, in order for there to be a transformation in $\FTwoST$ that maps $\objectBasis$ onto $\imageBasis$, it must be true that 
\begin{equation}
q_1[1]-p_1[1]=q_2[1] - p_2[1]\,.
\label{y-difference-equal-F2ST}
\end{equation}
Thus, for example, there is no transformation in $\FTwoST$ that maps $\langle p_1,p_2\rangle$ onto $\langle q_1,q_2\rangle$ in Fig.~\ref{f2st-example-b} because the condition in \eqref{y-difference-equal-F2ST} does not hold in this case.

Equations \eqref{one-F2ST} and \eqref{three-F2ST} together imply that
\begin{eqnarray}
q_2[0]&=&p_2[0]\sigma[0]+q_1[0]-p_1[0]\sigma[0]\nonumber\\
&=&\sigma[0](p_2[0]-p_1[0])+q_1[0]\nonumber\\
\Rightarrow \sigma[0]&=&\frac{q_2[0]-q_1[0]}{p_2[0]-p_1[0]}\,.\label{eleven-F2ST}
\end{eqnarray}
Equation \eqref{eleven-F2ST} implies that $\sigma[0]= 0$ when $q_2[0]=q_1[0]$, unless $p_2[0]=p_1[0]$, in which case the right-hand side of \eqref{eleven-F2ST} evaluates to $\frac00$, which is undefined. However, we define that, in this case, $\sigma[0]$ should be set to 1, as this allows us to intuitively discover that, if $q_2[0]=q_1[0]$ and $p_2[0]=p_1[0]$ and condition \eqref{y-difference-equal-F2ST} holds, then $\objectBasis$ can be mapped onto $\imageBasis$ by a simple translation (i.e., no scaling required). Clearly, if $p_2[0]=p_1[0]$ and $q_2[0]\neq q_1[0]$, then $\sigma[0]$ is undefined and there is no transformation in $\FTwoST$ that maps $\objectBasis$ onto $\imageBasis$. 

\begin{figure}[t]
	\begin{center}
		{\resizebox{5cm}{!}{
			\begin{tikzpicture}
			\filldraw [rounded corners=8pt, fill=yellow, draw opacity=0.3, cm={cos(-63.4) ,-sin(-63.4) ,sin(-63.4) ,cos(-63.4) ,(1.1,.63)}] (0,0) rectangle (2.85, .5);
			
\filldraw [rounded corners=8pt, fill=yellow, draw opacity=0.3] (4.75,1.75) rectangle (5.25,4.25);
\filldraw [rounded corners=8pt, fill=yellow!80!white, draw opacity=0.3] (-.25,.75) rectangle (.25,3.25);
			
			\draw[thick,-{>[length=3mm,width=4mm]}] (-.1,0) -- (6.5,0);
			\draw[thick,-{>[length=3mm,width=4mm]}] (0,-.1) -- (0,4.5);		
			
			\draw (6.7,0) node{$x$};
			\draw (0,4.7) node{$y$};
			
			\foreach \x in {1,...,4} {
				\draw[very thin,gray] (\x,0) -- (\x,4);
				\draw[thick] (\x,-.1) -- (\x,0);
				\draw[very thin,gray] (0,\x) -- (6,\x);
				\draw[thick] (-.1,\x) -- (0,\x);
			}
			\draw[very thin, gray](5,0) -- (5,4);
			\draw[very thin, gray](6,0) -- (6,4);
			
			\foreach \x in {0,1,...,4} {
				\draw (-.2,\x) node[anchor=east]{\x};
			}
			\foreach \x in {0,1,...,6} {
				\draw (\x,-.2) node[anchor=north]{\x};
			}

			\filldraw  (1,1) circle[radius=.1]
				(2,3) circle[radius=.1]
				(5,2) circle[radius=.1]
				(5,4) circle[radius=.1];
			\filldraw[gray] (0,1) circle[radius=.1]
				(0,3) circle[radius=.1];

			\draw (1.1,1.25) node[]{$p_1$};
			\draw (1.9,2.75) node[]{$p_2$};
			\draw[gray] (0.25,1.25) node[]{$p_1'$};
			\draw[gray] (0.25,2.75) node[]{$p_2'$};
			\draw (5,2.25) node[]{$q_1$};
			\draw (5,3.75) node[]{$q_2$};
			
			 \draw[-{Stealth[length=4mm,width=4mm]}] (1.5,2) -- (5,3) node[midway, above] {$\langle0,5,1\rangle$};

%
			\end{tikzpicture}
		}}
	\end{center}
	\caption{$\langle p_1,p_2\rangle$ can be mapped onto $\langle q_1,q_2\rangle$ by a transformation in $\FTwoST$ by first scaling parallel to the $x$-axis with a scale factor of zero, to give $\langle p_1',p_2'\rangle$, and then translating $\langle p_1',p_2'\rangle$ by $\langle5,1\rangle$. This transformation has the parameter vector, $\langle0,5,1\rangle$.}
	\label{f2st-example-c}
\end{figure}

This is illustrated by the examples in Figs.~\ref{f2st-example-c} and \ref{f2st-example-d}. In Fig.~\ref{f2st-example-c}, $\langle p_1,p_2\rangle$ is mapped onto $\langle q_1,q_2\rangle$ by the transformation in $\FTwoST$ whose parameter vector is $\langle0,5,1\rangle$---that is, an $x$-axis scaling with a scale factor of zero, which maps $\langle p_1,p_2\rangle$ to $\langle p_1',p_2'\rangle$, followed by a translation of $\langle5,1\rangle$. In Fig.~\ref{f2st-example-d}, $\sigma[0]=\frac00$, so we define that $\sigma[0]=1$, implying that $\langle p_1,p_2\rangle$ is mapped onto $\langle q_1,q_2\rangle$ in this case by the transformation whose parameter vector is $\langle1,4,1\rangle$.

\begin{figure}[t]
	\begin{center}
		{\resizebox{5cm}{!}{
			\begin{tikzpicture}
			
\filldraw [rounded corners=8pt, fill=yellow, draw opacity=0.3] (4.75,1.75) rectangle (5.25,4.25);
\filldraw [rounded corners=8pt, fill=yellow, draw opacity=0.3] (.75,.75) rectangle (1.25,3.25);
			
			\draw[thick,-{>[length=3mm,width=4mm]}] (-.1,0) -- (6.5,0);
			\draw[thick,-{>[length=3mm,width=4mm]}] (0,-.1) -- (0,4.5);		
			
			\draw (6.7,0) node{$x$};
			\draw (0,4.7) node{$y$};
			
			\foreach \x in {1,...,4} {
				\draw[very thin,gray] (\x,0) -- (\x,4);
				\draw[thick] (\x,-.1) -- (\x,0);
				\draw[very thin,gray] (0,\x) -- (6,\x);
				\draw[thick] (-.1,\x) -- (0,\x);
			}
			\draw[very thin, gray](5,0) -- (5,4);
			\draw[very thin, gray](6,0) -- (6,4);
			
			\foreach \x in {0,1,...,4} {
				\draw (-.2,\x) node[anchor=east]{\x};
			}
			\foreach \x in {0,1,...,6} {
				\draw (\x,-.2) node[anchor=north]{\x};
			}

			\filldraw  (1,1) circle[radius=.1]
				(1,3) circle[radius=.1]
				(5,2) circle[radius=.1]
				(5,4) circle[radius=.1];

			\draw (1,1.25) node[]{$p_1$};
			\draw (1,2.75) node[]{$p_2$};
			\draw (5,2.25) node[]{$q_1$};
			\draw (5,3.75) node[]{$q_2$};
			
			 \draw[-{Stealth[length=4mm,width=4mm]}] (1,2) -- (5,3) node[midway, above] {$\langle1,4,1\rangle$};

%
			\end{tikzpicture}
		}}
	\end{center}
	\caption{In this example, the value of $\sigma[0]$, computed using \eqref{eleven-F2ST}, is $\frac00$, so it is set to 1. $\langle p_1,p_2\rangle$ can therefore be mapped onto $\langle q_1,q_2\rangle$ by the transformation in $\FTwoST$ whose parameter vector is $\langle1,4,1\rangle$.}
	\label{f2st-example-d}
\end{figure}

From \eqref{three-F2ST}, it follows that 
\begin{equation}
\sigma[1]=q_2[0]-p_2[0]\sigma[0]\,,
\label{six-F2ST}
\end{equation}
and \eqref{one-F2ST} implies that 
\begin{equation}
\sigma[1]=q_1[0]-p_1[0]\sigma[0]\,.
\label{sixb-F2ST}
\end{equation}
From \eqref{eleven-F2ST} and \eqref{six-F2ST}, we can infer that
\begin{eqnarray}
q_2[0]-p_2[0]\sigma[0]&=&q_2[0]-p_2[0]\left(\frac{q_2[0]-q_1[0]}{p_2[0]-p_1[0]}\right)\nonumber\\
&=&\frac{q_2[0]p_2[0]-q_2[0]p_1[0]-p_2[0]q_2[0]+p_2[0]q_1[0]}{p_2[0]-p_1[0]}\nonumber\\
&=&\frac{p_2[0]q_1[0]-q_2[0]p_1[0]}{p_2[0]-p_1[0]}\,.
\label{sigma1-rhs-are-equal-a-F2ST}
\end{eqnarray}
Similarly, from \eqref{eleven-F2ST} and \eqref{sixb-F2ST}, we can infer that
\begin{eqnarray}
q_1[0]-p_1[0]\sigma[0]&=&q_1[0]-p_1[0]\left(\frac{q_2[0]-q_1[0]}{p_2[0]-p_1[0]}\right)\nonumber\\
&=&\frac{p_2[0]q_1[0]-p_1[0]q_1[0]-p_1[0]q_2[0]+p_1[0]q_1[0]}{p_2[0]-p_1[0]}\nonumber\\
&=&\frac{p_2[0]q_1[0]-p_1[0]q_2[0]}{p_2[0]-p_1[0]}\,.
\label{sigma1-rhs-are-equal-b-F2ST}
\end{eqnarray}
Equations \eqref{sigma1-rhs-are-equal-a-F2ST} and \eqref{sigma1-rhs-are-equal-b-F2ST} together imply that the right-hand sides of \eqref{six-F2ST} and \eqref{sixb-F2ST} are always equal. It follows that, when $\sigma[0]$ is undefined (i.e., when $p_2[0]=p_1[0]$ and $q_2[0]\neq q_1[0]$), $\sigma[1]$ is also undefined. 

The basis-pair transformation set function for $\FTwoST$ can therefore be defined as follows:
\begin{equation}
\BTSF_{\FTwoST}\left(\langle p_1,p_2\rangle,\langle q_1,q_2\rangle\right)=\left\lbrace\left\langle\sigma[0],q_2[0]-p_2[0]\sigma[0],q_1[1]-p_1[1]\right\rangle\right\rbrace\,,
\end{equation}
where
$$q_1[1]-p_1[1]=q_2[1] - p_2[1]$$
and where
\begin{equation}
\sigma[0]=\begin{cases}
1,&\text{ if $p_2[0]=p_1[0]$ and $q_2[0]= q_1[0]$; and}\\
\frac{q_2[0]-q_1[0]}{p_2[0]-p_1[0]},&\text{otherwise, provided that }p_2[0]\neq p_1[0]\,.
\end{cases}
\end{equation}
$\FTwoST$ is therefore a transformation class. The basis-pair transformation set function, $\BTSF_{\FTwoST}$, can be computed using the algorithm shown in Fig.~\ref{FTwoST-basis-transformation-set-function}.

\begin{figure}[t]
\begin{center}
\resizebox*{.7\columnwidth}{!}{
\begin{minipage}{\linewidth}
\begin{tabbing}
\input{ColumnDefinitions}
$\fun{$\BTSF_{\FTwoST}$}(\objectBasis,\imageBasis)$\\
1\>$\dmif \imageBasis = \objectBasis$\\
2\>\>$\dmreturn\lbrace\langle1,0,0\rangle\rbrace\dmcomment{Identity transformation}$\\
3\>$p_1\becomes\objectBasis[0],p_2\becomes\objectBasis[1],q_1\becomes\imageBasis[0],q_2\becomes\imageBasis[1]$\\
4\>$\dmif q_2[0]\neq q_1[0]\land p_2[0]=p_1[0]\dmcomment{$\sigma[0]$ is undefined.}$\\
5\>\>$\dmreturn \lbrace\rbrace$\\
6\>$\dmif p_2[0]=p_1[0] \land q_2[0]= q_1[0]\dmcomment{$\sigma[0]=0/0$}$\\
7\>\>$\sigma_0\becomes1$\\
8\>$\dmelse$\\
9\>\>$\sigma_0\becomes(q_2[0]-q_1[0])/(p_2[0]-p_1[0])\dmcomment{From \eqref{eleven-F2ST}}$\\
10\>$\sigma_1\becomes q_2[0]-p_2[0]\sigma_0\dmcomment{From \eqref{six-F2ST}}$\\
11\>$\dmif q_1[1]-p_1[1]=q_2[1] - p_2[1]\dmcomment{From \eqref{y-difference-equal-F2ST}}$\\
12\>\>$\sigma_2\becomes q_1[1]-p_1[1]\dmcomment{From \eqref{nine-F2ST}}$\\
13\>\>$\dmreturn\lbrace\langle\sigma_0,\sigma_1,\sigma_2\rangle\rbrace$\\
14\>$\dmreturn\lbrace\rbrace$
\end{tabbing}
\end{minipage}
}
\end{center}
\caption{Algorithm for computing the basis-pair transformation set function for $\FTwoST$.}
\label{FTwoST-basis-transformation-set-function}
\end{figure}

\subsubsection{The transformation class $\FTwoSTR$}
\label{example-FTwoSTR}
Let $\FTwoSTR$ be the set of 2-dimensional transformations that consist of a scaling parallel to the $x$-axis by a real-valued scale factor, followed by a translation, optionally followed by a reflection in the $x$-axis. If we represent a piece of music as a set of points in pitch--time space, where the $y$-axis is a number representing pitch and the $x$-axis represents time, then this set of transformations corresponds to the class of traditional contrapuntal transformations in music, where a pattern can be transformed by any combination of transposition (corresponding to translating in the $y$-direction), time-shifting (translation in the $x$-direction), augmentation (scaling parallel to the $x$-axis by a scale factor whose absolute value is greater than 1), diminution (scaling parallel to the $x$-axis by a non-zero scale factor whose absolute value is less than 1), inversion (reflection in the $x$-axis) and retrograde (scaling parallel to the $x$-axis by a negative scale factor). 

Each transformation, $f\in\FTwoSTR$, therefore takes a point, $p=\langle x_p, y_p\rangle$, and multiplies the $x$-coordinate of $p$ by a scale factor, $s$, to get a new point, $p'=\langle sx_p,y_p\rangle$. $p'$ is then translated by a 2-dimensional vector, $v= \langle x_v, y_v\rangle$, to give a new point, $p''=p'+v= \langle sx_p+x_v,y_p+y_v\rangle$. Finally, $p''$ may be reflected in the $x$-axis to give a point, $q=\langle sx_p+x_v,b(y_p+y_v)\rangle$, where $b=-1$ if $p''$ is reflected and $b=1$ if it is not.

We can therefore define that $\lambda(\FTwoSTR)= 4$ and that each transformation $f\in\FTwoSTR$, can be uniquely identified by a 4-vector, $\langle s, v_x, v_y, b\rangle\in\reals^4$, where $s$, $v_x$ and $v_y$ can take any real values and $b\in\lbrace-1,1\rbrace$. In other words, if $\sigma_{\FTwoSTR}(f)=\langle s, v_x, v_y, b\rangle$, then $f$ is the transformation that scales parallel to the $x$-axis by a scale factor of $s$, then translates by the vector, $\langle v_x, v_y\rangle$ and then multiplies the $y$-coordinate of the resulting point by $b$ (which is either $-1$ or $1$). This satisfies the first constraint in Definition~\ref{transformation-class-definition}.

We can define the transformation class function for $\FTwoSTR$ as follows,
\begin{equation}
\phi_{\FTwoSTR}(\sigma,p)=\langle p[0]\sigma[0]+\sigma[1],\sigma[3](p[1]+\sigma[2])\rangle\,,
\label{eqn-f2str}
\end{equation}
where $p\in\reals^2$ and there must exist a transformation, $f\in\FTwoSTR$, such that $\sigma=\sigma_{\FTwoSTR}(f)$. This satisfies the second constraint in Definition~\ref{transformation-class-definition}.

For $\FTwoSTR$, we can define that $\beta(\FTwoSTR)=2$ and that $\tau(\FTwoSTR)=2$. The basis-pair transformation set function, $\BTSF_{\FTwoSTR}$, thus takes any two ordered pairs of distinct points, $\objectBasis$ and $\imageBasis$, and computes the transformation parameter vectors of the transformations in $\FTwoSTR$ that map $\objectBasis$ onto $\imageBasis$. Let us suppose that $\objectBasis=\langle p_1, p_2 \rangle$ and that  $\imageBasis=\langle q_1, q_2 \rangle$. If $\sigma\in\BTSF_{\FTwoSTR}(\objectBasis,\imageBasis)$, then we know from \eqref{eqn-f2str} that 
\begin{eqnarray}
q_1[0]&=&p_1[0]\sigma[0]+\sigma[1]\,,\label{one}\\
q_1[1]&=&\sigma[3](p_1[1]+\sigma[2])\,,\label{two}\\
q_2[0]&=&p_2[0]\sigma[0]+\sigma[1]\textrm{, and }\label{three}\\
q_2[1]&=&\sigma[3](p_2[1]+\sigma[2])\,.\label{four}
\end{eqnarray}
We know that $\sigma[3]$ is equal to either $1$ or $-1$. Let us first consider the case where $\sigma[3]=1$. In this case, 
\begin{eqnarray}
\eqref{two}&\Rightarrow& \sigma[2]=q_1[1]-p_1[1]\textrm{, and}\label{nine}\\
\eqref{four}&\Rightarrow& \sigma[2]=q_2[1] - p_2[1]\,.\label{ten}
\end{eqnarray}
Equations \eqref{nine} and \eqref{ten} together imply that, in order for there to be a transformation in $\FTwoSTR$ that maps $\objectBasis$ onto $\imageBasis$ when $\sigma[3]=1$, it must be true that 
\begin{equation}
q_1[1]-p_1[1]=q_2[1] - p_2[1]\,.
\label{y-difference-equal}
\end{equation}
This implies, for example, that, when $\sigma[3]=1$, there is no transformation in $\FTwoSTR$ that maps $\langle p_1,p_2\rangle$ onto $\langle q_1,q_2\rangle$ in Fig.~\ref{f2st-example-b}, because condition \eqref{y-difference-equal} does not hold in this case.

Equations \eqref{one} and \eqref{three} together imply that
\begin{eqnarray}
q_2[0]&=&p_2[0]\sigma[0]+q_1[0]-p_1[0]\sigma[0]\nonumber\\
&=&\sigma[0](p_2[0]-p_1[0])+q_1[0]\nonumber\\
\Rightarrow \sigma[0]&=&\frac{q_2[0]-q_1[0]}{p_2[0]-p_1[0]}\,.\label{eleven}
\end{eqnarray}
Equation \eqref{eleven} implies that $\sigma[0]= 0$ when $q_2[0]=q_1[0]$, unless $p_2[0]=p_1[0]$, in which case the right-hand side of \eqref{eleven} evaluates to $\frac00$, which is undefined. However, as with $\FTwoST$, we define that, in this case, $\sigma[0]$ should be set to 1, as this allows us to intuitively discover that, if $q_2[0]=q_1[0]$ and $p_2[0]=p_1[0]$ and condition \eqref{y-difference-equal} holds, then $\objectBasis$ can be mapped onto $\imageBasis$ by a simple translation (i.e., neither scaling nor reflection required). Clearly, if $p_2[0]=p_1[0]$ and $q_2[0]\neq q_1[0]$, then $\sigma[0]$ is undefined and there is consequently no transformation in $\FTwoSTR$ that maps $\objectBasis$ onto $\imageBasis$.

From \eqref{three}, it follows that 
\begin{equation}
\sigma[1]=q_2[0]-p_2[0]\sigma[0]\,,
\label{six}
\end{equation}
and \eqref{one} implies that 
\begin{equation}
\sigma[1]=q_1[0]-p_1[0]\sigma[0]\,.
\label{sixb}
\end{equation}
From \eqref{eleven} and \eqref{six}, we can infer that
\begin{eqnarray}
q_2[0]-p_2[0]\sigma[0]&=&q_2[0]-p_2[0]\left(\frac{q_2[0]-q_1[0]}{p_2[0]-p_1[0]}\right)\nonumber\\
&=&\frac{q_2[0]p_2[0]-q_2[0]p_1[0]-p_2[0]q_2[0]+p_2[0]q_1[0]}{p_2[0]-p_1[0]}\nonumber\\
&=&\frac{p_2[0]q_1[0]-q_2[0]p_1[0]}{p_2[0]-p_1[0]}\,.
\label{sigma1-rhs-are-equal-a}
\end{eqnarray}
Similarly, from \eqref{eleven} and \eqref{sixb}, we can infer that
\begin{eqnarray}
q_1[0]-p_1[0]\sigma[0]&=&q_1[0]-p_1[0]\left(\frac{q_2[0]-q_1[0]}{p_2[0]-p_1[0]}\right)\nonumber\\
&=&\frac{q_1[0]p_2[0]-q_1[0]p_1[0]-p_1[0]q_2[0]+p_1[0]q_1[0]}{p_2[0]-p_1[0]}\nonumber\\
&=&\frac{p_2[0]q_1[0]-p_1[0]q_2[0]}{p_2[0]-p_1[0]}\,.
\label{sigma1-rhs-are-equal-b}
\end{eqnarray}
Equations \eqref{sigma1-rhs-are-equal-a} and \eqref{sigma1-rhs-are-equal-b} together imply that the right-hand sides of \eqref{six} and \eqref{sixb} are always equal. It follows that, when $\sigma[0]$ is undefined (i.e., when $p_2[0]=p_1[0]$ and $q_2[0]\neq q_1[0]$), $\sigma[1]$ is also undefined. 

To summarise the result so far, if $\sigma[3]=1$, then
\begin{equation}
\sigma=\left\langle\sigma[0],q_2[0]-p_2[0]\sigma[0],q_1[1]-p_1[1],1\right\rangle\,,
\end{equation}
where
$$q_1[1]-p_1[1]=q_2[1] - p_2[1]\,,$$
and where
$$
\sigma[0]=\begin{cases}
1,&\text{ if $p_2[0]=p_1[0]$ and $q_2[0]= q_1[0]$; and}\\
\frac{q_2[0]-q_1[0]}{p_2[0]-p_1[0]},&\text{otherwise, provided that }p_2[0]\neq p_1[0]\,.
\end{cases}
$$

Let us now consider the case where $\sigma[3]=-1$. In this case, $\sigma[0]$ and $\sigma[1]$ are still given by \eqref{eleven} and \eqref{six}, respectively, since they do not depend on $\sigma[3]$. From \eqref{two} and \eqref{four} we conclude that 
\begin{eqnarray}
\eqref{two}&\Rightarrow& \sigma[2]=-p_1[1]-q_1[1]\textrm{, and}\label{seven}\\
\eqref{four}&\Rightarrow& \sigma[2]=-p_2[1]-q_2[1]\,.\label{eight}
\end{eqnarray}
Therefore, in order for there to be a transformation that maps $\objectBasis$ onto $\imageBasis$ when $\sigma[3]=-1$, it must be true that 
\begin{equation}
p_1[1]+q_1[1]=p_2[1] + q_2[1]\,.
\label{y-difference-equal-sigma-3-equals-1}
\end{equation}

\begin{figure}[t]
\begin{center}
\resizebox*{.7\columnwidth}{!}{
\begin{minipage}{\linewidth}
\begin{tabbing}
\input{ColumnDefinitions}
$\fun{$\BTSF_{\FTwoSTR}$}(\objectBasis,\imageBasis)$\\
1\>$\dmif \imageBasis = \objectBasis$\\
2\>\>$\dmreturn\lbrace\langle1,0,0,1\rangle\rbrace\dmcomment{Identity transformation}$\\
3\>$p_1\becomes\objectBasis[0],p_2\becomes\objectBasis[1],q_1\becomes\imageBasis[0],q_2\becomes\imageBasis[1]$\\
4\>$\dmif q_2[0]\neq q_1[0]\land p_2[0]=p_1[0]\dmcomment{$\sigma[0]$ is undefined.}$\\
5\>\>$\dmreturn \lbrace\rbrace$\\
6\>$\dmif p_2[0]=p_1[0] \land q_2[0]= q_1[0]\dmcomment{$\sigma[0]=0/0$}$\\
7\>\>$\sigma_0\becomes1$\\
8\>$\dmelse$\\
9\>\>$\sigma_0\becomes(q_2[0]-q_1[0])/(p_2[0]-p_1[0])\dmcomment{From \eqref{eleven}}$\\
10\>$S\becomes\lbrace\rbrace$\\
11\>$\sigma_1\becomes q_2[0]-p_2[0]\sigma_0\dmcomment{From \eqref{six}}$\\
12\>$\dmif q_1[1]-p_1[1]=q_2[1] - p_2[1]\dmcomment{From \eqref{y-difference-equal}}$\\
13\>\>$\sigma_3\becomes1$\\
14\>\>$\sigma_2\becomes q_1[1]-p_1[1]\dmcomment{From \eqref{nine}}$\\
15\>\>$S\becomes \lbrace\langle\sigma_0,\sigma_1,\sigma_2,\sigma_3\rangle\rbrace$\\
16\>$\dmif p_1[1]+q_1[1]=p_2[1] + q_2[1]\dmcomment{From \eqref{y-difference-equal-sigma-3-equals-1}}$\\
17\>\>$\sigma_3\becomes-1$\\
18\>\>$\sigma_2\becomes-p_1[1]-q_1[1]\dmcomment{From \eqref{seven}}$\\
19\>\>$S\becomes S\cup\lbrace\langle\sigma_0,\sigma_1,\sigma_2,\sigma_3\rangle\rbrace$\\
20\>$\dmreturn S$
\end{tabbing}
\end{minipage}
}
\end{center}
\caption{Algorithm for computing the basis-pair transformation set function for $\FTwoSTR$.}
\label{FTwoSTR-basis-transformation-set-function}
\end{figure}

To summarise, we can define the basis-pair transformation set function, $\BTSF_{\FTwoSTR}$, as follows:
\begin{multline}
\BTSF_{\FTwoSTR}\left(\langle p_1,p_2\rangle,\langle q_1,q_2\rangle\right)
=\\
\left\lbrace\left\langle\sigma[0],q_2[0]-p_2[0]\sigma[0],q_1[1]-p_1[1],1\right\rangle\right.,\\
\left.\left\langle\sigma[0],q_2[0]-p_2[0]\sigma[0],-p_1[1]-q_1[1],-1\right\rangle\right\rbrace\,,
\label{equation-BTSFFTwoSTR}
\end{multline}
where 
$$q_1[1]-p_1[1]=q_2[1] - p_2[1] \text{ when } \sigma[3]=1\,,$$
and 
$$p_1[1]+q_1[1]=p_2[1] + q_2[1]\text{ when }\sigma[3]=-1\,,$$
and where 
\begin{equation}
\sigma[0]=\begin{cases}
1,&\text{ if $p_2[0]=p_1[0]$ and $q_2[0]= q_1[0]$; and}\\
\frac{q_2[0]-q_1[0]}{p_2[0]-p_1[0]},&\text{otherwise, provided that }p_2[0]\neq p_1[0]\,.
\end{cases}
\end{equation}
Note that, for a given pair of bases, $\langle p_1,p_2\rangle$ and $\langle q_1, q_2\rangle$, \eqref{y-difference-equal} and \eqref{y-difference-equal-sigma-3-equals-1} are both true only when $p_1[1]=p_2[1]$. This implies that $|\BTSF_{\FTwoSTR}(\langle p_1,p_2\rangle,\langle q_1, q_2\rangle)|=\tau(\FTwoSTR)=2$ when $p_1[1]=p_2[1]$ and the other conditions above are satisfied. Figure~\ref{f2str-example} shows an example of a basis pair for which the basis-pair transformation set function returns a set containing two transformations.

$\FTwoSTR$, as defined here, is therefore a transformation class. The basis-pair transformation set function, $\BTSF_{\FTwoSTR}$, can be computed using the algorithm shown in Fig.~\ref{FTwoSTR-basis-transformation-set-function}.

\begin{figure}[t]
	\begin{center}
		{\resizebox{5cm}{!}{
			\begin{tikzpicture}
\filldraw [rounded corners=8pt, fill=yellow, draw opacity=0.3] (.75,5.75) rectangle (3.25,6.25);
\filldraw [rounded corners=8pt, fill=yellow, draw opacity=0.3] (2.75,6.75) rectangle (7.25,7.25);
\filldraw [rounded corners=8pt, fill=yellow!50!white, fill opacity=.3, draw opacity=0.3] (2.75,.75) rectangle (7.25,1.25);
\filldraw [rounded corners=8pt, fill=yellow!50!white, fill opacity=.3, draw opacity=0.3] (1.75,5.75) rectangle (6.25,6.25);
%
			\foreach \x in {1,...,7} {
				\draw[very thin,gray] (\x,0) -- (\x,8);
				\draw[thick] (\x,3.9) -- (\x,4);
			}
%
			\foreach \y in {0,...,8} {
				\draw[very thin,gray] (0,\y) -- (7,\y);
				\draw[thick] (-.1,\y) -- (0,\y);
			}
%
			\foreach \y in {-4,-3,...,4} {
				\draw (-.3,\y+4) node[]{\y};
			}
			\foreach \x in {0,1,...,7} {
				\draw (\x,3.7) node[]{\x};
			}
%
			\draw[thick,-{>[length=3mm,width=4mm]}] (-.1,4) -- (7.5,4);
			\draw[thick,-{>[length=3mm,width=4mm]}] (0,4) -- (0,8.5);		
			\draw[thick,-{>[length=3mm,width=4mm]}] (0,4) -- (0,-.5);		
			\draw (7.7,4) node{$x$};
			\draw (0,8.7) node{$y$};
			\filldraw  (1,6) circle[radius=.1]
				(3,6) circle[radius=.1]
				(3,7) circle[radius=.1]
				(7,7) circle[radius=.1];
			\filldraw[gray] (3,1) circle[radius=.1]
				(7,1) circle[radius=.1]
				(2,6) circle[radius=.1]
				(6,6) circle[radius=.1];

			\draw (1.3,6) node[]{$p_1$};
			\draw (2.7,6) node[]{$p_2$};
			\draw[gray] (3.3,1) node[]{$p_1''$};
			\draw[gray] (6.7,1) node[]{$p_2''$};
			\draw[gray] (2,5.7) node[]{$p_1'$};
			\draw[gray] (6,5.7) node[]{$p_2'$};
			\draw (3.3,7) node[]{$q_1$};
			\draw (6.7,7) node[]{$q_2$};
			
			 \draw[-{Stealth[length=4mm,width=4mm]}, thick] (1,6.3) .. controls (1.5,7.2) .. (2.7,7);
			 \draw (1,7.6) node{$\lbrace\langle2,1,1,1\rangle,$};
			 \draw (1.15,7.3) node{$\quad\langle2,1,-5,-1\rangle\rbrace$};
			\end{tikzpicture}
}}
\end{center}
\caption{There are two transformations in $\FTwoSTR$ that map $\langle p_1,p_2\rangle$ onto $\langle q_1,q_2\rangle$---those whose parameter vectors are $\langle2,1,1,1\rangle$ and $\langle2,1,-5,-1\rangle$. The transformation with parameter vector $\langle2,1,1,1\rangle$ first $x$-scales $\langle p_1,p_2\rangle$ by a scale factor of 2 to give $\langle p_1',p_2'\rangle$; it then translates $\langle p_1',p_2'\rangle$ by $\langle1,1\rangle$ to give $\langle q_1,q_2\rangle$. The transformation with parameter vector $\langle2,1,-5,-1\rangle$ again first $x$-scales $\langle p_1,p_2\rangle$ by a scale factor of 2 to give $\langle p_1',p_2'\rangle$, but then translates $\langle p_1',p_2'\rangle$ by $\langle1,-5\rangle$ to give $\langle p_1'',p_2''\rangle$, which is then reflected in the $x$-axis to give $\langle q_1,q_2\rangle$.}
	\label{f2str-example}
\end{figure}

\subsubsection{The set of transformation classes $\FTwoSTRMod{\mu}$}
\label{example-FTwoSTRmodMu}
For some specified value of $\mu$, $\FTwoSTRMod{\mu}$ is the set of 2-dimensional transformations, such that each transformation, $f\in\FTwoSTRMod{\mu}$, takes a point, $p=\langle x, y\rangle$, multiplies $x$ by a real-valued scale factor, $s$, to get a new point, $p'=\langle sx,y\rangle$, then translates $p'$ by the vector, $v= \langle v_x, v_y\rangle$, with the $y$-component of $p'$ being added modulo $\mu$ to the $y$-component of $v$, to give a new point, $p''= \langle sx+v_x,(y+v_y)\bmod\mu\rangle$.
Finally, $p''$ may be reflected in the $x$-axis, modulo $\mu$, to give a point, $q=\langle sx+v_x,(b(y+v_y))\bmod\mu\rangle$, where $b=-1$ if $p''$ is reflected and $b=1$ if it is not.
Note that we define that $a\bmod b= a - b\left\lfloor\frac a b\right\rfloor$, where $a,b\in\reals$.

We can therefore define that $\lambda(\FTwoSTRMod{\mu})= 4$ and that each transformation $f\in\FTwoSTRMod{\mu}$, can be uniquely identified by a 4-vector, $\langle s, v_x, v_y, b\rangle\in\reals^2\times\reals_{\mu}\times\left\lbrace-1,1\right\rbrace$, where we define that $\reals_{\mu}=\left\lbrace x: x\in\reals\land0\le x<\mu\right\rbrace$. If $\sigma_{\FTwoSTRMod{\mu}}(f)=\langle s, v_x, v_y, b\rangle$ and $p=\langle x,y\rangle$, then $f(p)=\langle sx+v_x,(b(y+v_y))\bmod\mu\rangle$. This satisfies the first constraint in Definition~\ref{transformation-class-definition}.

We can define the transformation class function for $\FTwoSTRMod{\mu}$ as follows:
\begin{equation}
\phi_{\FTwoSTRMod{\mu}}(\sigma,p)=\langle p[0]\sigma[0]+\sigma[1],(\sigma[3](p[1]+\sigma[2]))\bmod\mu\rangle\,,
\label{eqn-f2str-mod}
\end{equation}
where $p\in\reals\times\reals_{\mu}$ and there must exist a transformation, $f\in\FTwoSTRMod{\mu}$, such that $\sigma=\sigma_{\FTwoSTRMod{\mu}}(f)$. This satisfies the second constraint in Definition~\ref{transformation-class-definition}.

For $\FTwoSTRMod{\mu}$, we can define that $\beta(\FTwoSTRMod{\mu})=2$ and that $\tau(\FTwoSTRMod{\mu})=2$. The basis-pair transformation set function, $\BTSF_{\FTwoSTRMod{\mu}}$, takes any two ordered pairs of distinct points, $\objectBasis$ and $\imageBasis$, and computes the transformation parameter vectors of the transformations in $\FTwoSTRMod{\mu}$ that map $\objectBasis$ onto $\imageBasis$. Let us suppose that $\objectBasis=\langle p_1, p_2 \rangle$ and that  $\imageBasis=\langle q_1, q_2 \rangle$. If $\sigma\in\BTSF_{\FTwoSTRMod{\mu}}(\objectBasis,\imageBasis)$, then we know from \eqref{eqn-f2str-mod} that 
\begin{eqnarray}
q_1[0]&=&p_1[0]\sigma[0]+\sigma[1]\,,\label{f2strmod-eq1}\\
q_2[0]&=&p_2[0]\sigma[0]+\sigma[1]\,,\label{f2strmod-eq2}\\
q_1[1]&=&(\sigma[3](p_1[1]+\sigma[2]))\bmod\mu\textrm{, and }\label{f2strmod-eq3}\\
q_2[1]&=&(\sigma[3](p_2[1]+\sigma[2]))\bmod\mu\,.\label{f2strmod-eq4}
\end{eqnarray}
We know that $\sigma[3]$ is equal to either $1$ or $-1$. Let us first consider the case where $\sigma[3]=1$. In this case, 
\begin{eqnarray}
\eqref{f2strmod-eq3}&\Rightarrow& \sigma[2]=(q_1[1] - p_1[1])\bmod\mu\textrm{, and }\label{f2strmod-eq5}\\
\eqref{f2strmod-eq4}&\Rightarrow& \sigma[2]=(q_2[1] - p_2[1])\bmod\mu\,.\label{f2strmod-eq6}
\end{eqnarray}
Equations \eqref{f2strmod-eq5} and \eqref{f2strmod-eq6} together imply that, in order for there to be a transformation in 
$\FTwoSTRMod{\mu}$ that maps $\objectBasis$ onto $\imageBasis$ when $\sigma[3]=1$, it must be true that 
\begin{equation}
(q_1[1]-p_1[1])\bmod\mu=(q_2[1] - p_2[1])\bmod\mu\,.
\label{f2strmod-y-difference-equal}
\end{equation}
Figure~\ref{f2strmod-y-difference-equal-example} shows an example of two ordered pairs of points that do not satisfy \eqref{f2strmod-y-difference-equal} when $\mu=7$. There is therefore no transformation in $\FTwoSTRMod{\mu}$ that maps $\langle p_1,p_2\rangle$ onto $\langle q_1, q_2\rangle$ in this figure.
\begin{figure}[t]
	\begin{center}
		{\resizebox{5cm}{!}{
			\begin{tikzpicture}
			

\filldraw [rounded corners=8pt, fill=yellow, draw opacity=0.3, cm={cos(-63.4) ,-sin(-63.4) ,sin(-63.4) ,cos(-63.4) ,(4.1,.63)}] (0,0) rectangle (2.85, .5);
\filldraw [rounded corners=8pt, fill=yellow, draw opacity=0.3, cm={cos(-45) ,-sin(-45) ,sin(-45) ,cos(-45) ,(1,5.63)}] (0,0) rectangle (1, .5);
\filldraw [rounded corners=8pt, fill=yellow, draw opacity=0.3, cm={cos(-45) ,-sin(-45) ,sin(-45) ,cos(-45) ,(1.65,-.7)}] (0,0) rectangle (1, .5);

\fill[white](1.25,-.75) rectangle (2.5,-.25);
\fill[white](.75,6.3) rectangle (1.75,7);
			
			\draw[thick,-{>[length=3mm,width=4mm]}] (-.1,0) -- (6.5,0);
			\draw[thick,-{>[length=3mm,width=4mm]}] (0,-.1) -- (0,6.5);		
			
			\draw (6.7,0) node{$x$};
			\draw (0,6.7) node{$y$};
			
			\foreach \x in {1,...,6} {
				\draw[very thin,gray] (\x,0) -- (\x,6);
				\draw[thick] (\x,-.1) -- (\x,0);
				\draw[very thin,gray] (0,\x) -- (6,\x);
				\draw[thick] (-.1,\x) -- (0,\x);
			}
			
			\foreach \x in {0,1,...,6} {
				\draw (-.2,\x) node[anchor=east]{\x};
			}
			\foreach \x in {0,1,...,6} {
				\draw (\x,-.2) node[anchor=north]{\x};
			}

			\filldraw  (1,6) circle[radius=.1]
				(2,0) circle[radius=.1]
				(4,1) circle[radius=.1]
				(5,3) circle[radius=.1];

			\draw (1.4,5.6) node[]{$p_1$};
			\draw (1.6,0.4) node[]{$p_2$};
			\draw (3.6,1.4) node[]{$q_1$};
			\draw (5.4,2.6) node[]{$q_2$};
			
			\end{tikzpicture}
		}}
	\end{center}
	\caption{There is no transformation in $\FTwoSTRMod7$ that maps $\langle p_1,p_2\rangle$ onto $\langle q_1, q_2\rangle$ because condition~\eqref{f2strmod-y-difference-equal} is not satisfied.}
	\label{f2strmod-y-difference-equal-example}
\end{figure}

Equations~\eqref{f2strmod-eq1} and \eqref{f2strmod-eq2} together imply that
\begin{eqnarray}
q_1[0]&=&p_1[0]\sigma[0]+q_2[0]-p_2[0]\sigma[0]\nonumber\\
&=&\sigma[0](p_1[0]-p_2[0])+q_2[0]\nonumber\\
\Rightarrow\sigma[0]&=&\frac{q_1[0]-q_2[0]}{p_1[0]-p_2[0]}\,.\label{f2strmod-eq8}
\end{eqnarray}
Equation \eqref{f2strmod-eq8} implies that $\sigma[0]= 0$ when $q_1[0]=q_2[0]$, unless $p_1[0]=p_2[0]$, in which case the right-hand side of \eqref{f2strmod-eq8} evaluates to $\frac00$, which is undefined. However, as with $\FTwoST$ and $\FTwoSTR$, we define that, in this case, $\sigma[0]$ should be set to 1, as this allows us to intuitively discover that, if $q_2[0]=q_1[0]$ and $p_2[0]=p_1[0]$ and condition \eqref{f2strmod-y-difference-equal} holds, then $\objectBasis$ can be mapped onto $\imageBasis$ by a simple translation (i.e., no scaling or reflection required). Figure~\ref{f2strmod7-zero-over-zero-ex} shows an example of where $\sigma[0]$ is set to 1 because the right-hand side of \eqref{f2strmod-eq8} evaluates to $\frac00$. Clearly, if $p_2[0]=p_1[0]$ when $q_2[0]\neq q_1[0]$, then $\sigma[0]$ is undefined.
\begin{figure}[t]
	\begin{center}
		{\resizebox{5cm}{!}{
			\begin{tikzpicture}
			
\filldraw [rounded corners=8pt, fill=yellow, draw opacity=0.3] (4.75,0.75) rectangle (5.25,3.25);
\filldraw [rounded corners=8pt, fill=yellow, draw opacity=0.3] (.75,-.5) rectangle (1.25,1.5);
\filldraw [rounded corners=8pt, fill=yellow, draw opacity=0.3] (.75,5.75) rectangle (1.25,6.75);


\fill[white](.5,6.5) rectangle (1.5,7.25);
\fill[white](.5,-.75) rectangle (1.5,-.25);
			
			\draw[thick,-{>[length=3mm,width=4mm]}] (-.1,0) -- (6.5,0);
			\draw[thick,-{>[length=3mm,width=4mm]}] (0,-.1) -- (0,6.5);		
			
			\draw (6.7,0) node{$x$};
			\draw (0,6.7) node{$y$};
			
			\foreach \x in {1,...,6} {
				\draw[very thin,gray] (\x,0) -- (\x,6);
				\draw[thick] (\x,-.1) -- (\x,0);
				\draw[very thin,gray] (0,\x) -- (6,\x);
				\draw[thick] (-.1,\x) -- (0,\x);
			}
			
			\foreach \x in {0,1,...,6} {
				\draw (-.2,\x) node[anchor=east]{\x};
			}
			\foreach \x in {0,1,...,6} {
				\draw (\x,-.2) node[anchor=north]{\x};
			}

			\filldraw  (1,6) circle[radius=.1]
				(1,1) circle[radius=.1]
				(5,1) circle[radius=.1]
				(5,3) circle[radius=.1];

			\draw (1,6.25) node[]{$p_1$};
			\draw (1,1.25) node[]{$p_2$};
			\draw (5,1.25) node[]{$q_1$};
			\draw (5,2.75) node[]{$q_2$};
		\draw[-{Stealth[length=4mm,width=4mm]}] (1,0) -- (5,2);
		\draw (2.6,1.25) node{$\langle1,4,2,1\rangle$};
			\end{tikzpicture}
		}}
	\end{center}
	\caption{In this case, $\frac{q_1[0]-q_2[0]}{p_1[0]-p_2[0]}$ evaluates to $\frac00$, so $\sigma[0]$ is set to 1. The only transformation in $\FTwoSTRMod7$ that maps $\langle p_1,p_2\rangle$ onto $\langle q_1,q_2\rangle$ is that with the transformation parameter vector, $\langle1,4,2,1\rangle$.}
	\label{f2strmod7-zero-over-zero-ex}
\end{figure}

Equation \eqref{f2strmod-eq1} implies that 
\begin{equation}
\sigma[1]=q_1[0]-p_1[0]\sigma[0]\,,
\label{f2strmod-eq9}
\end{equation}
and from \eqref{f2strmod-eq2}, it follows that 
\begin{equation}
\sigma[1]=q_2[0]-p_2[0]\sigma[0]\,.
\label{f2strmod-eq10}
\end{equation}
It was shown above (see \eqref{sigma1-rhs-are-equal-a} and \eqref{sigma1-rhs-are-equal-b} on page~\pageref{sigma1-rhs-are-equal-a}) that the right-hand side of \eqref{f2strmod-eq9} is always equal to the right-hand side of \eqref{f2strmod-eq10}. Thus, when $\sigma[0]$ is undefined (i.e., when $p_2[0]=p_1[0]$ and $q_2[0]\neq q_1[0]$), $\sigma[1]$ is also undefined. 

To summarise the results so far, if $\sigma[3]=1$, then
\begin{equation}
\sigma=\left\langle\sigma[0],q_2[0]-p_2[0]\sigma[0],(q_1[1]-p_1[1])\bmod \mu,1\right\rangle\,,\end{equation}
where
$$(q_1[1]-p_1[1])\bmod\mu=(q_2[1] - p_2[1])\bmod\mu\,,$$
and where
$$
\sigma[0]=\begin{cases}
1,&\text{ if $p_2[0]=p_1[0]$ and $q_2[0]= q_1[0]$; and}\\
\frac{q_2[0]-q_1[0]}{p_2[0]-p_1[0]},&\text{otherwise, provided that }p_2[0]\neq p_1[0]\,.
\end{cases}
$$


Let us now consider the case where $\sigma[3]=-1$. In this case, $\sigma[0]$ and $\sigma[1]$ are still given by \eqref{f2strmod-eq8} and \eqref{f2strmod-eq9}, respectively, since they do not depend on $\sigma[3]$. 

From \eqref{f2strmod-eq3} and \eqref{f2strmod-eq4} we conclude that 
\begin{eqnarray}
\eqref{f2strmod-eq3}\Rightarrow \sigma[2]=(-p_1[1]-q_1[1])\bmod\mu\label{f2strmod-eq12}\\
\eqref{f2strmod-eq4}\Rightarrow \sigma[2]=(-p_2[1]-q_2[1])\bmod\mu\label{f2strmod-eq13}
\end{eqnarray}
and that, therefore, in order for there to be a transformation that maps $\objectBasis$ onto $\imageBasis$ when $\sigma[3]=-1$, it must be true (from Lemma~\ref{mod-lemma-1}) that 
\begin{equation}
(p_1[1]+q_1[1])\bmod\mu=(p_2[1] + q_2[1])\bmod\mu\,.
\label{f2strmod-y-difference-equal-sigma-3-equals-1}
\end{equation}

\begin{figure}[t]
\begin{center}
\resizebox*{.7\columnwidth}{!}{
\begin{minipage}{\linewidth}
\begin{tabbing}
\input{ColumnDefinitions}
$\fun{$\BTSF_{\FTwoSTRMod{\mu}}$}(\objectBasis,\imageBasis)$\\
1\>$\dmif \imageBasis = \objectBasis$\\
2\>\>$\dmreturn\lbrace\langle1,0,0,1\rangle\rbrace\dmcomment{Identity transformation}$\\
3\>$p_1\becomes\objectBasis[0],p_2\becomes\objectBasis[1],q_1\becomes\imageBasis[0],q_2\becomes\imageBasis[1]$\\
4\>$\dmif q_2[0]\neq q_1[0]\land p_2[0]=p_1[0]\dmcomment{$\sigma[0]$ is undefined.}$\\
5\>\>$\dmreturn \lbrace\rbrace$\\
6\>$\dmif q_2[0]= q_1[0]\land p_2[0]=p_1[0] \dmcomment{$\sigma[0]=0/0$}$\\
7\>\>$\sigma_0\becomes1$\\
8\>$\dmelse$\\
9\>\>$\sigma_0\becomes(q_2[0]-q_1[0])/(p_2[0]-p_1[0])\dmcomment{From \eqref{f2strmod-eq8}}$\\
10\>$S\becomes\lbrace\rbrace$\\
11\>$\sigma_1\becomes q_2[0]-p_2[0]\sigma_0\dmcomment{From \eqref{f2strmod-eq10}}$\\
12\>$\dmif (q_1[1]-p_1[1])\bmod\mu=(q_2[1] - p_2[1])\bmod\mu\dmcomment{From \eqref{f2strmod-y-difference-equal}}$\\
13\>\>$\sigma_3\becomes1$\\
14\>\>$\sigma_2\becomes (q_1[1]-p_1[1])\bmod\mu\dmcomment{From \eqref{f2strmod-eq5}}$\\
15\>\>$S\becomes \lbrace\langle\sigma_0,\sigma_1,\sigma_2,\sigma_3\rangle\rbrace$\\
16\>$\dmif (p_1[1]+q_1[1])\bmod\mu=(p_2[1] + q_2[1])\bmod\mu\dmcomment{From \eqref{f2strmod-y-difference-equal-sigma-3-equals-1}}$\\
17\>\>$\sigma_3\becomes-1$\\
18\>\>$\sigma_2\becomes(-p_1[1]-q_1[1])\bmod\mu\dmcomment{From \eqref{f2strmod-eq12}}$\\
19\>\>$S\becomes S\cup\lbrace\langle\sigma_0,\sigma_1,\sigma_2,\sigma_3\rangle\rbrace$\\
20\>$\dmreturn S$
\end{tabbing}
\end{minipage}
}
\end{center}
\caption{Algorithm for computing the basis-pair transformation set function for $\FTwoSTRMod{\mu}$.}
\label{FTwoSTRModMu-basis-transformation-set-function}
\end{figure}

To summarise, we can define the basis-pair transformation set function, $\BTSF_{\FTwoSTRMod{\mu}}$, as follows:
\begin{multline}
\BTSF_{\FTwoSTRMod{\mu}}\left(\langle p_1,p_2\rangle,\langle q_1,q_2\rangle\right)=\\
\left\lbrace\left\langle\sigma[0],q_2[0]-p_2[0]\sigma[0],(q_1[1]-p_1[1])\bmod\mu,1\right\rangle\right.,\\
\left.\left\langle\sigma[0],q_2[0]-p_2[0]\sigma[0],(-p_1[1]-q_1[1])\bmod\mu,-1\right\rangle\right\rbrace
\label{equation-BTSFFTwoSTRModMu}
\end{multline}
where
\begin{equation}
(q_1[1]-p_1[1])\bmod\mu=(q_2[1] - p_2[1])\bmod\mu\text{ when }\sigma[3]=1\,,
\label{tfstrmodmu-condition-1}
\end{equation}
and
\begin{equation}
(p_1[1]+q_1[1])\bmod\mu=(p_2[1] + q_2[1])\bmod\mu\text{ when }\sigma[3]=-1\,,
\label{tfstrmodmu-condition-2}
\end{equation}
and
\begin{equation}
\sigma[0]=\begin{cases}
1,&\text{ if $p_2[0]=p_1[0]$ and $q_2[0]= q_1[0]$; and}\\
\frac{q_2[0]-q_1[0]}{p_2[0]-p_1[0]},&\text{otherwise, provided that }p_2[0]\neq p_1[0]\,.
\end{cases}
\end{equation}

Conditions \eqref{tfstrmodmu-condition-1} and \eqref{tfstrmodmu-condition-2} are true if and only if $p_1[1]=p_2[1]$---recall that, in the context of $\FTwoSTRMod{\mu}$ we assume that all points are in $\mathbb R\times\mathbb R_{\mu}$. Therefore $|\BTSF_{\FTwoSTRMod{\mu}}(\langle p_1,p_2\rangle,\langle q_1,q_2\rangle)|=\tau(\FTwoSTRMod{\mu})=2$ if and only if $p_1[1]=p_2[1]$ and the other conditions above are satisfied.

$\FTwoSTRMod{\mu}$ is therefore a transformation class. The basis-pair transformation set function, $\BTSF_{\FTwoSTRMod{\mu}}$, can be computed using the algorithm shown in Fig.~\ref{FTwoSTRModMu-basis-transformation-set-function}.


\section{Matches and occurrences}
\label{matches-and-occurrences-section}

In this section, we begin by defining the concepts of a complete match and a complete occurrence. We then introduce the notions of maximal match and maximal pattern. We then prove some results relating to the important case where maximal matches and patterns are for transformations that are bijections. Finally, we note the relationship between maximal patterns and partial symmetries.

\subsection{Complete matches and occurrences}
We define the terms, {\em match\/} and {\em occurrence\/} as follows.
\begin{definition}[Match]
Given two, non-empty, finite point sets, $P,Q\subset\mathbb{R}^k$, and a $k$-dimensional transformation, $f$, we define that $Q$ is a {\em match\/} or {\em complete exact match\/} of $P$ {\em under the transformation}, $f$, if and only if $Q=f(P)$. 
Furthermore, if $\tc$ is a $k$-dimensional transformation class, then we can say that  $Q$ is a {\em match\/} or {\em complete exact match\/} of $P$ {\em with respect to} $\tc$, if and only if there exists a transformation, $f\in\tc$, such that $Q=f(P)$.
\label{match-definition}
\end{definition}

\begin{definition}[Occurrence]
Given two, non-empty, finite point sets, $P,Q\subset\mathbb{R}^k$, and a $k$-dimensional transformation, $f$,
we say that $P$ {\em occurs in\/} or {\em occurs completely and exactly in\/} $Q$ {\em under the transformation}, $f$, if and only if $f(P)\subseteq Q$. In this case, $f(P)$ is an {\em occurrence\/} or a {\em complete exact occurrence\/} of $P$ in $Q$ {\em under the transformation}, $f$. 
Furthermore, if $F$ is a $k$-dimensional transformation class, then
we say that $P$ {\em occurs in\/} or  {\em occurs completely and exactly in\/} $Q$ {\em with respect to\/} $F$, if and only if there exists a transformation, $g\in F$, such that $g(P)\subseteq Q$. In this case, $g(P)$ is an {\em occurrence\/} or a {\em complete exact occurrence\/} of $P$ in $Q$ {\em with respect to\/} $F$.
\label{occurrence-definition}
\end{definition}

Note that, in Definition~\ref{occurrence-definition}, $\pattern$ does not necessarily need to be a subset of $Q$ in order for $\pattern$ to occur in $Q$. 
For example, the point set in Fig.~\ref{occurrence-example}(a) {\em occurs\/} twice in the point set in Fig.~\ref{occurrence-example}(b) under translations, but it is not a {\em subset\/} of the point set in Fig.~\ref{occurrence-example}(b). 

Note that we use the term ``match'' when we compare two whole point sets (typically called ``patterns'') with each other, whereas we use the term ``occurrence'' when we are interested in whether one point set (which we would typically call a ``pattern'') appears within (i.e., as part of) some other point set (typically called a ``dataset'') under some transformation. For example, if we refer to the point set in Fig.~\ref{occurrence-example}(a) as $P$ and the point set in Fig.~\ref{occurrence-example}(b) as $Q$, then we might say that the red pattern in $Q$ is a {\em match\/} of $P$ under the transformation, $T_{\langle1,1\rangle}$, and that it is an {\em occurrence\/} of $P$ in $Q$, with respect to the transformation class, $\FTwoT$.

\begin{figure}[t]
	\begin{center}
		\subfloat[]{\resizebox{5cm}{!}{
			\begin{tikzpicture}

			\draw[thick] (-.1,0) -- (8,0);
			\draw[thick] (0,-.1) -- (0,8);		
			
			\draw (4,-1) node{$x$};
			\draw (-1,4) node{$y$};
			
			\foreach \x in {1,...,8} {
				\draw[very thin] (\x,0) -- (\x,8);
				\draw[thick] (\x,-.1) -- (\x,0);
				\draw[very thin] (0,\x) -- (8,\x);
				\draw[thick] (-.1,\x) -- (0,\x);
			}
			\foreach \x in {0,1,...,8} {
				\draw (\x,-.2) node[anchor=north]{\x};
				\draw (-.2,\x) node[anchor=east]{\x};
			}

			\filldraw  [green!50!black] (1,1) circle[radius=.1]
					        (1,3) circle[radius=.1]
					        (2,4) circle[radius=.1]
					        (4,3) circle[radius=.1];

			\filldraw (4,1) circle[radius=.1];


%
%
%
%
%
%
%
%
%
%
			
			\end{tikzpicture}
		}}
		\subfloat[]{\resizebox{5cm}{!}{
			\begin{tikzpicture}
			
			



			\draw[thick] (-.1,0) -- (8,0);
			\draw[thick] (0,-.1) -- (0,8);		
			
			\draw (4,-1) node{$x$};
			\draw (-1,4) node{$y$};
			
			\foreach \x in {1,...,8} {
				\draw[very thin] (\x,0) -- (\x,8);
				\draw[thick] (\x,-.1) -- (\x,0);
				\draw[very thin] (0,\x) -- (8,\x);
				\draw[thick] (-.1,\x) -- (0,\x);
			}
			\foreach \x in {0,1,...,8} {
				\draw (\x,-.2) node[anchor=north]{\x};
				\draw (-.4,\x) node[anchor=east]{\x};
			}

			\filldraw[red]  (2,2) circle[radius=.1]
					  (2,4) circle[radius=.1]
					  (3,5) circle[radius=.1]
					  (5,4) circle[radius=.1]
					  (5,2) circle[radius=.1];

			\filldraw[blue]  (5,5) circle[radius=.1]
					  (5,7) circle[radius=.1]
					  (6,8) circle[radius=.1]
					  (8,7) circle[radius=.1]
					  (8,5) circle[radius=.1];
					  
			\filldraw[green!50!black] (0,6) circle[radius=.1]
		    			      (0,8) circle[radius=.1]
					      (1,5) circle[radius=.1]
					      (3,6) circle[radius=.1];
%
%
%
%
%
%
%
%
%
%
			
			\end{tikzpicture}
		}}
	\end{center}
	\caption{Two point sets.}
	\label{occurrence-example}
\end{figure}

\subsection{Maximal matches and maximal patterns}
\label{maximal-matches-and-maximal-patterns-section}
In Definitions~\ref{match-definition} and \ref{occurrence-definition}, we consider only complete, exact matches and occurrences of a pattern. We may also be interested in cases where only part of a pattern occurs in a dataset under some transformation. We therefore define the concepts of a {\em maximal match\/} and a {\em maximal pattern} as follows.
 \begin{definition}[Maximal match for a transformation] 
Given two point sets, $P,Q\subset\reals^k$, and a $k$-dimensional transformation, 
$f$, we define the {\em maximal match\/} of 
$P$ in $Q$ for $f$ to be $f(P)\cap Q$, and we denote it by $\maxmatchfun(P,Q,f)$. In the case where the two point sets, $P$ and $Q$, are the same, so that $P=Q=D$, then we define that $\maxmatchfun(D,D,f) = \maxmatchfun(D,f) = f(D)\cap D$. In this case, we can say that $\maxmatchfun(D,f)$ is the {\em maximal match in $D$ for the transformation, $f$}. 
\label{maximal-match-definition}
\end{definition}

\begin{definition}[Maximal pattern for a transformation] 
Given two point sets, $P,Q\subset\reals^k$, and a $k$-dimensional transformation, 
$f$, we define the {\em maximal pattern\/} in $P$ for $Q$ and $f$ to be the set of points, $\lbrace p\in P: f(p)\in Q\rbrace$, and we denote it by $\maxpatfun(P,Q,f)$. In the case where the two point sets, $P$ and $Q$, are the same, so that $P=Q=D$, we define that $\maxpatfun(D,D,f)=\maxpatfun(D,f)=\lbrace p\in D: f(p)\in D\rbrace$. In this case, we say that $\maxpatfun(D,f)$ is the {\em maximal pattern in $D$ for the transformation $f$}. We also say that the transformation, $f$, {\em occurs in\/} $D$ if and only if $\maxpatfun(D,f)\neq\emptyset$. \label{maximal-pattern-definition}
\end{definition}

Definition~\ref{maximal-match-definition} therefore defines the maximal match of a pattern, $P$, in another pattern, $Q$, for a transformation, $f$, to be the set of points in $Q$ that are images of points in $P$ under the transformation, $f$. It also defines that the maximal match in a dataset, $D$, for a transformation, $f$, is the image of the set of points in $D$ that are mapped onto points in $D$ by the transformation, $f$.

Definition~\ref{maximal-pattern-definition} defines that the maximal pattern in a pattern, $P$, for another pattern, $Q$, and a transformation, $f$, is the set of points in $P$ that are mapped by $f$ onto points in $Q$. It also defines that the maximal pattern in a dataset, $D$, for a transformation, $f$, is the set of points in $D$ that are mapped by $f$ onto points in $D$.

For example, if we set $P$ to equal the point set in Fig.~\ref{occurrence-example}(a) and we set $Q$ to be the point set in Fig.~\ref{occurrence-example}(b), then $\maxmatchfun(P,Q,T_{\langle-1-9\rangle}R_x)$ is equal to the set of green points ({\tikz{\filldraw[green!50!black](0,0.05)circle[radius=.05];}}) in Fig.~\ref{occurrence-example}(b), and $\maxpatfun(P,Q,T_{\langle-1-9\rangle}R_x)$ is equal to the set of green points in Fig.~\ref{occurrence-example}(a). As examples of cases where the two point sets are the same, $\maxmatchfun(Q,T_{\langle3,3\rangle})$ is equal to the set of blue points in Fig.~\ref{occurrence-example}(b), whereas $\maxpatfun(Q,T_{\langle3,3\rangle})$ is equal to the set of red points in Fig.~\ref{occurrence-example}(b). As another example, $\maxmatchfun(Q,T_{\langle-2,-10\rangle}R_x)$ is equal to the set containing the green points in Fig.~\ref{occurrence-example}(b) together with the red point, $\langle3,5\rangle$; whereas $\maxpatfun(Q,T_{\langle-2,-10\rangle}R_x) = \lbrace\langle2,2\rangle,\langle2,4\rangle,\langle3,5\rangle,\langle5,4\rangle,\langle5,5\rangle\rbrace$.


We now present a lemma that characterises the relationship between a maximal pattern and a maximal match.

\begin{lemma} 
If $P,Q,D\subset\reals^k$, and $f$ is a $k$-dimensional transformation, then
\begin{equation}
f(\maxpatfun(P,Q,f)) = \maxmatchfun(P,Q,f)
\label{f-MP-equals-MM-eq}
\end{equation}
and
\begin{equation}
f(\maxpatfun(D,f)) = \maxmatchfun(D,f)\,.
\label{f-MP-D-equals-MM-D-eq}
\end{equation}
\label{f-MP-equals-MM-lemma}
\end{lemma}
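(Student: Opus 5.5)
The plan is to prove \eqref{f-MP-equals-MM-eq} by a direct double-inclusion argument, working straight from Definitions~\ref{maximal-match-definition} and \ref{maximal-pattern-definition} together with the pointwise convention $f(S)=\bigcup_{p\in S}\lbrace f(p)\rbrace$ from Section~\ref{transformations-section}. Equation \eqref{f-MP-D-equals-MM-D-eq} will then follow immediately as the special case $P=Q=D$, since the definitions give $\maxpatfun(D,f)=\maxpatfun(D,D,f)$ and $\maxmatchfun(D,f)=\maxmatchfun(D,D,f)$.

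For the inclusion $f(\maxpatfun(P,Q,f))\subseteq\maxmatchfun(P,Q,f)$, I take an arbitrary $q\in f(\maxpatfun(P,Q,f))$. By the convention for images of sets, there is some $p\in\maxpatfun(P,Q,f)$ with $q=f(p)$. By Definition~\ref{maximal-pattern-definition}, this $p$ lies in $P$ and satisfies $f(p)\in Q$. From $p\in P$ we get $q=f(p)\in f(P)$, and we also have $q\in Q$. Hence $q\in f(P)\cap Q=\maxmatchfun(P,Q,f)$.

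For the reverse inclusion, I take $q\in\maxmatchfun(P,Q,f)=f(P)\cap Q$. Since $q\in f(P)$, there is some $p\in P$ with $f(p)=q$. Since $q\in Q$, this $p$ satisfies $f(p)\in Q$, so $p\in\maxpatfun(P,Q,f)$. Therefore $q=f(p)\in f(\maxpatfun(P,Q,f))$.

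I expect no real obstacle. The one point to be careful about is that $f$ need not be injective (the paper explicitly drops the bijection requirement). In the reverse inclusion, the argument only needs \emph{some} preimage $p$ of $q$ that lies in $P$; it does not need a unique one. Any such $p$ automatically belongs to the maximal pattern, because membership there depends only on $p\in P$ and $f(p)\in Q$. So the proof goes through for arbitrary functions $f:\reals^k\rightarrow\reals^k$, and the hypothesis on $D$ plays no role beyond the specialisation.
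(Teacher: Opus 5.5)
Your proof is correct and takes the same approach as the paper: the paper argues in prose that applying $f$ to the set of points of $P$ that $f$ sends into $Q$ gives exactly $f(P)\cap Q$, then specialises to $P=Q=D$. Your double inclusion makes that step explicit, and your remark that only the existence of some preimage in $P$ is needed matches the paper's note that the lemma holds for non-bijective $f$.
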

\begin{proof}
$\maxpatfun(P,Q,f)$ is defined in Definition~\ref{maximal-pattern-definition} to be equal to $\lbrace p\in P: f(p)\in Q\rbrace$. That is, $\maxpatfun(P,Q,f)$ contains all the points in $P$ that are mapped onto points in $Q$ by $f$. Therefore, if we transform $\maxpatfun(P,Q,f)$ by $f$, we obtain the set of all points in $Q$ that are images of points in $P$ under $f$---that is, the set $f(P)\cap Q$. By Definition~\ref{maximal-match-definition}, $f(P)\cap Q=\maxmatchfun(P,Q,f)$, from which \eqref{f-MP-equals-MM-eq} follows directly. If we set $P=Q=D$, then \eqref{f-MP-D-equals-MM-D-eq} follows directly from \eqref{f-MP-equals-MM-eq} and Definitions~\ref{maximal-match-definition} and \ref{maximal-pattern-definition}.\qed
\end{proof}
Note that Lemma~\ref{f-MP-equals-MM-lemma} holds for any $k$-dimensional transformation, $f$, even one that is not a bijection (i.e., does not have a unique inverse). For example, if we set $P$ to be the point set in Fig.~\ref{f-MP-equals-MM-lemma-example}(a), $Q$ to be the point set in Fig.~\ref{f-MP-equals-MM-lemma-example}(b), and $f$ to be the transformation, $S_0T_{\langle3,0\rangle}$, then $\maxpatfun(P,Q,f)$ is equal to the set of points in Fig.~\ref{f-MP-equals-MM-lemma-example}(a) indicated by open circles ({\tikz{\draw[thick](0,0.05)circle[radius=.05];}}). When this set of points is transformed by $S_0,T_{\langle3,0\rangle}$, we obtain the set of points marked with crosses in Fig.~\ref{f-MP-equals-MM-lemma-example}(b), which is equal to $\maxmatchfun(P,Q,f)$.

\begin{figure}[t]
	\begin{center}
		\subfloat[]{\resizebox{5cm}{!}{
			\begin{tikzpicture}

			\draw[thick] (-.1,0) -- (6,0);
			\draw[thick] (0,-.1) -- (0,6);		
			
			\draw (3,-1) node{$x$};
			\draw (-1,3) node{$y$};
			
			\foreach \x in {1,...,6} {
				\draw[very thin] (\x,0) -- (\x,6);
				\draw[thick] (\x,-.1) -- (\x,0);
				\draw[very thin] (0,\x) -- (6,\x);
				\draw[thick] (-.1,\x) -- (0,\x);
			}
			\foreach \x in {0,1,...,6} {
				\draw (\x,-.2) node[anchor=north]{\x};
				\draw (-.2,\x) node[anchor=east]{\x};
			}

			\draw  [thick] (2,1) circle[radius=.1]
					  (1,4) circle[radius=.1]
					  (2,6) circle[radius=.1];
			\filldraw[black] (3,5) circle[radius=.1];
			\end{tikzpicture}
		}}
		\subfloat[]{\resizebox{5cm}{!}{
			\begin{tikzpicture}
			
			\draw[thick] (-.1,0) -- (6,0);
			\draw[thick] (0,-.1) -- (0,6);		
			
			\draw (3,-1) node{$x$};
			\draw (-1,3) node{$y$};
			
			\foreach \x in {1,...,6} {
				\draw[very thin] (\x,0) -- (\x,6);
				\draw[thick] (\x,-.1) -- (\x,0);
				\draw[very thin] (0,\x) -- (6,\x);
				\draw[thick] (-.1,\x) -- (0,\x);
			}
			\foreach \x in {0,1,...,6} {
				\draw (\x,-.2) node[anchor=north]{\x};
				\draw (-.2,\x) node[anchor=east]{\x};
			}

			\filldraw[black] (4,2) circle[radius=.1];
			\drawcross31{black}
			\drawcross34{black}
			\drawcross36{black}
			
			\end{tikzpicture}
		}}
	\end{center}
	\caption{Illustration that Lemma~\ref{f-MP-equals-MM-lemma} holds even for transformations that are not bijections. The set of points marked with crosses in (b) form the maximal match of the pattern in (a) under the transformation $S_0T_{\langle3,0\rangle}$ which has no unique inverse and is therefore not a bijection. The set of points marked with open circles in (a) are mapped onto the set of crosses in (b) by $S_0T_{\langle3,0\rangle}$ and form the maximal pattern in (a) for (b) and this non-bijective transformation.}
	\label{f-MP-equals-MM-lemma-example}
\end{figure}

Definitions~\ref{maximal-match-definition} and \ref{maximal-pattern-definition} define the concepts of a maximal match and a maximal pattern, respectively, for a specific transformation. We now define the concepts of a maximal match and a maximal pattern with respect to a transformation class.

\begin{definition}[Maximal match with respect to a transformation class]
Let $P, Q\subset\reals^k$ and let $\tc$ be a $k$-dimensional transformation class. We define that {\em $S$ is a maximal match of $P$ in $Q$ with respect to $\tc$}, if and only if there exists a transformation, $f\in\tc$, such that $S=f(P)\cap Q$. 
If $P=Q=D$, then we can say that {\em $S$ is a maximal match in $D$ with respect to $\tc$}, if and only if there exists a transformation, $f\in\tc$, such that $S=f(D)\cap D$.  If $|S|\ge\basisSize(\tc)$, then we say that $S$ is a {\em non-trivial maximal match with respect to} $\tc$; otherwise, $S$ is a {\em trivial maximal match with respect to} $\tc$.
\label{max-match-for-a-tran-class-def}
\end{definition}

\begin{definition}[Maximal pattern with respect to a transformation class]
Let $P, Q\subset\reals^k$ and let $\tc$ be a $k$-dimensional transformation class.
We define that {\em $S$ is a maximal pattern in $P$ for $Q$ with respect to $\tc$}, if and only if there exists a transformation, $f\in\tc$, such that $f(S)=f(P)\cap Q$. 
If $P=Q=D$, then we can say that {\em $S$ is a maximal pattern in $D$ with respect to $\tc$}, if and only if there exists a transformation, $f\in\tc$, such that $f(S)=f(D)\cap D$. If $|S|\ge\basisSize(\tc)$, then we say that $S$ is a {\em non-trivial maximal pattern with respect to} $\tc$; otherwise, $S$ is a {\em trivial maximal pattern with respect to} $\tc$.
\label{max-pattern-for-a-tran-class-def}
\end{definition}

Note that we define that a maximal match or pattern is ``trivial'' if and only if its size is less than the basis size for the transformation class. This is because, by definition (see Def.~\ref{transformation-class-definition}), for any transformation class, there may be an infinite number of transformations that map any ordered set of $m$ points onto another ordered set of $m$ points, where $m$ is less than the basis size of the transformation class.

\subsection{Maximal matches and maximal patterns for bijections}

If a transformation, $f$, is a bijection then we define its {\em inverse}, $f^{-1}$, as
\begin{equation}
f^{-1}(q)=p\Leftrightarrow f(p)=q
\label{inverse-eq}
\end{equation} 
where $p,q\in\reals^k$. It follows directly that
\begin{equation}
f(f^{-1}(p))=f^{-1}(f(p))=p\,.
\label{f-inv-f-p-equals-p-eq}
\end{equation}
We now present a number of results pertaining to the relationship between maximal matches and maximal patterns, that hold for transformations that are bijections. 

\begin{lemma} 
If $P,Q,D\subset\reals^k$, and $f$ is a bijective $k$-dimensional transformation, then
\begin{equation}
\maxpatfun(P,Q,f) = P \cap f^{-1}(Q)
\label{MP-equals-inv-f-Q-cap-P-eq}
\end{equation}
and
\begin{equation}
\maxpatfun(D,f) =  D\cap f^{-1}(D)\,.
\label{MP-D-equals-inv-f-D-cap-D-eq}
\end{equation}
\label{MP-equals-inv-f-Q-cap-P-lemma}
\end{lemma}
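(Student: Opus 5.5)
The plan is a direct double-inclusion (element-chasing) argument, starting from Definition~\ref{maximal-pattern-definition} and using the defining property of the inverse in \eqref{inverse-eq} and \eqref{f-inv-f-p-equals-p-eq}. Recall the notational convention from Section~\ref{transformations-section}: for a point set $S$, $f^{-1}(S)=\bigcup_{q\in S}\lbrace f^{-1}(q)\rbrace$. So $f^{-1}(Q)$ is the set of points $f^{-1}(q)$ with $q\in Q$. The first step is to show that, because $f$ is a bijection,
\[
p\in f^{-1}(Q)\iff f(p)\in Q
\]
for every $p\in\reals^k$.

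For the forward direction, suppose $p=f^{-1}(q)$ for some $q\in Q$. Then \eqref{f-inv-f-p-equals-p-eq} gives $f(p)=f(f^{-1}(q))=q\in Q$. For the converse, suppose $f(p)=q\in Q$. Then \eqref{inverse-eq} gives $f^{-1}(q)=p$, so $p\in f^{-1}(Q)$. This equivalence is the only place where bijectivity is used. It matters because, for a non-bijective $f$, the expression $f^{-1}$ is not even defined in the paper's sense.

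With this equivalence in hand, \eqref{MP-equals-inv-f-Q-cap-P-eq} follows by rewriting the set-builder definition:
\[
\maxpatfun(P,Q,f)=\lbrace p\in P: f(p)\in Q\rbrace=\lbrace p\in P: p\in f^{-1}(Q)\rbrace=P\cap f^{-1}(Q)\,.
\]
Equation~\eqref{MP-D-equals-inv-f-D-cap-D-eq} is then the special case $P=Q=D$, using the second clause of Definition~\ref{maximal-pattern-definition}, which identifies $\maxpatfun(D,f)$ with $\maxpatfun(D,D,f)$.

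There is no real obstacle here. The only point needing care is the first step: being explicit that $f^{-1}$ applied to a set means the pointwise image, and that membership in that image is equivalent to $f(p)\in Q$. It is worth stating both directions rather than just asserting the equivalence.
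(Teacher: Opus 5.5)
Your proof is correct, but it takes a different route from the paper's.

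The paper does not chase elements. It combines Lemma~\ref{f-MP-equals-MM-lemma} with Definition~\ref{maximal-match-definition} to get $f(\maxpatfun(P,Q,f))=f(P)\cap Q$. It then applies $f^{-1}$ to both sides, giving $\maxpatfun(P,Q,f)=f^{-1}(f(P)\cap Q)$. Finally it simply asserts $f^{-1}(f(P)\cap Q)=P\cap f^{-1}(Q)$.

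What the paper's route buys is reuse of the earlier lemma. It also sets up the viewpoint that is reused immediately in Theorem~\ref{MM-MP-theorem}: a maximal pattern is the maximal match pulled back by $f^{-1}$. Its last step, however, relies without comment on two facts. One is $f^{-1}(f(P))=P$. The other is that $f^{-1}$ distributes over intersection, which holds because $f^{-1}$ is injective.

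Your double-inclusion argument needs no prior lemma and is self-contained. It pins down exactly where bijectivity is used: surjectivity guarantees that $f^{-1}(q)$ exists, and injectivity makes $f(p)=q$ force $f^{-1}(q)=p$. It also shows directly that the identity is just the preimage characterisation of $\maxpatfun$, which is what the paper's unjustified final step tacitly uses anyway.
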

\begin{proof}
From Lemma~\ref{f-MP-equals-MM-lemma}, we know that $f(\maxpatfun(P,Q,f)) = \maxmatchfun(P,Q,f)$ and from Definition~\ref{maximal-match-definition}, we know that $\maxmatchfun(P,Q,f)=f(P)\cap Q$. Therefore.
\begin{equation}
f(\maxpatfun(P,Q,f))=f(P)\cap Q\,.
\label{f-MP-equals-f-P-cap-Q-eq}
\end{equation}
Since $f$ is a bijection, it follows that 
\begin{eqnarray}
\maxpatfun(P,Q,f)&=&f^{-1}(f(\maxpatfun(P,Q,f)))\quad\textrm{(from \eqref{f-inv-f-p-equals-p-eq})}\nonumber\\
			&=&f^{-1}(f(P)\cap Q)\quad\textrm{ (from \eqref{f-MP-equals-f-P-cap-Q-eq})}\nonumber\\
			&=&P\cap f^{-1}(Q)\nonumber.
\end{eqnarray}
This proves \eqref{MP-equals-inv-f-Q-cap-P-eq}.
It follows directly that, if $P=Q=D$, then $\maxpatfun(D,f)=D\cap f^{-1}(D)$, which proves \eqref{MP-D-equals-inv-f-D-cap-D-eq}.\qed
\end{proof}

\begin{theorem} 
If $P,Q,D\subset\reals^k$, and $f$ is a bijective $k$-dimensional transformation, then
\begin{equation}
\maxmatchfun(P,Q,f) = f(\maxpatfun(P,Q,f))=f(\maxmatchfun(Q,P,f^{-1}))=\maxpatfun(Q,P,f^{-1})
\label{MM-MP-Theorem-PQ-eq}
\end{equation}
and
\begin{equation}
\maxmatchfun(D,f) = f(\maxpatfun(D,f))=f(\maxmatchfun(D,f^{-1}))=\maxpatfun(D,f^{-1})\,.
\label{MM-MP-Theorem-D-eq}
\end{equation}
\label{MM-MP-theorem}
\end{theorem}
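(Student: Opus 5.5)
We prove \eqref{MM-MP-Theorem-PQ-eq} as a chain of three equalities and then obtain \eqref{MM-MP-Theorem-D-eq} by setting $P=Q=D$. The first equality, $\maxmatchfun(P,Q,f)=f(\maxpatfun(P,Q,f))$, is exactly \eqref{f-MP-equals-MM-eq} of Lemma~\ref{f-MP-equals-MM-lemma}, so nothing more is needed for it. The plan for the rest is to reduce everything to the intersection forms $f(P)\cap Q$ and $P\cap f^{-1}(Q)$ given by Definition~\ref{maximal-match-definition} and Lemma~\ref{MP-equals-inv-f-Q-cap-P-lemma}.

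For the last equality, $\maxmatchfun(P,Q,f)=\maxpatfun(Q,P,f^{-1})$, first note that $f^{-1}$ is itself a bijective $k$-dimensional transformation whose inverse is $f$, by \eqref{inverse-eq} and \eqref{f-inv-f-p-equals-p-eq}. We may therefore apply Lemma~\ref{MP-equals-inv-f-Q-cap-P-lemma} with $f^{-1}$ in place of $f$ and with the roles of $P$ and $Q$ swapped. This gives $\maxpatfun(Q,P,f^{-1})=Q\cap (f^{-1})^{-1}(P)=Q\cap f(P)$, which is $\maxmatchfun(P,Q,f)$ by Definition~\ref{maximal-match-definition}. The only point to check here is the identity $(f^{-1})^{-1}=f$, which is immediate from \eqref{inverse-eq}.

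For the middle equality, $f(\maxpatfun(P,Q,f))=f(\maxmatchfun(Q,P,f^{-1}))$, it suffices to show $\maxpatfun(P,Q,f)=\maxmatchfun(Q,P,f^{-1})$ and then apply $f$ to both sides. By Lemma~\ref{MP-equals-inv-f-Q-cap-P-lemma}, the left side is $P\cap f^{-1}(Q)$. By Definition~\ref{maximal-match-definition} applied to $f^{-1}$, the right side is $f^{-1}(Q)\cap P$. These are the same set.

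Finally, \eqref{MM-MP-Theorem-D-eq} follows by substituting $P=Q=D$ and using the shorthand conventions in Definitions~\ref{maximal-match-definition} and \ref{maximal-pattern-definition}. There is no genuine obstacle in this argument. The step needing the most care is the bookkeeping of which argument plays the role of pattern and which of target when $P$ and $Q$ are swapped and $f$ is replaced by $f^{-1}$, together with the implicit use of injectivity. That use is already contained in Lemma~\ref{MP-equals-inv-f-Q-cap-P-lemma}, through the step $f^{-1}(f(P)\cap Q)=P\cap f^{-1}(Q)$.
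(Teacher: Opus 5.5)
Your proof is correct and is essentially the paper's argument: the first two equalities are obtained exactly as in the paper, from Lemma~\ref{f-MP-equals-MM-lemma} and from Lemma~\ref{MP-equals-inv-f-Q-cap-P-lemma} together with Definition~\ref{maximal-match-definition}. The only difference is in the last equality. You apply Lemma~\ref{MP-equals-inv-f-Q-cap-P-lemma} to $f^{-1}$, using $(f^{-1})^{-1}=f$, to identify $\maxpatfun(Q,P,f^{-1})$ directly with $f(P)\cap Q=\maxmatchfun(P,Q,f)$, whereas the paper applies Lemma~\ref{f-MP-equals-MM-lemma} to $f^{-1}$ and cancels $f\circ f^{-1}$ to identify it with $f(\maxmatchfun(Q,P,f^{-1}))$; both routes are equally short and valid.
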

\begin{proof}
From Lemma~\ref{f-MP-equals-MM-lemma}, we know that 
$$\maxmatchfun(P,Q,f) = f(\maxpatfun(P,Q,f))\,.$$ 
From Lemma~\ref{MP-equals-inv-f-Q-cap-P-lemma}, we know that 
$$\maxpatfun(P,Q,f) = P \cap f^{-1}(Q)\,,$$
which implies, from Definition~\ref{maximal-match-definition}, that 
$$\maxpatfun(P,Q,f) = \maxmatchfun(Q,P,f^{-1})\,.$$
It follows that $$f(\maxpatfun(P,Q,f)) = f(\maxmatchfun(Q,P,f^{-1}))\,.$$ 
From Lemma~\ref{f-MP-equals-MM-lemma} we know that $$f^{-1}(\maxpatfun(Q,P,f^{-1}))=\maxmatchfun(Q,P,f^{-1})\,.$$
Therefore 
$$f(f^{-1}(\maxpatfun(Q,P,f^{-1})))=f(\maxmatchfun(Q,P,f^{-1}))\,,$$ 
which implies, by \eqref{f-inv-f-p-equals-p-eq}, that 
$$\maxpatfun(Q,P,f^{-1})=f(\maxmatchfun(Q,P,f^{-1}))\,.$$ 
Therefore,
$$
\maxmatchfun(P,Q,f) = f(\maxpatfun(P,Q,f))=f(\maxmatchfun(Q,P,f^{-1}))=\maxpatfun(Q,P,f^{-1})\,.
$$
If $P=Q=D$, it follows directly that
$$
\maxmatchfun(D,f) = f(\maxpatfun(D,f))=f(\maxmatchfun(D,f^{-1}))=\maxpatfun(D,f^{-1})\,.
$$\qed
\end{proof}
For example, suppose $P$ is the set of points in Fig.~\ref{theorem21-example}(a) and $Q$ is the set of points in Fig.~\ref{theorem21-example}(b). Let $f$ be a translation by the vector, $\langle0,1\rangle$, so that $f^{-1}$ is a translation by the vector, $\langle0,-1\rangle$. The maximal match of $P$ in $Q$ for $f$, $\maxmatchfun(P,Q,f)$, is the set of points indicated by open circles in Fig.~\ref{theorem21-example}(b), which is also equal to the maximal pattern in $Q$ for $P$ and $f^{-1}$. Also, the set of points indicated by crosses in Fig.~\ref{theorem21-example}(a) is the maximal pattern in $P$ for $Q$ and $f$, $\maxpatfun(P,Q,f)$, which is equal to the maximal match of $Q$ in $P$ for $f^{-1}$. Note that, if we transform the set of points indicated by crosses in Fig.~\ref{theorem21-example}(a) by $f$, we obtain the set of points indicated by open circles in Fig.~\ref{theorem21-example}(b).

\begin{figure}[t]
	\begin{center}
		\subfloat[$P$]{\resizebox{4cm}{!}{
			\begin{tikzpicture}

			\draw[thick] (-.1,0) -- (4,0);
			\draw[thick] (0,-.1) -- (0,4);		
			
			\draw (2,-.6) node{$x$};
			\draw (-.6,2) node{$y$};
			
			\foreach \x in {1,...,4} {
				\draw[very thin] (\x,0) -- (\x,4);
				\draw[thick] (\x,-.1) -- (\x,0);
				\draw[very thin] (0,\x) -- (4,\x);
				\draw[thick] (-.1,\x) -- (0,\x);
			}
			\foreach \x in {0,1,...,4} {
				\draw (\x,-.1) node[anchor=north]{\x};
				\draw (-.1,\x) node[anchor=east]{\x};
			}

			\drawcross11{black};
			\drawcross13{black};
			\drawcross43{black};
			\filldraw[black] (4,1) circle[radius=.1];
			\end{tikzpicture}
		}}$\qquad$
		\subfloat[$Q$]{\resizebox{4cm}{!}{
			\begin{tikzpicture}
			
			\draw[thick] (-.1,0) -- (4,0);
			\draw[thick] (0,-.1) -- (0,4);		
			
			\draw (2,-.6) node{$x$};
			\draw (-.6,2) node{$y$};
			
			\foreach \x in {1,...,4} {
				\draw[very thin] (\x,0) -- (\x,4);
				\draw[thick] (\x,-.1) -- (\x,0);
				\draw[very thin] (0,\x) -- (4,\x);
				\draw[thick] (-.1,\x) -- (0,\x);
			}
			\foreach \x in {0,1,...,4} {
				\draw (\x,-.1) node[anchor=north]{\x};
				\draw (-.1,\x) node[anchor=east]{\x};
			}

			\filldraw[black] (4,1) circle[radius=.1];
			\draw  [thick] (1,2) circle[radius=.1]
					  (1,4) circle[radius=.1]
					  (4,4) circle[radius=.1];			
			\end{tikzpicture}
		}}
	\end{center}
	\caption{Two points sets illustrating Theorem~\ref{MM-MP-theorem}. See main text for details.}
	\label{theorem21-example}
\end{figure}

\subsection{Maximal patterns and partial symmetries}
Maximal patterns correspond to partial symmetries. A geometric object is said to have a particular type of symmetry if there exists a transformation of the corresponding type (e.g., a reflection or a rotation) that maps the object onto itself. More generally, we can say that an object or set of $k$-dimensional points, $D$, is symmetrical (or invariant) with respect to some class of transformations, $F$, if there exists a transformation, $f\in F$, such that $f(D)=D$. In this case, we can say that $f$ defines a {\em total symmetry\/} on $D$. More generally, a transformation, $g\in F$, defines a {\em partial symmetry\/} on $D$ if and only if $g(D)\cap D\neq\emptyset$ and $g(D)\neq D$. From Definition~\ref{maximal-pattern-definition}, it follows that a transformation $f$ defines a {\em partial symmetry\/} on a point set, $D$, if and only if $\maxpatfun(D,f)\neq\emptyset$ and $\maxpatfun(D,f)\neq D$. In other words, $f$ defines a partial symmetry on $D$ if and only if $f$ {\em occurs\/} in $D$ but $f(D)$ is not equal to $D$. If $\maxpatfun(D,f)=D$, then $f$ defines a total symmetry on $D$.

%

\section{Occurrence sets}
\label{main-occurrence-sets-section}
Figure~\ref{RAVEL-HAYDN-GROUND-TRUTH-CPO-GT} shows 31 ``occurrences'' of the HAYDN theme in Ravel's {\em Menuet sur le nom d'Haydn}, where the term, ``occurrence'' here is to be interpreted in its colloquial sense rather than in the precise sense of Definition~\ref{occurrence-definition}. Each of the ``occurrences'' in Fig.~\ref{RAVEL-HAYDN-GROUND-TRUTH-CPO-GT} is to be understood to be a ``transformation'' (again, in the colloquial sense) of the theme provided by Jules Écorcheville, editor of the {\em Revue musicale mensuelle de la Soci{\'e}t{\'e} Internationale de Musique\/} ({\em RSIM\/}), in which the piece was first published in 1910 \citep{HommageAJosephHaydn1910}. Écorcheville's theme is shown in Fig.~\ref{haydn-hommage-explanatory-note}. However, as discussed in \cite{MeredithArxivChapter12026}, it is not always easy to decide whether or not a given set of notes should be considered an ``occurrence'' of some given musical pattern.

\begin{figure}[t]
\centering
\resizebox{.7\linewidth}{!}{\includegraphics{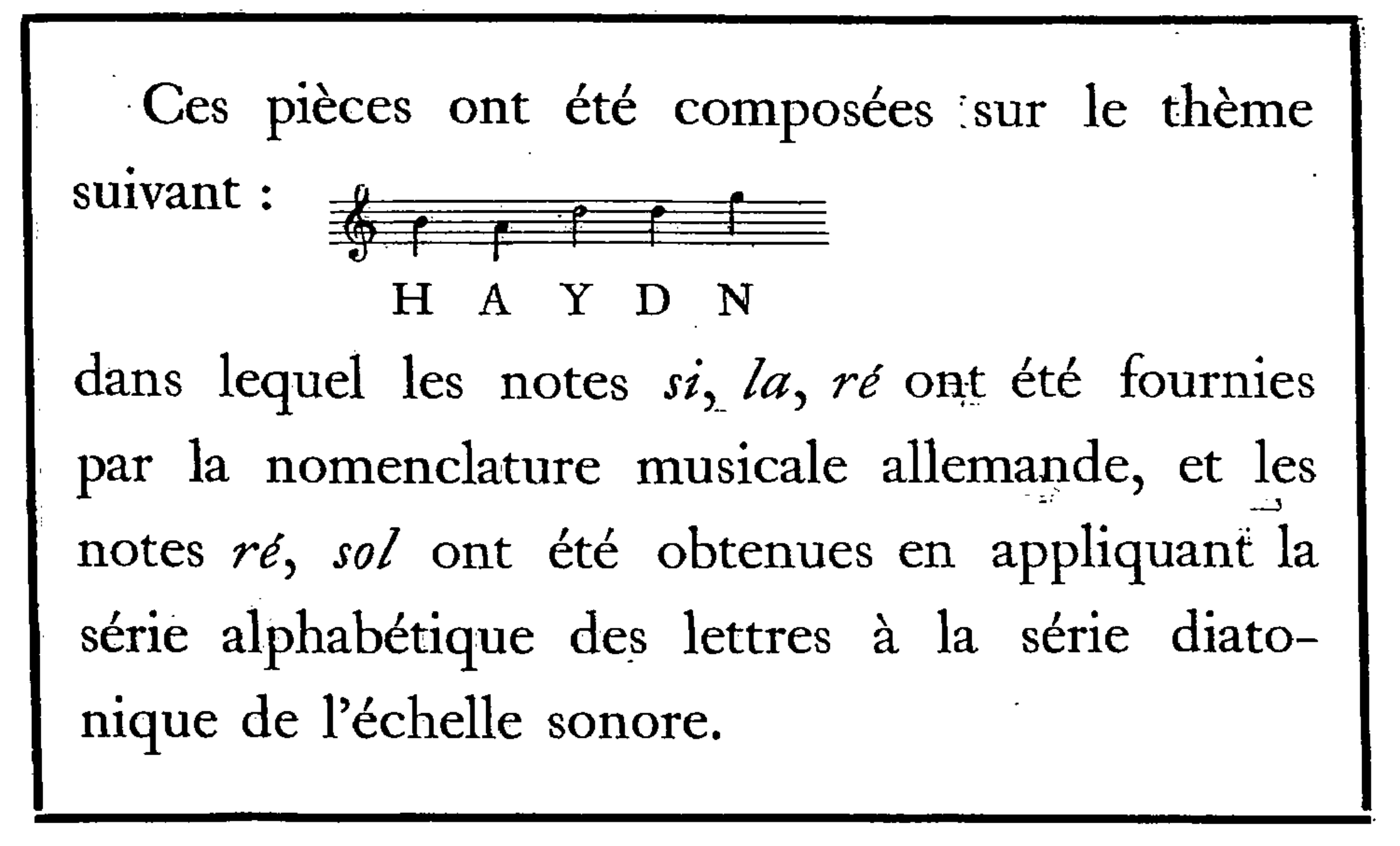}}
\caption{Explanatory note on page 2 of the 15 January 1910 issue of the {\em Revue musicale mensuelle de la Soci{\'e}t{\'e} Internationale de Musique}, preceding the collection of six pieces constituting the ``Hommage à Joseph Haydn''. Translated into English, the note reads as follows: ``These pieces have been composed on the following theme ... in which the notes B, A, D have been supplied by German musical nomenclature, and the notes D and G have been obtained by applying the sequence of letters in the alphabet to the diatonic series of the musical scale.''}
\label{haydn-hommage-explanatory-note}
\end{figure}

In this section, we continue our explication of the notions of a match and an occurrence by formalising the concept of an {\em occurrence set\/} and generalizing this concept to a {\em heterogeneous occurrence set}.

\subsection{Occurrence sets}
\label{occurrence-sets-section}
We define the concept of an {\em occurrence set\/} as follows.

\begin{definition}[Occurrence set]
Given a pattern, $P\subset\reals^k$, a dataset, $D\subset\reals^k$ and a $k$-dimensional transformation class, $F$, we define the {\em occurrence set\/} of $P$ in $D$ with respect to $F$ to be
\begin{equation}
\occsetfun(P,D,F) = \left\lbrace f(P): f(P)\subseteq D \land f\in F\right\rbrace\,.
\label{occurrence-set-eq}
\end{equation}
\label{occurrence-set-def}
\end{definition}
That is, the occurrence set of a pattern, $P$, in a dataset, $D$, with respect to a transformation class, $F$, contains all and only the occurrences (in the strict sense of Definition~\ref{occurrence-definition}) of $P$ in $D$ with respect to $F$.
Note that, in general, there may be more than one transformation in $\tc$ that maps $P$ onto a given occurrence of $P$ in $D$. The cardinality of the occurrence set of $P$ in $D$ with respect to $F$ will therefore always be less than or equal to the cardinality of the set of transformations in $F$ that map $P$ onto occurrences of $P$ in $D$. Note also that $P$ is not necessarily a subset of $D$.

Figure~\ref{occurrence-set-example} shows a set of twenty, 2-dimensional points with four pentagonal shapes indicated in different colours. The red, green and blue shapes are occurrences of the black shape with respect to the transformation class $\FTwoSTR$. The set of four shapes collectively form the occurrence set of the black shape in this point set with respect to $\FTwoSTR$---i.e., there are no other occurrences of the black shape in this dataset with respect to $\FTwoSTR$ apart from the four indicated.


An occurrence set, $\occsetfun(P,D,F)$, can be precisely specified by providing an explicit listing of the points in 
$P$, along with a set, $\mathit\Psi$, of transformation parameter vectors, specifying, for each occurrence of $P$, a transformation in $F$ that maps $P$ onto that occurrence. We call such an encoding an {\em ordered-pair encoding\/} of an occurrence set, and we define it as follows.
\begin{definition}[Ordered-pair encoding of an occurrence set]
Given a pattern, $P\subset\reals^k$, a dataset, $D\subset\reals^k$, and a $k$-dimensional transformation class, $F$, then the {\em ordered-pair encoding\/} of the occurrence set, $\occsetfun(P,D,F)$, is defined to be the unique ordered pair, $\langle P,\mathit\Psi\rangle$, that satisfies the following conditions:
\begin{enumerate}
\item $\mathit\Psi\subseteq\lbrace\sigma:\phi_F(\sigma,P)\subseteq D\rbrace$,
\item $|\mathit\Psi|=|\occsetfun(P,D,F)|$,
\item $\occsetfun(P,D,F)=\bigcup_{\sigma\in\mathit\Psi}\left\lbrace\phi_F(\sigma,P) \right\rbrace$, and
\item for each $\sigma\in\mathit\Psi$, there must not exist any transformation, $f\in F$, such that $\sigma_F(f)<\sigma$ and $f(P)=\phi_F(\sigma,P)$.
\end{enumerate}
\label{occ-set-ord-pair-def}
\end{definition}
Note that there can be more than one ordered pair, $\langle P,\mathit\Psi\rangle$, that satisfies the first three of the four conditions in Def.~\ref{occ-set-ord-pair-def}, since there can, in general, be more than one transformation in $F$ that maps a given pattern onto a particular occurrence of that pattern. The fourth condition ensures that each transformation parameter vector in an ordered-pair encoding is the {\em lexicographically least\/} vector whose corresponding transformation maps $P$ onto that particular occurrence.
For example, the ordered-pair encoding of the occurrence set in Fig.~\ref{occurrence-set-example} would be
\begin{equation}
\begin{split}
\langle&\lbrace\langle1,1\rangle,\langle1,2\rangle,\langle2,3\rangle,\langle4,1\rangle,\langle4,2\rangle\rbrace,\\ &\lbrace\langle-2, 13, 0, 1\rangle, \langle-1, 5, -7, -1\rangle, \langle1, 0, 0, 1\rangle, \langle2, 3, -8, -1\rangle\rbrace\rangle\,.
\end{split}
\label{P-Psi-encoding}
\end{equation}
Typically, an ordered-pair encoding gives us a shorter description of the occurrence set than one would obtain by using an extensional encoding where each point in each occurrence is explicitly recorded. For example, such an extensional encoding of the occurrence set in Fig.~\ref{occurrence-set-example} would look like this:
\begin{equation}
\begin{split}
\lbrace&\lbrace\langle1,1\rangle,\langle1,2\rangle,\langle2,3\rangle,\langle4,1\rangle,\langle4,2\rangle\rbrace,\\
&\lbrace\langle1,5\rangle,\langle1,6\rangle,\langle3,4\rangle,\langle4,5\rangle,\langle4,6\rangle\rbrace,\\
&\lbrace\langle5,1\rangle,\langle5,2\rangle,\langle9,3\rangle,\langle11,1\rangle,\langle11,2\rangle\rbrace,\\
&\lbrace\langle5,6\rangle,\langle5,7\rangle,\langle7,5\rangle,\langle11,6\rangle,\langle11,7\rangle\rbrace
\rbrace\,.
\end{split}
\label{extensional-encoding}
\end{equation}
More specifically, such an extensional description, in which each point in each occurrence is explicitly listed, requires $k|\occsetfun(P,D,F)|\times|P|$ real values to be recorded, where $k$ is the dimensionality of the points in $P$ and $D$---that is, $P,D\subset\reals^k$. For example, the encoding in \eqref{extensional-encoding} requires 40 independent numerical values to be recorded. On the other hand, the ordered-pair encoding, $\langle P,\mathit\Psi\rangle$, only requires $k|P|+\lambda(F)|\occsetfun(P,D,F)|$ values to be recorded, where $\lambda(F)$ is the parameter vector length  of $F$. Thus, in \eqref{P-Psi-encoding} there are only 26 independent values. Typically, for the transformation classes considered in this paper, the ordered-pair encoding of an occurrence set is much shorter than its extensional description.

\begin{figure}[t]
	\begin{center}
		\resizebox{10cm}{!}{
			\begin{tikzpicture}

			\draw[thick] (-.1,0) -- (12,0);
			\draw[thick] (0,-.1) -- (0,8);		
			
			\draw (12,-1) node{$x$};
			\draw (-1,8) node{$y$};
			
			\foreach \x in {0,1,...,12} {
				\draw[very thin] (\x,0) -- (\x,8);
				\draw[thick] (\x,-.1) -- (\x,0);
				\draw (\x,-.2) node[anchor=north]{\x};
			}
			\foreach \x in {0,1,...,8} {
				\draw (-.2,\x) node[anchor=east]{\x};
				\draw[very thin] (0,\x) -- (12,\x);
				\draw[thick] (-.1,\x) -- (0,\x);
			}

\draw[thick] (1,1)--(1,2)--(2,3)--(4,2)--(4,1)--(1,1);

\draw[thick,red] (1,6)
--(1,5)
--(3,4)
--(4,5)
--(4,6)
-- cycle;

\draw[thick,blue] 
(5,1)
--(5,2)
--(9,3)
--(11,2)
--(11,1)
-- cycle;

\draw[thick,green!50!black] 
(5,6)
--(5,7)
--(11,7)
--(11,6)
--(7,5)
-- cycle;

\drawcross{1}{1}{black}
\drawcross{1}{2}{black}
\drawcross{1}{5}{black}
\drawcross{1}{6}{black}
\drawcross{2}{3}{black}
\drawcross{3}{4}{black}
\drawcross{4}{1}{black}
\drawcross{4}{2}{black}
\drawcross{4}{5}{black}
\drawcross{4}{6}{black}
\drawcross{5}{1}{black}
\drawcross{5}{2}{black}
\drawcross{5}{6}{black}
\drawcross{5}{7}{black}
\drawcross{7}{5}{black}
\drawcross{9}{3}{black}
\drawcross{11}{1}{black}
\drawcross{11}{2}{black}
\drawcross{11}{6}{black}
\drawcross{11}{7}{black}

			\end{tikzpicture}
		}
	\end{center}
	\caption{An example of an occurrence set. See main text for details.}
	\label{occurrence-set-example}
\end{figure}

We define the {\em covered set\/} of an occurrence set as follows.
\begin{definition}[Covered set of an occurrence set]
If $\mathcal S$ is an occurrence set, then we define the {\em covered set\/} of $\mathcal S$ to be the point set,
\begin{equation}
\covsetfun(\mathcal S) = \bigcup_{Q\in\mathcal S}Q\,.
\label{covered-set-eq}
\end{equation}
We also define the {\em coverage\/} of $\mathcal S$ to be the cardinality of its covered set, $\left|\covsetfun(\mathcal S)\right|$.
\label{covered-set-definition}
\end{definition}
The covered set of an occurrence set, $\occsetfun(P,D,F)$, can therefore be recovered from its ordered-pair encoding, $\langle P,\mathit\Psi\rangle$, by using the following equation:
\begin{equation}
\covsetfun(\langle P,\mathit\Psi\rangle)=\bigcup_{\sigma\in\mathit\Psi}\phi_F(\sigma,P)\,.
\end{equation}
The covered set of the occurrence set in Fig.~\ref{occurrence-set-example} is therefore the entire set of 20 points shown in the figure, since every point occurs within an occurrence of the black pattern. The coverage of this occurrence set is therefore 20.

As pointed out above, the number of independent numerical values that need to be recorded when using an ordered-pair encoding of an occurrence set is typically much less than the number of values required to provide an extensional description of the occurrence set in which each point in each occurrence is explicitly listed. In fact, the ordered-pair encoding also often turns out to be much shorter than an extensional description of the {\em covered set\/} of an occurrence set---that is, a simple listing of the points in the covered set. This provides the basis of a compression-based approach to music analysis (introduced in \cite{MeredithArxiv2022,MeredithHCII2023}) in which we try to find the shortest description of a point set representing music (such as the one in Fig.~\ref{RAVEL-HAYDN-GROUND-TRUTH-CPO-GT}) that takes the form of a set of occurrence sets that collectively cover the point set, where each occurrence set is represented by its ordered-pair encoding. In order to do this, we therefore need a way of characterising how much compression can be obtained by representing an occurrence set by its ordered-pair encoding as opposed to extensionally listing the points that it covers. We therefore define the {\em compression factor\/} of an occurrence set as follows.

\begin{definition}[Compression factor of an occurrence set]
Given a pattern, $P\subset\reals^k$, a dataset, $D\subset\reals^k$ and a $k$-dimensional transformation class, $F$, we define the {\em compression factor\/} of the occurrence set, $\occsetfun(P,D,F)$, as follows:
\begin{equation}
\compfacfun(\occsetfun(P,D,F)) = \frac{k|\covsetfun(\occsetfun(P,D,F))|}{k|P|+\lambda(F)|\occsetfun(P,D,F)|}\,.
\label{compression-factor-eq}
\end{equation}
\label{compression-factor-definition}
\end{definition}
For example, the compression factor of the occurrence set in Fig.~\ref{occurrence-set-example} is $\frac{40}{26}\approx1.54$.

Given our interest in achieving short encodings of occurrence sets, it may seem inefficient to require that the set of transformation parameter vectors, $\mathit\Psi$, in an ordered-pair encoding must contain a transformation parameter vector for {\em every\/} occurrence of $P$ in the occurrence set. Why do we not instead choose $P$ to be one of the occurrences in the set, and then only provide parameter vectors for transformations that map this occurrence onto the other occurrences? This would reduce the number of real values in the encoding by $\lambda(F)$.

There are at least two reasons for this design choice. 
The first is that, in a pattern-matching setting, where we are interested in the occurrences of a query pattern, $Q$, in a dataset, $D$, it may be that $Q$ itself is not a subset of $D$. In this case, if we required that the reference pattern in the ordered-pair encoding must be an occurrence in the occurrence set, then we would have to use one of the occurrences of $Q$ in $D$ rather than $Q$ itself, which would result in $\mathit\Psi$ failing to provide us with the transformations required to map the query pattern onto its occurrences---which is typically what we are most interested in. The second reason will be given in the next section, where we generalize the notion of an occurrence set to {\em heterogeneous occurrence sets\/} that are defined with respect to {\em sets\/} of transformation classes.


\subsection{Heterogeneous occurrence sets}
\label{heterogeneous-occurrence-sets-section}
So far in this paper, we have assumed that we are only interested in matches and occurrences that are with respect to a {\em single\/} transformation class. But what if, in a particular context, it is more convenient to define the class of transformations that we are interested in as being the {\em union\/} of two or more transformation classes?
In this case, we might want to find matches, occurrences and occurrence sets with respect to a {\em set\/} of transformation classes, $\mathcal F=\lbrace F_1,F_2,\ldots,F_{|\mathcal F|}\rbrace$.

We therefore define the concept of a {\em heterogeneous occurrence set\/} as follows.
\begin{definition}[Heterogeneous occurrence set]
Given a pattern, $P\subset \reals^k$, a dataset, $D\subset \reals^k$, and a set of $k$-dimensional transformation classes, $\mathcal F=\lbrace F_1, F_2,\ldots,F_{|\mathcal F|}\rbrace$, we define the {\em heterogeneous occurrence set\/} of $P$ in $D$ with respect to $\mathcal F$ to be
\begin{equation}
\occsetfun(P,D,\mathcal F)=\left\lbrace f(P): f(P)\subseteq D \land\left(\exists F\in\mathcal F:f\in F\right)\right\rbrace\,.\label{het-occ-set-eq}\end{equation}\label{het-occ-set-def}\end{definition}\vskip-.2in
We allow ourselves to also use the term {\em homogeneous occurrence set\/} for an ordinary occurrence set that is defined with respect to a single transformation class (i.e., as defined in Def.~\ref{occurrence-set-def}), because, in this case, the transformations that transform the pattern onto its occurrences are ``homogeneous'' in the sense that they all come from the same transformation class. However, we only use the term ``homogeneous occurrence set'' when we want to contrast it with a heterogeneous occurrence set.

As explained in Section~\ref{occurrence-sets-section} above, an occurrence set with respect to a single transformation class, $\occsetfun(P,D,F)$, can be encoded in a compact manner as an ordered pair, $\langle P,\mathit{\Psi}\rangle$, where $\mathit\Psi$ is a set of transformation parameter vectors for a set of transformations in $F$ that map $P$ onto its occurrences in $D$. Generalizing this scheme to heterogeneous occurrence sets requires us to be able to handle a {\em set\/} of transformation classes, $\mathcal F=\lbrace F_1,F_2,\ldots,F_{|\mathcal F|}\rbrace$. We can do this by replacing $\mathit{\Psi}$ in the ordered-pair encoding of a homogeneous occurrence set with a {\em sequence\/} of sets of transformation parameter vectors, $\langle\mathit\Psi_1, \mathit\Psi_2,\ldots,\mathit\Psi_{|\mathcal F|}\rangle$, one set for each of the transformation classes in $\mathcal F$.
%
%
The first step in doing this is to construct a partially ordered set, $\mathbf \Phi$, obtained by sorting $\mathcal F$ so that the parameter vector length of a transformation class in $\mathbf \Phi$ is greater than or equal to the parameter vector lengths of all the classes occurring earlier in $\mathbf\Phi$. Note that, in general, there may be more than one ordering of the transformation classes in $\mathcal F$ that satisfies this partial order. The ordering used in any particular case must therefore be specified, and we denote  the zero-based index of $\tc$ in $\mathbf\Phi$ by $\gamma(\tc,\mathbf\Phi)$. That is, $\mathbf\Phi[\gamma(\tc,\mathbf\Phi)]=\tc$.

Suppose we have a set of transformation classes, $\mathcal F=\lbrace F_1, F_2, F_3\rbrace$, and suppose that $\mathbf\Phi=\langle F_1,F_2,F_3\rangle$, so that we know that $\lambda(F_3)\ge\lambda(F_2)\ge\lambda(F_1)$. Suppose further that we have a set of four occurrences, $\lbrace Q_1,Q_2, Q_3, Q_4\rbrace$, of a pattern, $P$, in a dataset, $D$, where $Q_i=f_i(P)$ and where $f_1,f_2\in F_1$, $f_3\in F_2$ and $f_4\in F_3$.
We can encode the set of occurrences $\left\lbrace Q_1,Q_2,Q_3,Q_4\right\rbrace$ as the ordered triple, 
\begin{equation}
\langle P, \mathbf\Phi, \langle\lbrace\sigma_{F_1}(f_1), \sigma_{F_1}(f_2)\rbrace, \lbrace \sigma_{F_2}(f_3)\rbrace, \lbrace \sigma_{F_3}(f_4)\rbrace \rangle\rangle\,,
\label{het-occ-set-ord-pair-ex}
\end{equation}
where the $n$th element of the sequence that forms the third element in this triple is the set of transformation parameter vectors for the occurrences of $P$ onto which $P$ is mapped by transformations in the $n$th element of $\mathbf\Phi$.
Note that, in general, there may exist one or more pairs of patterns, $Q_x,Q_y$, in $\occsetfun(P,D,\mathcal F)$ for which there exists {\em no\/} transformation in {\em any\/} of the transformation classes in $\mathcal F$ that map $Q_x$ onto $Q_y$. This is one reason why we choose to use the query pattern, $P$, itself rather than an occurrence of $P$ in $D$ as the first element in the ordered-pair encoding of an occurrence set, regardless of whether $P$ is a subset of $D$. 

\sloppypar As illustrated in \eqref{het-occ-set-ord-pair-ex}, a heterogeneous occurrence set, $\occsetfun(P,D,\mathcal F)$, can be encoded in a compact manner as an ordered triple, $\langle P,\mathbf\Phi,\mathbf Z\rangle$, where $\mathbf Z=\langle\mathit\Psi_1,\mathit\Psi_2,\ldots,\mathit\Psi_{\len(\mathbf\Phi)}\rangle$, and where $\mathbf Z[i]$ contains the transformation parameter vectors for transformations in $\mathbf\Phi[i]$ that map $P$ onto some of its occurrences. 
Recall that $\mathbf\Phi$ contains all and only the transformation classes in $\mathcal F$, ordered so that $\lambda(F_1)<\lambda(F_2)\Rightarrow\gamma(F_1,\mathbf\Phi)<\gamma(F_2,\mathbf\Phi)$ for all $F_1,F_2\in\mathcal F$. 
We call such an encoding simply an {\em ordered-triple encoding of a heterogeneous occurrence set}.
For each occurrence, $Q$, of $P$ in $\occsetfun(P,D,\mathcal F)$ there exists exactly one transformation parameter vector, $\sigma_F(f)$ in $\mathbf Z[\gamma(F,\mathbf\Phi)]$, such that $Q=f(P)$.
A heterogeneous occurrence set can therefore be recovered from its ordered-triple encoding, $\langle P,\mathbf\Phi,\mathbf Z\rangle$, using the formula,
$$
\bigcup_{i=0}^{\len(\mathbf\Phi)-1}\left(\bigcup_{\sigma\in\mathbf Z[i]}\lbrace\phi_{\mathbf\Phi[i]}(\sigma,P)\rbrace\right)\,.
$$
Since specifying a sequence, $\mathbf\Phi$, of the transformation classes in $\mathcal F$ is essential in order to do anything useful with a heterogeneous occurrence set, we sometimes write $\occsetfun(P,D,\mathbf\Phi)$ instead of $\occsetfun(P,D,\mathcal F)$.
On an analogy with the concept of the covered set of an occurrence set (see Def.~\ref{covered-set-definition}), we define the covered set of a heterogeneous occurrence set as follows.
\begin{definition}[Covered set of a heterogeneous occurrence set]
If $\mathcal S$ is a heterogeneous occurrence set, then we define the {\em covered set\/} of $\mathcal S$ to be the point set,
\begin{equation}
\covsetfun(\mathcal S) = \bigcup_{Q\in\mathcal S}Q\,.
\label{het-occ-set-cov-set-eq}
\end{equation}
We also define the {\em coverage\/} of $\mathcal S$ to be the cardinality of its covered set, $\left|\covsetfun(\mathcal S)\right|$.
\label{het-occ-set-cov-set-def}
\end{definition}
Note that the definition of the covered set of a heterogeneous occurrence set is essentially identical to that of an occurrence set.

The covered set of a heterogeneous occurrence set, $\occsetfun(P,D,\mathbf\Phi)$, can be recovered from its ordered-triple encoding, $\langle P,\mathbf\Phi, \mathbf Z\rangle$, using the following equation:
\begin{equation}
\covsetfun(\langle P,\mathbf\Phi,\mathbf Z\rangle) = \bigcup_{i=0}^{\len(\mathbf\Phi)-1}\left(\bigcup_{\sigma\in\mathbf Z[i]}\phi_{\mathbf\Phi[i]}(\sigma,P)\right)\,.
\end{equation}

On an analogy with the compression factor of an occurrence set, as defined in Def.~\ref{compression-factor-definition}, we define the compression factor of a heterogeneous occurrence set as follows.

\begin{definition}[Compression factor of a heterogeneous occurrence set]
Given a pattern, $P\subset\reals^k$, a dataset, $D\subset\reals^k$ and a set of $k$-dimensional transformation classes, $\mathcal F$, we define an ordered set of transformation classes, $\mathbf\Phi$, satisfying the following constraints:
\begin{enumerate}
\item $F\in\mathcal F\Leftrightarrow F\in\mathbf\Phi$,
\item $\len(\mathbf\Phi)=|\mathcal F|$, and
\item $\forall F_1, F_2\in\mathcal F, \lambda(F_1)<\lambda(F_2)\Rightarrow\gamma(F_1,\mathbf\Phi)<\gamma(F_2,\mathbf\Phi)$, where $\gamma(F,\mathbf\Phi)$ is the zero-based index of $F$ in $\mathbf\Phi$.
\end{enumerate}
We then derive an ordered-triple encoding of $\occsetfun(P,D,\mathbf\Phi)$ that we denote by $\langle P,\mathbf\Phi,\mathbf Z\rangle$.
We then define the {\em compression factor\/} of the heterogeneous occurrence set, $\occsetfun(P,D,\mathbf\Phi)$, as follows:
\begin{equation}
\compfacfun(\occsetfun(P,D,\mathbf\Phi)) = \frac{k|\covsetfun(\occsetfun(P,D,\mathbf\Phi))|}{k|P|+\len(\mathbf\Phi)+\sum_{i = 0}^{\len(\mathbf \Phi)-1}\len(\mathbf Z[i])\lambda(\mathbf\Phi[i])}\,.
\label{het-occ-set-cf-eq}
\end{equation}
\label{het-occ-set-cf-def}
\end{definition}

It is sometimes (but not always) possible to obtain a more compact encoding of an occurrence set using a heterogeneous occurrence set than it is by using an ordinary, homogeneous occurrence set. This can happen if there are a sufficient number of occurrences of the query pattern where two transformations from different transformation classes with different parameter vector lengths  map the query pattern onto {\em the same\/} occurrence. In such cases, we can choose to encode the occurrence using the least complex transformation---i.e., the one with the shortest parameter vector length.

In the following, we first consider a case where encoding a set of occurrences as a heterogeneous occurrence set does {\em not\/} lead to more compression. Then we show that if the occurrence set is augmented with further occurrences that are related to the query pattern by simple transformations, then encoding as a heterogeneous occurrence set {\em can\/} lead to better compression. 

Suppose we encode the occurrence set in Fig.~\ref{occurrence-set-example} as an ordered-triple encoding of a heterogeneous occurrence set, $\langle P,\mathbf\Phi, \mathbf Z\rangle$, as follows. We first define that $\mathcal F=\lbrace \FTwoT,\FTwoST,\FTwoSTR\rbrace$ and that $\mathbf\Phi=\langle\FTwoT,\FTwoST,\FTwoSTR\rangle$. In this case, $P$ is the black pattern, $\lbrace\langle1,1\rangle,\langle1,2\rangle,\langle2,3\rangle,\langle4,1\rangle,\langle4,2\rangle\rbrace$. The black pattern, $P$, is mapped onto itself by a simple translation by $\langle0,0\rangle$. This occurrence can therefore be represented by the transformation parameter vector, $\langle0,0\rangle$, in $\mathbf Z[0]$. The blue pattern in Fig.~\ref{occurrence-set-example} can be obtained from the black pattern by first scaling by a factor of $-2$ and then translating by $\langle13,0\rangle$. There is no need for any inversion in this case, so we can represent the blue pattern as being the result of a transformation in $\FTwoST$, whose parameter vector length is 3. We can therefore represent the blue pattern by the transformation parameter vector, $\langle-2,13,0\rangle$, in $\mathbf Z[1]$. To obtain the red pattern from the black pattern, we need to scale by a factor of $-1$, then translate by $\langle5,-7\rangle$ and then reflect in the $x$-axis. This transformation is in $\FTwoSTR= \mathbf\Phi[2]$. We can therefore represent this occurrence by the transformation parameter vector, $\langle-1,5,-7,-1\rangle$, in $\mathbf Z[2]$. Finally, $P$ can be mapped onto the green pattern by the transformation in $\FTwoSTR=\mathbf\Phi[2]$ that has the transformation parameter vector, $\langle2,3,-8,-1\rangle$. This gives the following ordered-triple encoding for this heterogeneous occurrence set:
\begin{equation}
\begin{split}
\langle&\lbrace\langle1,1\rangle,\langle1,2\rangle,\langle2,3\rangle,\langle4,1\rangle,\langle4,2\rangle\rbrace,\\ 
&\langle\FTwoT,\FTwoST,\FTwoSTR\rangle,\\
&\langle\lbrace\langle0,0\rangle\rbrace,\lbrace\langle-2,13,0\rangle\rbrace,\lbrace\langle-1,5,-7,-1\rangle,\langle2,3,-8,-1\rangle\rbrace\rangle\rangle\,.
\end{split}
\label{P-Z-encoding}
\end{equation}
Using \eqref{het-occ-set-cf-eq}, we can compute the compression factor of this heterogeneous occurrence set to be $\frac{40}{10+3+13}=\frac{40}{26}\approx1.538$. In this case, therefore, using the ordered-triple encoding of the heterogeneous occurrence set does not result in any more compression than one can achieve by using the ordered-pair encoding of the homogeneous occurrence set given in \eqref{P-Psi-encoding}. However, if there were more occurrences of the pattern and these occurrences were related to the pattern by transformations in transformation classes with parameter vector lengths less than that of $\FTwoSTR$, then we may achieve better compression using a heterogeneous occurrence set than by using $\FTwoSTR$ alone. 

\begin{figure}[t]
	\begin{center}
		\resizebox{10cm}{!}{
			\begin{tikzpicture}

			\draw[thick] (-.1,0) -- (12,0);
			\draw[thick] (0,-.1) -- (0,11);		
			
			\draw (12,-1) node{$x$};
			\draw (-1,11) node{$y$};
			
			\foreach \x in {0,1,...,12} {
				\draw[very thin] (\x,0) -- (\x,11);
				\draw[thick] (\x,-.1) -- (\x,0);
				\draw (\x,-.2) node[anchor=north]{\x};
			}
			\foreach \x in {0,1,...,11} {
				\draw (-.2,\x) node[anchor=east]{\x};
				\draw[very thin] (0,\x) -- (12,\x);
				\draw[thick] (-.1,\x) -- (0,\x);
			}

\draw[thick,cyan] (1,8)--(1,9)--(2,10)--(4,9)--(4,8)--(1,8);
\draw[thick,magenta] (5,9)--(5,10)--(6,11)--(8,10)--(8,9)--(5,9);

\draw[thick,red] (1,6)
--(1,5)
--(3,4)
--(4,5)
--(4,6)
-- cycle;

\draw[thick] 
(1,1)
--(1,2)
--(2,3)
--(4,2)
--(4,1)
-- cycle;

\draw[thick,blue] 
(5,1)
--(5,2)
--(9,3)
--(11,2)
--(11,1)
-- cycle;

\draw[thick,green!50!black] 
(5,6)
--(5,7)
--(11,7)
--(11,6)
--(7,5)
-- cycle;

\drawcross{1}{1}{black}
\drawcross{1}{2}{black}
\drawcross{1}{5}{black}
\drawcross{1}{6}{black}
\drawcross{2}{3}{black}
\drawcross{3}{4}{black}
\drawcross{4}{1}{black}
\drawcross{4}{2}{black}
\drawcross{4}{5}{black}
\drawcross{4}{6}{black}
\drawcross{5}{1}{black}
\drawcross{5}{2}{black}
\drawcross{5}{6}{black}
\drawcross{5}{7}{black}
\drawcross{7}{5}{black}
\drawcross{9}{3}{black}
\drawcross{11}{1}{black}
\drawcross{11}{2}{black}
\drawcross{11}{6}{black}
\drawcross{11}{7}{black}

\drawcross{1}{8}{black}
\drawcross{1}{9}{black}
\drawcross{2}{10}{black}
\drawcross{4}{8}{black}
\drawcross{4}{9}{black}

\drawcross{5}{9}{black}
\drawcross{5}{10}{black}
\drawcross{6}{11}{black}
\drawcross{8}{9}{black}
\drawcross{8}{10}{black}

			\end{tikzpicture}
		}
	\end{center}
	\caption{Another example of an occurrence set.}
	\label{occurrence-set-example-2}
\end{figure}

For example, in Fig.~\ref{occurrence-set-example-2} we have added two extra occurrences of the black pattern that are related to the black pattern by simple translations (i.e., transformations in $\FTwoT$). If we represent the six occurrences in Fig.~\ref{occurrence-set-example-2} as an ordinary occurrence set with respect to $\FTwoSTR$, then we get the ordered pair encoding,
\begin{equation}
\begin{split}
\langle&\lbrace\langle1,1\rangle,\langle1,2\rangle,\langle2,3\rangle,\langle4,1\rangle,\langle4,2\rangle\rbrace,\\ &\lbrace\langle-2, 13, 0, 1\rangle, \langle-1, 5, -7, -1\rangle, \langle1, 0, 0, 1\rangle, \langle1, 0, 7, 1\rangle,\langle1, 4, 8, 1\rangle,\langle2, 3, -8, -1\rangle\rbrace\rangle\,.
\end{split}
\label{P-Psi-encoding-of-occurrence-set-example-2}
\end{equation}
This encoding gives a compression factor of $\frac{60}{34}\approx 1.76$. However, if we set $\mathbf\Phi$ to be $\langle\FTwoT,\FTwoST,\FTwoSTR\rangle$, as above, then we can encode the six occurrences in Fig.~\ref{occurrence-set-example-2} as a heterogeneous occurrence set as follows:
{\small
\begin{equation}
\begin{split}
\langle&\lbrace\langle1,1\rangle,\langle1,2\rangle,\langle2,3\rangle,\langle4,1\rangle,\langle4,2\rangle\rbrace,\\ 
&\langle\FTwoT,\FTwoST,\FTwoSTR\rangle,\\
&\langle
	\lbrace
		\langle0, 0\rangle,
		\langle0, 7\rangle,
		\langle4,8\rangle
	\rbrace,
	\lbrace
		\langle-2, 13, 0\rangle
	\rbrace,
	\lbrace
		\langle-1, 5, -7, -1\rangle,
		\langle2, 3, -8, -1\rangle
	\rbrace
\rangle\,.
\end{split}
\label{P-Z-encoding-of-occurrence-set-example-2}
\end{equation}}
This ordered-triple encoding gives a compression factor of $\frac{60}{10+3+17}=\frac{60}{30}=2$. In this case, we can therefore achieve a higher compression factor by representing the set of occurrences as a heterogeneous occurrence set.

\section{Transformational equivalence}
\label{transformational-equivalence-section}

In 
\cite{MeredithArxivChapter12026}, inter-pattern transformation graphs were used to discover sets of occurrences of the HAYDN theme in Ravel's {\em Menuet\/} that are {\em transformationally equivalent}. For example, 
occurrences 3, 4, 8 and 9 form a set of transformationally equivalent patterns with respect to $\FTwoSTR$ in the CPO representation of the {\em Menuet}, shown in Fig.~\ref{RAVEL-HAYDN-GROUND-TRUTH-CPO-GT}. 

In this section, we begin by defining a number of concepts relating to transformational equivalence. We then consider the relationship between the group-theoretic properties of a transformation class and transformational equivalence. Finally, we identify which of the transformation classes defined in Section~\ref{section-transformation-classes} form groups under transformation composition and determine for which of these classes the occurrence sets form sets of transformationally equivalent patterns.\footnote{For definitions of standard mathematical terms relating to relations and equivalence classes, the reader is referred to \citet[pp.~1163--1166]{CormenEtAl2009}.}

\subsection{Definitions of concepts relating to transformational equivalence}

We begin by defining the {\em match relation\/} of a transformation class or set of transformation classes.
\begin{definition}[Match relation of a transformation class]
Let $F$ be a $k$-dimensional transformation class and let  $P,Q\subset\reals^k$. We define the binary relation, $\mathcal M_{F}$, which we call the {\em match relation of $F$}, as follows:
\begin{equation}
(P,Q)\in\mathcal M_{F}\Leftrightarrow\left(\exists f\in F: Q=f(P)\right)\,.
\label{match-rel-tc-eq}
\end{equation}
If $(P,Q)\in\mathcal M_F$, then we may write $P\rightarrow_F Q$ and $P\mathbin{{\mathcal M}_{F}}Q$ and we can say that $Q$ {\em is a match of\/} $P$ {\em with respect to\/} $F$. 
\label{match-rel-tc-def}
\end{definition}
Note that, if $F$ is a $k$-dimensional transformation class, then $\mathcal M_F\subset\left(2^{\reals^k}\times2^{\reals^k}\right)$, where $2^{\reals^k}$ is the universe of all possible sets of $k$-dimensional points (i.e., $2^{\reals^k}$ is the powerset of $\reals^k$).

\begin{definition}[Match relation of a set of transformation classes]
Let $\mathcal F$ be a set of $k$-dimensional transformation classes, and let  $P,Q\subset\reals^k$. We define the binary relation, $\mathcal M_{\mathcal F}$, which we call the {\em match relation of\/} $\mathcal F$, as follows:
\begin{equation}
(P,Q)\in\mathcal M_{\mathcal F}\Leftrightarrow\left(\exists F\in \mathcal F: P\rightarrow_F Q\right)\,.
\label{match-rel-tc-set-eq}
\end{equation}
If $(P,Q)\in\mathcal M_{\mathcal F}$, then we may write $P\rightarrow_{\mathcal F} Q$ or $P\mathbin{{\mathcal M}_{\mathcal F}}\thinspace Q$ and we can say that $Q$ {\em is a match of\/} $P$ {\em with respect to\/} $\mathcal F$. 
\label{match-rel-tc-set-def}
\end{definition}

\begin{definition}[Transformational equivalence]
Suppose we have a match relation, $\mathcal M_X$, where $X$ is either a $k$-dimensional transformation class or a set of $k$-dimensional transformation classes. We say that two patterns, $P,Q\subset\reals^k$, are {\em transformationally equivalent with respect to} $X$, denoted by $P\equiv_X Q$, if and only if $P\rightarrow_XQ$ and $Q\rightarrow_XP$.
\label{tran-eq-def}
\end{definition}

\begin{definition}[Set of transformationally equivalent patterns]
Suppose we have a match relation, $\mathcal M_X$, where $X$ is either a $k$-dimensional transformation class or a set of $k$-dimensional transformation classes. We say that a set of patterns, $\mathcal P=\lbrace P_1,P_2,\ldots P_{\ell}\rbrace$, where $P_i\subset\reals^k$ for all $P_i\in\mathcal P$, is a {\em set of patterns that are transformationally equivalent with respect to\/} $X$, if and only if $P_i\equiv_XP_j$ for all $P_i,P_j\in\mathcal P$.
\label{set-of-tran-eq-patterns-def}
\end{definition}

\begin{definition}[Transformational equivalence relation]
Suppose we have a match relation, $\mathcal M_X$, where $X$ is either a $k$-dimensional transformation class or a set of $k$-dimensional transformation classes.
If $\mathcal M_X$ is reflexive, symmetric and transitive, then it is, by definition, an equivalence relation in the general mathematical sense and we say that $\mathcal M_X$ is a {\em transformational equivalence relation}.
\label{tran-eq-rel-def}
\end{definition}

\begin{definition}[Transformational equivalence class]
Suppose that $\mathcal M_X$ is a transformational equivalence relation, where $X$ is either a $k$-dimensional transformation class or a set of $k$-dimensional transformation classes.
If $P\subset\reals^k$, then we define the {\em transformational equivalence class\/} of $P$ with respect to $X$, denoted by $\left[P\right]_X$, to be the set,
\begin{equation}
\left[P\right]_X=\left\lbrace Q\subset\reals^k: P\equiv_X Q\right\rbrace\,.
\label{tran-eq-class-eq}
\end{equation}
Note that, because $\mathcal M_X$ is an equivalence relation in the general mathematical sense, it follows that $\left[P\right]_X$ is also an equivalence class in the general mathematical sense. 
\label{tran-eq-class-def}
\end{definition}

It is well known \citep[see, e.g., ][p.~1164]{CormenEtAl2009} that an equivalence relation is the same as a partition. We can therefore simply state the following lemma.

\begin{lemma}
A transformational equivalence relation, $\mathcal M_X$, where $X$ is either a $k$-dimensional transformation class or a set of $k$-dimensional transformation classes, partitions the set $2^{\reals^k}$ exhaustively and exclusively into transformational equivalence classes.
\label{tran-eq-rel-part-lemma}\end{lemma}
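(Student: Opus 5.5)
The plan is to reduce Lemma~\ref{tran-eq-rel-part-lemma} to the standard fact that the equivalence classes of an equivalence relation on a set partition that set. We take the ground set to be $2^{\reals^k}$ and the relation to be $\mathcal M_X$, which by Definition~\ref{tran-eq-rel-def} is reflexive, symmetric and transitive. First, we observe that, since $\mathcal M_X$ is symmetric, $P\rightarrow_X Q$ already implies $Q\rightarrow_X P$. Hence, by Definition~\ref{tran-eq-def}, $P\equiv_X Q$ holds if and only if $(P,Q)\in\mathcal M_X$. The class $[P]_X$ of Definition~\ref{tran-eq-class-def} is therefore exactly the equivalence class of $P$ under $\mathcal M_X$ in the usual sense.

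Next we verify the two partition properties directly. For \emph{exhaustiveness}, reflexivity gives $P\equiv_X P$, so $P\in[P]_X$ for every $P\subseteq\reals^k$, and every element of $2^{\reals^k}$ lies in some class. For \emph{exclusiveness}, suppose $R\in[P]_X\cap[Q]_X$. Then $P\equiv_X R$ and $Q\equiv_X R$, and symmetry and transitivity give $P\equiv_X Q$. For any $S\in[Q]_X$, transitivity again yields $P\equiv_X S$, so $[Q]_X\subseteq[P]_X$, and the reverse inclusion holds by the same argument. Any two classes are therefore either identical or disjoint, which is the required partition.

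The only delicate point, and the one we expect to be the main obstacle, is bookkeeping rather than mathematics. Definition~\ref{tran-eq-class-def} writes $Q\subset\reals^k$, whereas the lemma speaks of all of $2^{\reals^k}$, which includes $\reals^k$ itself. We would read $\subset$ there as $\subseteq$ (or else restrict the ground set accordingly) and note this explicitly. We would also remark that Definitions~\ref{match-rel-tc-def} and \ref{match-rel-tc-set-def} define $\mathcal M_X$ for arbitrary subsets of $\reals^k$, so that the relation is genuinely defined on the whole of $2^{\reals^k}$. With these conventions fixed, the argument is just the cited result of \citet[p.~1164]{CormenEtAl2009} applied verbatim.
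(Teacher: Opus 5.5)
Your proposal is correct and follows the paper's route. The paper gives no separate argument: it simply invokes the standard fact (citing Cormen et al., p.~1164) that an equivalence relation determines a partition. Your direct check of exhaustiveness and exclusiveness, together with the observation that symmetry makes $\equiv_X$ coincide with $\mathcal M_X$ (so that $[P]_X$ is the ordinary equivalence class), just makes that citation explicit.

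Your remark about $\subset$ is well taken, since the paper reserves $\subset$ for proper inclusion. The same applies to the $P,Q\subset\reals^k$ in the definitions of $\mathcal M_F$ and $\mathcal M_{\mathcal F}$, so those definitions do not literally cover arbitrary subsets as you state. Reading $\subset$ as $\subseteq$ throughout, which is consistent with the paper's statement that $\mathcal M_F\subset 2^{\reals^k}\times 2^{\reals^k}$, is the right convention.
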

\subsection{Transformation groups}

Let us now consider the case where a $k$-dimensional transformation class, $F$, forms a {\em group\/} under the operation of transformation composition, $\circ$, where we define that, for any pair of transformations, $f,g$, $(f\circ g)(p)=f(g(p))$ for any point, $p\in\reals^k$. That is, we now consider the case where the following constraints on a transformation class, $F$, are satisfied:
\begin{enumerate}
	\item {\bfseries Closure\/}: for all $f,g\in F$, $f\circ g\in F$.
	\item {\bfseries Associativity\/}: for all $f,g,h\in F$, $(f\circ g)\circ h=f\circ(g\circ h)$.
	\item {\bfseries Presence of a unique, universal identity element\/}: there exists a unique identity element, $e\in F$, such that, for all $f\in F$, $e\circ f=f\circ e=f$ holds.
	\item {\bfseries Invertibility\/}: for every transformation, $f\in F$, there exists a unique {\em inverse\/} transformation, $f^{-1}\in F$, such that $f\circ f^{-1}=f^{-1}\circ f=e$.
\end{enumerate}
We are now in a position to define the concept of a {\em transformation group\/}:
\begin{definition}[Transformation group]
A set of transformations is defined to be a {\em transformation group}, if and only if it forms a group under transformation composition.
\end{definition}
We now consider the following well-known theorem that relates transformation groups to equivalence relations.
\begin{theorem}
If $S$ is a set of functions over a set, $A$ (i.e., $S\subseteq \lbrace f: (f: A\rightarrow A\rbrace)\rbrace$), and $S$ forms a group under the operation of function composition, denoted by $\circ$, then the binary relation, $R$, defined as $aRb\Leftrightarrow(\exists f\in S: b=f(a))$ where $a,b\in A$, is an equivalence relation over the set $A$.
\label{thm-tran-grp-implies-eq-rel}
\end{theorem}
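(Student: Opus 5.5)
The plan is to verify the three defining properties of an equivalence relation (reflexivity, symmetry and transitivity) for $R$. Each property will follow from one of the group axioms listed above for $S$ under $\circ$. Throughout, $a,b,c\in A$ are arbitrary, and we use only that the elements of $S$ are functions $A\rightarrow A$ and that $(f\circ g)(x)=f(g(x))$.

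\emph{Reflexivity.} By the identity axiom there is an $e\in S$ with $e\circ f=f\circ e=f$ for all $f\in S$. We must show that $e(a)=a$ for every $a\in A$, i.e., that $e$ is the identity function on $A$. This is the one step needing care. The group identity $e$ is characterised only by how it acts under composition with members of $S$, not a priori as $\mathrm{id}_A$. I would argue as follows. Take any $f\in S$ and its inverse $f^{-1}\in S$. Then $f\circ f^{-1}=e$ and $e\circ e=e$. Hence $e=e\circ e=e\circ f\circ f^{-1}$. Now use that $f^{-1}\circ f=e$ as functions, so $f$ is injective and $f^{-1}$ is surjective onto the image of $e$.

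More directly: $e\circ e=e$ gives $e(e(x))=e(x)$, so $e$ acts as the identity on its image $e(A)$. If the group axioms are read, as is standard for permutation groups, as making every member of $S$ a bijection of $A$, then $e$ is injective. In that case $e(e(x))=e(x)$ yields $e(x)=x$. This reading is supported because invertibility with $f\circ f^{-1}=e$ forces $e$ to have the same image as $f$, and $f^{-1}\circ f=e$ forces $f$ injective wherever $e$ is. I will state explicitly that we interpret the group as a group of bijections whose identity is $\mathrm{id}_A$, as is implicit in the framework. Then $a=e(a)$, so $aRa$.

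\emph{Symmetry.} Suppose $aRb$, so $b=f(a)$ for some $f\in S$. By invertibility, $f^{-1}\in S$. Then $f^{-1}(b)=f^{-1}(f(a))=(f^{-1}\circ f)(a)=e(a)=a$, using the reflexivity step. Hence $bRa$.

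\emph{Transitivity.} Suppose $aRb$ and $bRc$, with $b=f(a)$ and $c=g(b)$ for $f,g\in S$. Then $c=g(f(a))=(g\circ f)(a)$, and $g\circ f\in S$ by closure, so $aRc$.

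Associativity is not needed beyond its role in making $S$ a group. The main obstacle is the identification $e=\mathrm{id}_A$ in the reflexivity step. Everything else is a one-line application of a group axiom.
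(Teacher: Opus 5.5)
Your argument is essentially the paper's: reflexivity from the identity element, symmetry from $f^{-1}\in S$, and transitivity from closure. The paper likewise simply takes the identity to satisfy $e(a)=a$ for all $a\in A$, as part of what it calls a ``universal'' identity element. Your hesitation there is justified, because this cannot be derived from the abstract group axioms: on $A=\lbrace0,1\rbrace$, the set containing only the constant map to $0$ is a one-element group under $\circ$, yet $1R1$ fails. So state $e=\mathrm{id}_A$ outright as a hypothesis, and drop the tentative derivations such as ``$f^{-1}\circ f=e$, so $f$ is injective'', which fail in exactly that example.
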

\begin{proof}
In order to be an equivalence relation, $R$ must be reflexive, symmetric and transitive.
\begin{enumerate}
\item {\bfseries Reflexivity}: If $S$ forms a group under the function composition operation, $\circ$, then, by definition, $S$ contains a unique identity element, $e\in S$, such that $e(a)=a$ for all $a\in A$. Therefore $aRa$ for all $a\in A$ and therefore $R$ is reflexive.

\item {\bfseries Symmetry}: If $a,b\in A$ and $aRb$, then there must exist a function, $f\in S$, such that $b=f(a)$. If $S$ forms a group under function composition, then, for every function, $f\in S$, the inverse of $f$, $f^{-1}$, is also an element in $S$. Thus, since $a=f^{-1}(b)$, $bRa$ and therefore $R$ is symmetric.

\item {\bfseries Transitivity}: Suppose that $g,h\in S$, that $a,b,c\in A$, that $b=g(a)$ and that $c=h(b)$. Therefore $aRb$ and $bRc$. But $c=h(g(a))=(h\circ g)(a)$ and, if $S$ forms a group under function composition, then $h\circ g\in S$ and therefore $aRc$ and $R$ is transitive.
\end{enumerate}
$R$ is therefore an equivalence relation over $A$.\qed
\end{proof}

The following corollary follows directly from Theorem~\ref{thm-tran-grp-implies-eq-rel}.
\begin{corollary}
If $F$ is a $k$-dimensional transformation class, and $F$ forms a group under the operation of transformation composition, then $\mathcal M_F$ is a transformational equivalence relation.
\label{match-rel-of-tran-grp-is-eq-rel}
\end{corollary}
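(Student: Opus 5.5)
The plan is to apply Theorem~\ref{thm-tran-grp-implies-eq-rel} with the underlying set $A$ taken to be $2^{\reals^k}$ rather than $\reals^k$, so that the functions act on point sets. Each $f\in F$ is a map $\reals^k\rightarrow\reals^k$, and by the notational convention of Section~\ref{transformations-section} it induces a map $\hat f:2^{\reals^k}\rightarrow2^{\reals^k}$ given by $\hat f(S)=f(S)=\bigcup_{p\in S}\lbrace f(p)\rbrace$. Let $\hat S=\lbrace\hat f: f\in F\rbrace$. Comparing \eqref{match-rel-tc-eq} with the relation $R$ of Theorem~\ref{thm-tran-grp-implies-eq-rel}, we see that $(P,Q)\in\mathcal M_F$ holds exactly when $Q=\hat f(P)$ for some $\hat f\in\hat S$. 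So $\mathcal M_F$ is precisely the relation $R$ built from $\hat S$ on $A=2^{\reals^k}$.

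The one thing to check is that $\hat S$ really is a group under composition, since that is what Theorem~\ref{thm-tran-grp-implies-eq-rel} needs. This follows from the identity $\widehat{f\circ g}=\hat f\circ\hat g$, because $(f\circ g)(S)=\bigcup_{p\in S}\lbrace f(g(p))\rbrace=f(g(S))$. The group axioms then carry over directly:
\begin{itemize}
\item \textbf{Closure} holds because $\hat f\circ\hat g=\widehat{f\circ g}$ and $f\circ g\in F$.
\item \textbf{Associativity} holds because composition of functions is associative.
\item \textbf{Identity}: $\hat e$ is the identity on sets, since $e(p)=p$ for every $p$. This uses the fact that $e\circ f=f$ for all $f$ in a group of functions, which forces $e$ to be the identity on the image of each $f$. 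Since $f\circ f^{-1}=e$, every $f$ is surjective onto the range of $e$, and $e\circ e=e$, so $e$ is idempotent. To get $e=\mathrm{id}$ outright, observe that $e\circ f=f$ together with $f$ being a bijection gives $e=\mathrm{id}$.
\item \textbf{Inverses}: $\widehat{f^{-1}}$ serves as the inverse of $\hat f$.
\end{itemize}

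The apparent obstacle is that several $f$ might induce the same $\hat f$, which would threaten the uniqueness clauses in the group definition. This does no harm. The maps $\hat f$ act on singletons as $f$ does, so $f\mapsto\hat f$ is injective, and $\hat S$ is simply an isomorphic copy of $F$. Invoking Theorem~\ref{thm-tran-grp-implies-eq-rel} then shows that $\mathcal M_F$ is reflexive, symmetric and transitive, which by Definition~\ref{tran-eq-rel-def} makes it a transformational equivalence relation.

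As a sanity check, we could instead verify the three properties directly. Reflexivity comes from $e(P)=P$. Symmetry comes from $Q=f(P)\Rightarrow P=f^{-1}(Q)$, using that $f^{-1}(f(p))=p$ pointwise. Transitivity comes from $h(g(P))=(h\circ g)(P)$. Either route works, and the direct route is the one I would write out if the lifting argument looks too pedantic.
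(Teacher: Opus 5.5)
Your main route is the paper's. The paper's proof is a one-line appeal to Theorem~\ref{thm-tran-grp-implies-eq-rel} together with the definitions. What you add is the explicit lift of each $f$ to $\hat f$ on $A=2^{\reals^k}$. The paper leaves this lift implicit, yet it is genuinely needed: applying the theorem with $A=\reals^k$ only gives an equivalence relation on points, not on patterns. Your checks that $\widehat{f\circ g}=\hat f\circ\hat g$ and that $f\mapsto\hat f$ is injective (via singletons) are correct. They are exactly what makes $\hat S$ an isomorphic copy of $F$.

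The one step that fails is your Identity bullet. Its final sentence is circular: $f$ is a bijection only once $f^{-1}\circ f=f\circ f^{-1}=e$ is already known to be the identity map.

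Nor can $e=\mathrm{id}$ be derived from the abstract group axioms. Take $\lbrace p\mapsto\langle p[0]+t,0\rangle : t\in\reals\rbrace$. It is a transformation class (with $\lambda=\beta=\tau=1$) and a group under composition, with identity element the projection $p\mapsto\langle p[0],0\rangle$. Its match relation is not reflexive, since $\lbrace\langle0,1\rangle\rbrace$ is not a match of itself.

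The corollary holds only because the paper's ``unique, universal identity element'' is meant to satisfy $e(p)=p$ for all $p$. That is exactly how the proof of Theorem~\ref{thm-tran-grp-implies-eq-rel} and the constant-function counterexample after it use the notion. Replace your derivation with an appeal to that convention. Then $\hat e$ is the identity on $2^{\reals^k}$ and every $f\in F$ is a bijection. Both your lifting argument and your direct check then go through; the direct check's symmetry step, $f^{-1}(f(p))=p$, relies on the same fact.
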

\begin{proof}
The result follows directly from  Theorem~\ref{thm-tran-grp-implies-eq-rel}, together with the definitions of a transformation class (Def.~\ref{transformation-class-definition}) and a transformational equivalence relation (Def.~\ref{tran-eq-rel-def}). \qed
\end{proof}

However, the converse of Theorem~\ref{thm-tran-grp-implies-eq-rel} is not generally true. This can be shown by finding a single counter-example. For example, suppose that $A=\mathbb{Z}$, that $S=\lbrace f_i : f_i(n)=i\text{ for all }n\in\mathbb{Z},\text{ where } i\in\mathbb{Z}\rbrace$ and that $R$ is the binary relation, $aRb\Leftrightarrow(\exists f\in S: b=f(a))$ where $a,b\in A$. We first prove that $R$ is an equivalence relation by showing that it is reflexive, symmetric and transitive:
\begin{enumerate}
\item {\bfseries Reflexivity}: For any $a\in A$, $f_a(a)=a$ and $f_a\in S$. Therefore $aRa$ for all $a\in A$ and $R$ is reflexive.
\item {\bfseries Symmetry}: If $a,b\in A$, then $f_b(a)=b$ and therefore $aRb$. Also, $f_a(b)=a$ and therefore $bRa$. So $R$ is symmetric.
\item {\bfseries Transitivity}: If $a,b,c\in A$, then $f_b(a)=b$, $f_c(b)=c$ and $f_c(a)=c$ and $f_a,f_b,f_c\in S$. Therefore, if $aRb$ and $aRc$, then $aRc$ for all $a,b,c\in A$ and $R$ is transitive.
\end{enumerate}
$R$ is therefore reflexive, symmetric and transitive and $R$ is therefore an equivalence relation over $A$. However, $S$ does {\em not\/} form a group under function composition, as it does not contain a unique universal identity element and the functions in $S$ do not have inverses:
\begin{enumerate}
\item {\bfseries Presence of a unique, universal identity element\/}: For any given value of $n\in A$, there is only one function, $f_n\in S$ such that $f_n(n)=n$. However, $f_n(x)\neq x$ for all $x\neq n$. There therefore does not exist any unique, universal identity element, $e$ in $S$ such that $e(k)=k$ for all $k\in A$. $S$ therefore does {\em not\/} satisfy this condition and is therefore {\em not\/} a group.
\item {\bfseries Invertibility\/}: Each of the functions, $f_i$, in $S$ is a many-to-one function and therefore does not have a unique inverse function. $S$ therefore does not satisfy the invertibility condition and is therefore {\em not\/} a group.
\end{enumerate}

Therefore, if $R$ is an equivalence relation, and $aRb\Leftrightarrow(\exists f\in S: b=f(a))$ where $a,b\in A$, and each element in $S$ is a function, $f:A\rightarrow A$, then we cannot guarantee that $S$ will form a group under function composition.

\subsection{Group-theoretic and transformational equivalence properties of selected transformation classes}

In this section, we identify which of the transformation classes considered in Section~\ref{section-transformation-classes} form transformation groups. We also identify for which of these classes occurrence sets always form sets of transformationally equivalent patterns.

\subsubsection{$\FTwoT$}

We begin by considering the set of all 2-dimensional translations, $\FTwoT$.

\begin{lemma}
$\FTwoT$ is a transformation group.
\label{ftwot-transformation-group}
\end{lemma}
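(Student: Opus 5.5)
The plan is to verify directly the four group axioms listed above for $\FTwoT$ under transformation composition. The verification rests on the parameterisation $\phi_{\FTwoT}(\sigma,p)=p+\sigma$ from \eqref{ftwot-phi}. Write $T_v$ for the translation with $\sigma_{\FTwoT}(T_v)=v$, where $v\in\reals^2$; every $v\in\reals^2$ gives such a translation, so $\allsigmas(\FTwoT)=\reals^2$. Every step then reduces to a statement about vector addition in $\reals^2$.

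\textbf{Closure.} For $T_u,T_v\in\FTwoT$ and any $p\in\reals^2$,
\[
(T_u\circ T_v)(p)=T_u(p+v)=(p+v)+u=p+(u+v),
\]
using associativity and commutativity of componentwise addition of reals. Hence $T_u\circ T_v=T_{u+v}\in\FTwoT$.

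\textbf{Associativity.} This holds for composition of any functions $\reals^2\rightarrow\reals^2$, since $((f\circ g)\circ h)(p)=f(g(h(p)))=(f\circ(g\circ h))(p)$ for every $p$. Alternatively, it follows from the closure formula and associativity of vector addition.

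\textbf{Identity.} Take $e=T_{\langle0,0\rangle}$, so that $e(p)=p$ for all $p$ and therefore $e\circ f=f\circ e=f$ for every $f\in\FTwoT$. For uniqueness, suppose $e'\in\FTwoT$ also satisfies $e'\circ f=f\circ e'=f$ for all $f\in\FTwoT$. Then $e'=e'\circ e=e$, which is the standard argument.

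\textbf{Invertibility.} For $T_v$, take $T_{-v}$. By the closure formula, $T_v\circ T_{-v}=T_{-v}\circ T_v=T_{\langle0,0\rangle}=e$. For uniqueness, if $g\in\FTwoT$ also satisfies $g\circ T_v=T_v\circ g=e$, then
\[
g=g\circ e=g\circ(T_v\circ T_{-v})=(g\circ T_v)\circ T_{-v}=T_{-v}.
\]
One could also argue from the injectivity of $\sigma_{\FTwoT}$: $T_{u+v}=T_{\langle0,0\rangle}$ forces $u=-v$.

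No step is a real obstacle; the only point needing care is the uniqueness clauses the paper builds into its group axioms. I will handle those with the standard group-theoretic arguments above rather than coordinates. Concluding, $\FTwoT$ satisfies closure, associativity, a unique identity and unique inverses, and so is a transformation group. By Corollary~\ref{match-rel-of-tran-grp-is-eq-rel}, $\mathcal M_{\FTwoT}$ is then a transformational equivalence relation.
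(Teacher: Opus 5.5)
Your proposal is correct and follows essentially the same route as the paper: a direct check of closure, associativity, identity and inverses through $\phi_{\FTwoT}(\sigma,p)=p+\sigma$, with parameter vectors adding under composition, identity $\langle0,0\rangle$ and inverse $-\sigma$. The only differences are refinements. You get associativity from the general associativity of function composition rather than the paper's coordinate computation, and you actually prove that the identity and inverses are unique, which the paper asserts without argument.
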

\begin{proof}
In order to prove that $\FTwoT$ is a transformation group, we need to prove that it satisfies the conditions of closure, associativity, presence of a universal identity element, and invertibility.
\begin{enumerate}
\item{\bfseries Closure}: Let $f,g\in\FTwoT$, let $p\in\reals^2$ and let $h(p)=f(g(p))$. This implies that $h(p)=(f\circ g)(p)$. From \eqref{ftwot-phi}, it follows that 
\begin{eqnarray}
f(g(p))=(f\circ g)(p) &=& \phi_{\FTwoT}(\sigma_{\FTwoT}(f),\phi_{\FTwoT}(\sigma_{\FTwoT}(g),p))\nonumber\\
&=&p+\sigma_{\FTwoT}(g)+\sigma_{\FTwoT}(f)\nonumber\\
&=&\phi_{\FTwoT}(\sigma_{\FTwoT}(g)+\sigma_{\FTwoT}(f),p)
\label{ftwot-closure}
\end{eqnarray}
Therefore, $\sigma_{\FTwoT}(h)=\sigma_{\FTwoT}(g)+\sigma_{\FTwoT}(f)$ and so $h\in\FTwoT$. $\FTwoT$ is therefore closed under transformation composition.
\item{\bfseries Associativity}: Let $f,g,h\in\FTwoT$ and let $p\in\reals^2$. From \eqref{ftwot-closure},
\begin{eqnarray}
(f\circ g)(p)&=&\phi_{\FTwoT}(\sigma_{\FTwoT}(g)+\sigma_{\FTwoT}(f),p)\text{ and }\nonumber\\
(g\circ h)(p)&=&\phi_{\FTwoT}(\sigma_{\FTwoT}(h)+\sigma_{\FTwoT}(g),p)\,.\nonumber
\end{eqnarray}
Therefore, from \eqref{ftwot-phi},
\begin{eqnarray}
(f\circ g)(h(p)) &=& \phi_{\FTwoT}(\sigma_{\FTwoT}(g)+\sigma_{\FTwoT}(f),\phi_{\FTwoT}(\sigma_{\FTwoT}(h),p))\nonumber\\
&=&p+\sigma_{\FTwoT}(h)+\sigma_{\FTwoT}(g)+\sigma_{\FTwoT}(f)\label{ftwot-assoc-1}
\end{eqnarray}
and
\begin{eqnarray}
f((g\circ h)(p)) &=& \phi_{\FTwoT}(\sigma_{\FTwoT}(f),\phi_{\FTwoT}(\sigma_{\FTwoT}(h)+\sigma_{\FTwoT}(g),p))\nonumber\\
&=&p+\sigma_{\FTwoT}(h)+\sigma_{\FTwoT}(g)+\sigma_{\FTwoT}(f)\label{ftwot-assoc-2}
\end{eqnarray}
From \eqref{ftwot-assoc-1} and \eqref{ftwot-assoc-2} it follows that $(f\circ g)(h(p)) = f((g\circ h)(p))$ and that therefore transformation composition is associative over $\FTwoT$.
\item{\bfseries Identity element} If $\sigma_{\FTwoT}(e)=\langle0,0\rangle$ and $p\in\reals^2$, then, from \eqref{ftwot-phi}, $e(p)=p$, and $e$ is a universal identity element within $\FTwoT$.
\item{\bfseries Invertibility} Let $\sigma_{\FTwoT}(g)=-\sigma_{\FTwoT}(f)$. From \eqref{ftwot-phi} and \eqref{ftwot-closure} it follows that 
\begin{eqnarray}
(f\circ g)(p) = (g \circ f)(p) &=& \sigma_{\FTwoT}(\sigma_{\FTwoT}(g)+\sigma_{\FTwoT}(f),p)\nonumber\\
&=& \sigma_{\FTwoT}(-\sigma_{\FTwoT}(f)+\sigma_{\FTwoT}(f),p)\nonumber\\
&=&\sigma_{\FTwoT}(\langle0,0\rangle,p)=p\,.\nonumber
\end{eqnarray}
 Therefore, in general, $\sigma_{\FTwoT}(f^{-1})=-\sigma_{\FTwoT}(f)$. Every transformation, $f\in\FTwoT$, therefore has a unique inverse.
\end{enumerate}
$\FTwoT$ therefore satisfies the necessary conditions for it to be a transformation group.\qed
\end{proof}

\begin{corollary}
$\mathcal M_{\FTwoT}$ is a transformational equivalence relation.
\label{m-ftwot-tran-eq-rel}
\end{corollary}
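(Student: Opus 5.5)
The plan is to derive the corollary directly from Lemma~\ref{ftwot-transformation-group} together with Corollary~\ref{match-rel-of-tran-grp-is-eq-rel}. By Lemma~\ref{ftwot-transformation-group}, $\FTwoT$ forms a group under transformation composition. It is also a $2$-dimensional transformation class, as established in Section~\ref{section-transformation-classes}. The hypotheses of Corollary~\ref{match-rel-of-tran-grp-is-eq-rel} are therefore satisfied with $F=\FTwoT$ and $k=2$, and that corollary gives at once that $\mathcal M_{\FTwoT}$ is a transformational equivalence relation in the sense of Def.~\ref{tran-eq-rel-def}.

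The one point to check is that the relation produced by Theorem~\ref{thm-tran-grp-implies-eq-rel} really is $\mathcal M_{\FTwoT}$. That theorem concerns functions on a set $A$, whereas $\mathcal M_{\FTwoT}$ is a relation between point \emph{sets}. So I will take $A=2^{\reals^2}$ and let each $f\in\FTwoT$ act on sets by $f(S)=\bigcup_{p\in S}\lbrace f(p)\rbrace$. This induced action is still a group under composition: $(f\circ g)(S)=f(g(S))$, the zero translation acts as the identity on sets, and $f^{-1}(f(S))=S$ because translations are bijections.

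With this identification, the relation $aRb\Leftrightarrow\exists f: b=f(a)$ coincides exactly with Eq.~\eqref{match-rel-tc-eq}. I expect this lifting from points to point sets to be the only real obstacle, and it is routine.

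If more explicitness is wanted, I would instead verify the three properties directly on sets:
\begin{enumerate}
\item \emph{Reflexivity:} take the translation by $\langle0,0\rangle$.
\item \emph{Symmetry:} if $Q=P+v$, then $P=Q+(-v)$.
\item \emph{Transitivity:} if $Q=P+v$ and $R=Q+w$, then $R=P+(v+w)$, using Eq.~\eqref{ftwot-closure}.
\end{enumerate}
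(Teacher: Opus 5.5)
Your proposal is correct and follows the paper's proof, which simply cites Lemma~\ref{ftwot-transformation-group} and Corollary~\ref{match-rel-of-tran-grp-is-eq-rel}. Your check that the group action lifts from points to point sets (taking $A=2^{\reals^2}$) fills in a step the paper leaves implicit, and your direct verification of reflexivity, symmetry and transitivity is also a valid alternative.
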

\begin{proof}
This follows directly from Lemma~\ref{ftwot-transformation-group} and Corollary~\ref{match-rel-of-tran-grp-is-eq-rel}.\qed
\end{proof}

\begin{corollary}
If $P,D\subset\reals^2$, then $\occsetfun(P,D,\FTwoT)$ is a set of transformationally equivalent patterns.
\end{corollary}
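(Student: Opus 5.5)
The plan is to reduce the statement to Corollary~\ref{m-ftwot-tran-eq-rel} together with Definitions~\ref{occurrence-set-def}, \ref{tran-eq-def} and \ref{set-of-tran-eq-patterns-def}. By Definition~\ref{set-of-tran-eq-patterns-def}, we must show that $Q_i\equiv_{\FTwoT}Q_j$ for every pair $Q_i,Q_j\in\occsetfun(P,D,\FTwoT)$. By Definition~\ref{tran-eq-def}, this means showing both $Q_i\rightarrow_{\FTwoT}Q_j$ and $Q_j\rightarrow_{\FTwoT}Q_i$.

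Take any $Q_i,Q_j\in\occsetfun(P,D,\FTwoT)$. By Definition~\ref{occurrence-set-def}, there exist $f_i,f_j\in\FTwoT$ with $Q_i=f_i(P)$ and $Q_j=f_j(P)$. By Definition~\ref{match-rel-tc-def}, this gives $P\rightarrow_{\FTwoT}Q_i$ and $P\rightarrow_{\FTwoT}Q_j$. Corollary~\ref{m-ftwot-tran-eq-rel} says $\mathcal M_{\FTwoT}$ is reflexive, symmetric and transitive.

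\begin{itemize}
\item Symmetry applied to $P\rightarrow_{\FTwoT}Q_i$ gives $Q_i\rightarrow_{\FTwoT}P$.
\item Transitivity applied to $Q_i\rightarrow_{\FTwoT}P$ and $P\rightarrow_{\FTwoT}Q_j$ gives $Q_i\rightarrow_{\FTwoT}Q_j$.
\item Exchanging the roles of $i$ and $j$ gives $Q_j\rightarrow_{\FTwoT}Q_i$.
\end{itemize}

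Hence $Q_i\equiv_{\FTwoT}Q_j$. The case $i=j$ is covered by the same argument, or directly by reflexivity.

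The same argument can be written concretely: the transformation $f_j\circ f_i^{-1}$ maps $Q_i$ onto $Q_j$. By the group properties in Lemma~\ref{ftwot-transformation-group}, it is the translation with parameter vector $\sigma_{\FTwoT}(f_j)-\sigma_{\FTwoT}(f_i)$, so it lies in $\FTwoT$.

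No step is a real obstacle. The only care needed is that $\rightarrow_{\FTwoT}$ relates point sets, whereas Theorem~\ref{thm-tran-grp-implies-eq-rel} was stated for points. We apply it on $A=2^{\reals^2}$, where translations act on sets elementwise, and this is exactly what Corollary~\ref{m-ftwot-tran-eq-rel} already provides. If the occurrence set is empty, the statement holds vacuously.
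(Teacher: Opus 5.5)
Your proof is correct and follows essentially the same route as the paper: both rest on Corollary~\ref{m-ftwot-tran-eq-rel} and on the fact that every occurrence is a match of $P$. The paper packages the final step as placing $\occsetfun(P,D,\FTwoT)$ inside $[P]_{\FTwoT}$ via Lemma~\ref{tran-eq-rel-part-lemma}, whereas you carry out the symmetry and transitivity steps explicitly, which also makes visible the step the paper leaves implicit (that any two members of $[P]_{\FTwoT}$ are equivalent to each other).
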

\begin{proof}
From Corollary~\ref{m-ftwot-tran-eq-rel}, we know that $\mathcal M_{\FTwoT}$ is a transformational equivalence relation, and therefore an equivalence relation over $2^{\mathbb R^2}$. From Lemma~\ref{tran-eq-rel-part-lemma}, it follows that $\mathcal M_{\FTwoT}$ partitions $2^{\mathbb R^2}$ exhaustively and exclusively into transformational equivalence classes. It follows that $\occsetfun(P,D,\FTwoT)\subset [P]_{\FTwoT}$ and that therefore $\occsetfun(P,D,\FTwoT)$ is a set of transformationally equivalent patterns.\qed
\end{proof}

\subsubsection{$\FTwoTR$}

We now consider, $\FTwoTR$, the set of 2-dimensional transformations consisting of a translation, optionally followed by a reflection in the $x$-axis.

\begin{lemma}
$\FTwoTR$ is a transformation group.
\label{ftwotr-transformation-group}
\end{lemma}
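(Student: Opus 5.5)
We follow the same template as the proof of Lemma~\ref{ftwot-transformation-group}, checking closure, associativity, identity and invertibility in turn, working entirely through the transformation class function $\phi_{\FTwoTR}$ in \eqref{phi_F2TR}. The key preliminary step is to compute the composition of two transformations in closed form. Let $f,g\in\FTwoTR$ with $\sigma_{\FTwoTR}(f)=\langle a_0,a_1,a_2\rangle$ and $\sigma_{\FTwoTR}(g)=\langle b_0,b_1,b_2\rangle$, where $a_2,b_2\in\lbrace1,-1\rbrace$. Applying \eqref{phi_F2TR} twice gives
\begin{equation*}
f(g(p))=\langle p[0]+b_0+a_0,\ a_2\left(b_2(p[1]+b_1)+a_1\right)\rangle .
\end{equation*}
Since $b_2^2=1$, we have $b_2(p[1]+b_1)+a_1=b_2(p[1]+b_1+b_2a_1)$, so the $y$-coordinate equals $a_2b_2\left(p[1]+b_1+b_2a_1\right)$.

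From this we conclude that $f\circ g=\phi_{\FTwoTR}(\langle a_0+b_0,\ b_1+b_2a_1,\ a_2b_2\rangle,\cdot)$. Because $a_2b_2\in\lbrace1,-1\rbrace$ and the first two components are real, this parameter vector is in $\allsigmas(\FTwoTR)$, which establishes \textbf{closure}. \textbf{Associativity} then holds automatically, since composition of functions is always associative: $((f\circ g)\circ h)(p)=f(g(h(p)))=(f\circ(g\circ h))(p)$. Alternatively, one could mirror the paper's style and expand both sides with the composition formula. The \textbf{identity} is the transformation with parameter vector $\langle0,0,1\rangle$, which by \eqref{phi_F2TR} maps every $p$ to itself. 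Its uniqueness follows because \eqref{phi_F2TR} is the identity map only when $\sigma[0]=0$, $\sigma[2]=1$ and $\sigma[1]=0$; for example, one can check this on the points $\langle0,0\rangle$ and $\langle0,1\rangle$.

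For \textbf{invertibility}, we choose $g$ so that the composition formula returns $\langle0,0,1\rangle$. This requires $b_0=-a_0$, $b_2=a_2$ (so that $a_2b_2=1$), and $b_1+b_2a_1=0$, i.e.\ $b_1=-a_2a_1$. So we take $\sigma_{\FTwoTR}(f^{-1})=\langle -a_0,\,-a_2a_1,\,a_2\rangle$. We then verify that both $f\circ g$ and $g\circ f$ yield $\langle0,0,1\rangle$ under the composition formula. For $g\circ f$ the middle component is $a_1+a_2(-a_2a_1)=0$, so both checks work. Uniqueness of the inverse follows from uniqueness of parameter vectors (constraint~1 of Definition~\ref{transformation-class-definition}), or from the standard group argument once the other axioms are in place.

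The main obstacle is the closure computation. Unlike $\FTwoT$, the reflection does not commute with the translation, so the $y$-translation component of the composite picks up the sign $b_2$. The composite must therefore be rewritten in the canonical form ``translate then optionally reflect'', which is done with the identity $b_2^2=1$ above. Once that formula is in hand, the remaining steps are routine substitutions.
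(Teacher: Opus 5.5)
Your proof is correct and follows essentially the same route as the paper: you derive the same closure formula (your $b_1+b_2a_1$ is the paper's $\sigma^g_1+\sigma^f_1/\sigma^g_2$, since $b_2=\pm1$), the same identity $\langle0,0,1\rangle$, and the same inverse $\langle-a_0,-a_2a_1,a_2\rangle$. The differences are minor. You get associativity directly from associativity of function composition instead of expanding both sides, and you also check that the inverse works on both sides and that the identity is unique, points the paper leaves implicit.
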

\begin{proof}
In order to prove that $\FTwoTR$ is a transformation group, we need to prove that it satisfies the conditions of closure, associativity, presence of a universal identity element, and invertibility.
\begin{enumerate}
\item{\bfseries Closure}: Let $f,g\in\FTwoTR$, $p\in\reals^2$, $p=\langle p_x, p_y\rangle$ and $h(p)=f(g(p))$. This implies that $h(p)=(f\circ g)(p)$. Let $\sigma^f = \sigma_{\FTwoTR}(f)$ and let $\sigma^g = \sigma_{\FTwoTR}(g)$. Let $\sigma^x_i=\sigma^x[i]$. From \eqref{phi_F2TR}, it follows that 
\begin{eqnarray}
f(g(p))=(f\circ g)(p) &=& \phi_{\FTwoTR}(\sigma^f,\phi_{\FTwoTR}(\sigma^g,p))\nonumber\\
&=&\phi_{\FTwoTR}(\sigma^f,\langle p_x+\sigma^g_0,\sigma^g_2(p_y+\sigma^g_1)\rangle)\nonumber\\
&=&\langle p_x+\sigma^g_0+\sigma^f_0,\sigma^f_2(\sigma^g_2(p_y+\sigma^g_1)+\sigma^f_1)\rangle\nonumber\\
&=&\langle p_x+\sigma^g_0+\sigma^f_0, \sigma^f_2\sigma^g_2p_y+\sigma^f_2\sigma^g_2\sigma^g_1+\sigma^f_2\sigma^f_1\rangle\nonumber\\
&=&\left\langle p_x+\left(\sigma^g_0+\sigma^f_0\right), \sigma^f_2\sigma^g_2\left(p_y+\left(\sigma^g_1+\frac{\sigma^f_1}{\sigma^g_2}\right)\right)\right\rangle\nonumber\\
&=&\phi_{\FTwoTR}\left(\left\langle\sigma^g_0+\sigma^f_0,\sigma^g_1+\frac{\sigma^f_1}{\sigma^g_2},\sigma^f_2\sigma^g_2\right\rangle,p\right)
\label{ftwotr-closure}
\end{eqnarray}
Therefore, $$\sigma_{\FTwoTR}(h)=\left\langle\sigma^g_0+\sigma^f_0,\sigma^g_1+\frac{\sigma^f_1}{\sigma^g_2},\sigma^f_2\sigma^g_2\right\rangle$$ and so $h\in\FTwoTR$. $\FTwoTR$ is therefore closed under transformation composition.
\item{\bfseries Associativity}: Let $f,g,h\in\FTwoTR$, $p\in\reals^2$ and $p=\langle p_x, p_y\rangle$. 
Let $\sigma^f = \sigma_{\FTwoTR}(f)$, $\sigma^g = \sigma_{\FTwoTR}(g)$ and $\sigma^h = \sigma_{\FTwoTR}(h)$. 
Let $\sigma^x_i=\sigma^x[i]$. 
From \eqref{ftwotr-closure},
\begin{eqnarray}
(f\circ g)(p)&=&\phi_{\FTwoTR}\left(\left\langle\sigma^g_0+\sigma^f_0,\sigma^g_1+\frac{\sigma^f_1}{\sigma^g_2},\sigma^f_2\sigma^g_2\right\rangle,p\right)\label{ftwotr-fog}\\
(g\circ h)(p)&=&\phi_{\FTwoTR}\left(\left\langle\sigma^h_0+\sigma^g_0,\sigma^h_1+\frac{\sigma^g_1}{\sigma^h_2},\sigma^g_2\sigma^h_2\right\rangle,p\right)\label{ftwotr-goh}
\end{eqnarray}
Therefore, from \eqref{phi_F2TR},
\begin{eqnarray}
(f&\circ& g)(h(p))\nonumber\\
&=& \phi_{\FTwoTR}\left(\left\langle\sigma^g_0+\sigma^f_0,\sigma^g_1+\frac{\sigma^f_1}{\sigma^g_2},\sigma^f_2\sigma^g_2\right\rangle,\phi_{\FTwoTR}\left(\sigma_{\FTwoTR}(h),p\right)\right)\nonumber\\
&=& \phi_{\FTwoTR}\left(\left\langle\sigma^g_0+\sigma^f_0,\sigma^g_1+\frac{\sigma^f_1}{\sigma^g_2},\sigma^f_2\sigma^g_2\right\rangle,
				\left\langle p_x+\sigma^h_0,\sigma^h_2\left(p_y+\sigma^h_1\right)\right\rangle\right)\nonumber\\
&=& \left\langle p_x+\sigma^h_0+\sigma^g_0+\sigma^f_0,\sigma^f_2\sigma^g_2\left(\sigma^h_2\left(p_y+\sigma^h_1\right)+\sigma^g_1+\frac{\sigma^f_1}{\sigma^g_2}\right)\right\rangle\nonumber\\
&=& \left\langle p_x+\sigma^h_0+\sigma^g_0+\sigma^f_0,\sigma^f_2\sigma^g_2\sigma^h_2\left(p_y+\sigma^h_1+\frac{\sigma^g_1}{\sigma^h_2}+\frac{\sigma^f_1}{\sigma^g_2\sigma^h_2}\right)\right\rangle\label{ftwotr-assoc-1}
\end{eqnarray}
and
\begin{eqnarray}
f((g&\circ& h)(p))\nonumber\\
&=& \phi_{\FTwoTR}\left(\sigma_{\FTwoTR}(f),(g\circ h)(p)\right)\nonumber\\
&=&\phi_{\FTwoTR}\left(\left\langle\sigma^f_0,\sigma^f_1,\sigma^f_2\right\rangle,\phi_{\FTwoTR}\left(\left\langle\sigma^h_0+\sigma^g_0,\sigma^h_1+\frac{\sigma^g_1}{\sigma^h_2},\sigma^g_2\sigma^h_2\right\rangle,p\right)\right)\nonumber\\
&=&\phi_{\FTwoTR}\left(\left\langle\sigma^f_0,\sigma^f_1,\sigma^f_2\right\rangle,\left\langle p_x+\sigma^h_0+\sigma^g_0,\sigma^g_2\sigma^h_2\left(p_y+\sigma^h_1+\frac{\sigma^g_1}{\sigma^h_2}\right)\right\rangle\right)\nonumber\\
&=&\left\langle p_x+\sigma^h_0+\sigma^g_0+\sigma^f_0,\sigma^f_2\left(\sigma^g_2\sigma^h_2\left(p_y+\sigma^h_1+\frac{\sigma^g_1}{\sigma^h_2}\right)+\sigma^f_1\right)\right\rangle\nonumber\\
&=& \left\langle p_x+\sigma^h_0+\sigma^g_0+\sigma^f_0,\sigma^f_2\sigma^g_2\sigma^h_2\left(p_y+\sigma^h_1+\frac{\sigma^g_1}{\sigma^h_2}+\frac{\sigma^f_1}{\sigma^g_2\sigma^h_2}\right)\right\rangle
\label{ftwotr-assoc-2}
\end{eqnarray}
From \eqref{ftwotr-assoc-1} and \eqref{ftwotr-assoc-2} it follows that $(f\circ g)(h(p)) = f((g\circ h)(p))$ and that therefore transformation composition is associative over $\FTwoTR$.
\item{\bfseries Identity element} If $\sigma_{\FTwoTR}(e)=\langle0,0,1\rangle$, $p\in\reals^2$ and $p=\langle p_x,p_y\rangle$, then, from \eqref{phi_F2TR}, 
\begin{eqnarray}
e(p)&=&\phi_{\FTwoTR}(\langle0,0,1\rangle,\langle p_x,p_y\rangle)\nonumber\\
&=&\langle p_x+0,1(p_y+0)\rangle\nonumber\\
&=&\langle p_x,p_y\rangle=p\nonumber\,.
\end{eqnarray}
Therefore $e(p)=p$ for all $p\in\reals^2$. $e$ is therefore a universal identity element within $\FTwoTR$.
\item{\bfseries Invertibility} Let $q=\phi_{\FTwoTR}(\sigma,p)$, where $\sigma=\langle \sigma_0,\sigma_1,\sigma_2\rangle$, and let $q=\langle q_x,q_y\rangle$ and $p=\langle p_x, p_y\rangle$. From \eqref{phi_F2TR}, it follows that 
$$q=\langle q_x,q_y\rangle=\langle p_x+\sigma_0,\sigma_2(p_y+\sigma_1)\rangle$$
and therefore
$$p=\langle p_x, p_y\rangle=\left\langle q_x - \sigma_0, \frac{q_y}{\sigma_2}-\sigma_1\right\rangle\,.$$
But $\sigma_2\in \lbrace1,-1\rbrace$, therefore $\frac1{\sigma_2}=\sigma_2$, so
\begin{eqnarray}
p&=&\langle p_x, p_y\rangle=\left\langle q_x - \sigma_0, \sigma_2q_y-\sigma_1\right\rangle\nonumber\\
&=&\left\langle q_x - \sigma_0,\sigma_2\left(q_y-\sigma_1\sigma_2\right)\right\rangle\nonumber\\
&=&\phi_{\FTwoTR}\left(\langle-\sigma_0,-\sigma_1\sigma_2, \sigma_2\rangle,q\right)\,.\label{f2tr-inverse-sigma}
\end{eqnarray}
Therefore, if $f\in\FTwoTR$ and $\sigma=\sigma_{\FTwoTR}(f)$ , then $$\sigma_{\FTwoTR}(f^{-1})=\langle-\sigma[0],-\sigma[1]\sigma[2], \sigma[2]\rangle\,.$$ Therefore every transformation in $\FTwoTR$ has a unique inverse.
\end{enumerate}
$\FTwoTR$ is therefore a transformation group.\qed
\end{proof}

\begin{corollary}
$\mathcal M_{\FTwoTR}$ is a transformational equivalence relation.
\label{m-ftwotr-tran-eq-rel}
\end{corollary}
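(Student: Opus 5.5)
This follows the same pattern as Corollary~\ref{m-ftwot-tran-eq-rel}. I will combine Lemma~\ref{ftwotr-transformation-group}, which establishes that $\FTwoTR$ is a transformation group, with Corollary~\ref{match-rel-of-tran-grp-is-eq-rel}. That corollary states that the match relation of any $k$-dimensional transformation class forming a group under composition is a transformational equivalence relation.

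The first step is to confirm the hypotheses of Corollary~\ref{match-rel-of-tran-grp-is-eq-rel}. We need $\FTwoTR$ to be a 2-dimensional transformation class in the sense of Definition~\ref{transformation-class-definition}. This was shown in Section~\ref{section-transformation-classes}, with $\lambda(\FTwoTR)=3$, the class function $\phi_{\FTwoTR}$ of \eqref{phi_F2TR}, and the basis-pair transformation set function \eqref{BTSF-for-F2TR} with $\beta(\FTwoTR)=1$ and $\tau(\FTwoTR)=2$. We also need $\FTwoTR$ to form a group under transformation composition, which is exactly Lemma~\ref{ftwotr-transformation-group}. Applying the corollary then yields that $\mathcal M_{\FTwoTR}$ is reflexive, symmetric and transitive. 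By Definition~\ref{tran-eq-rel-def}, it is therefore a transformational equivalence relation.

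Implicitly, this passes from points to point sets. Theorem~\ref{thm-tran-grp-implies-eq-rel} is applied with $A=2^{\reals^2}$ and with $S$ the set of induced maps $P\mapsto f(P)$ for $f\in\FTwoTR$. These induced maps inherit the group structure, since $(f\circ g)(P)=f(g(P))$, the identity transformation $\langle0,0,1\rangle$ induces the identity on sets, and $f^{-1}(f(P))=P$ because each $f$ is a bijection. I do not expect a genuine obstacle here. The only point worth a sentence is this lifting from points to sets, which Corollary~\ref{match-rel-of-tran-grp-is-eq-rel} already takes for granted.
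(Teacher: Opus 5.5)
Your proposal is correct and takes the same approach as the paper, whose proof simply cites Lemma~\ref{ftwotr-transformation-group} (that $\FTwoTR$ is a transformation group) together with Corollary~\ref{match-rel-of-tran-grp-is-eq-rel}. Your remark about lifting the group from points to point sets in $2^{\reals^2}$ is sound and makes explicit a step the paper leaves implicit.
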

\begin{proof}
This follows directly from Lemma~\ref{ftwotr-transformation-group} and Corollary~\ref{match-rel-of-tran-grp-is-eq-rel}.\qed
\end{proof}

\begin{corollary}
If $P,D\subset\reals^2$, then $\occsetfun(P,D,\FTwoTR)$ is a set of transformationally equivalent patterns.
\end{corollary}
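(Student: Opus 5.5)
The plan mirrors the proof of the corresponding corollary for $\FTwoT$. By Corollary~\ref{m-ftwotr-tran-eq-rel}, $\mathcal M_{\FTwoTR}$ is a transformational equivalence relation, and hence an equivalence relation over $2^{\reals^2}$. Lemma~\ref{tran-eq-rel-part-lemma} then tells us that it partitions $2^{\reals^2}$ into transformational equivalence classes. It remains to show that every element of $\occsetfun(P,D,\FTwoTR)$ lies in the single class $[P]_{\FTwoTR}$. After that, Definition~\ref{set-of-tran-eq-patterns-def} follows from symmetry and transitivity of the relation.

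The key step is to make explicit why each occurrence is transformationally equivalent to $P$, rather than merely asserting that $\occsetfun(P,D,\FTwoTR)\subseteq[P]_{\FTwoTR}$.

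\begin{enumerate}
\item Take any $Q\in\occsetfun(P,D,\FTwoTR)$. By Definition~\ref{occurrence-set-def}, there is some $f\in\FTwoTR$ with $Q=f(P)$, so $P\rightarrow_{\FTwoTR}Q$ by Definition~\ref{match-rel-tc-def}.
\item By the invertibility part of Lemma~\ref{ftwotr-transformation-group}, $f^{-1}\in\FTwoTR$ with $\sigma_{\FTwoTR}(f^{-1})=\langle-\sigma[0],-\sigma[1]\sigma[2],\sigma[2]\rangle$. Since $f$ is a bijection on $\reals^2$, we get $f^{-1}(Q)=f^{-1}(f(P))=P$, so $Q\rightarrow_{\FTwoTR}P$.
\item Hence $P\equiv_{\FTwoTR}Q$ by Definition~\ref{tran-eq-def}, that is, $Q\in[P]_{\FTwoTR}$.
\item For any two occurrences $Q_1,Q_2$, we have $Q_1\equiv P\equiv Q_2$. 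Transitivity of $\mathcal M_{\FTwoTR}$ (closure, Lemma~\ref{ftwotr-transformation-group}), applied in both directions, then gives $Q_1\equiv_{\FTwoTR}Q_2$.
\end{enumerate}

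The only mild obstacle is that the earlier corollary glosses over the passage from ``$\mathcal M$ is an equivalence relation'' to ``all occurrences are pairwise equivalent''. I will handle this with the explicit composition argument above: $f_2\circ f_1^{-1}$ maps $Q_1$ onto $Q_2$ and lies in $\FTwoTR$ by closure and invertibility. I also need to note that $f^{-1}(f(P))=P$ holds setwise, which follows from \eqref{f-inv-f-p-equals-p-eq}. No further calculation is required.
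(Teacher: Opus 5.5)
Your proof is correct and takes the same route as the paper's: you invoke Corollary~\ref{m-ftwotr-tran-eq-rel} and Lemma~\ref{tran-eq-rel-part-lemma} to conclude that every occurrence lies in $[P]_{\FTwoTR}$. The paper simply asserts that inclusion, whereas you justify it explicitly via $f^{-1}(f(P))=P$ and the composition $f_2\circ f_1^{-1}$; this fills in a step the paper leaves implicit rather than constituting a different argument.
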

\begin{proof}
From Corollary~\ref{m-ftwotr-tran-eq-rel}, we know that $\mathcal M_{\FTwoTR}$ is a transformational equivalence relation, and therefore an equivalence relation over $2^{\mathbb R^2}$. From Lemma~\ref{tran-eq-rel-part-lemma}, it follows that $\mathcal M_{\FTwoTR}$ partitions $2^{\mathbb R^2}$ exhaustively and exclusively into transformational equivalence classes. It follows that $\occsetfun(P,D,\FTwoTR)\subset [P]_{\FTwoTR}$ and that therefore $\occsetfun(P,D,\FTwoTR)$ is a set of transformationally equivalent patterns.\qed
\end{proof}

\subsubsection{$\FTwoST$ and $\FTwoST'$}

$\FTwoST$ was defined in Section~\ref{ftwost-section} to be the set containing those 2-dimensional transformations that consist of a scaling parallel to the $x$-axis by a real-valued scale factor (which may be zero), followed by a translation. Any transformation in $\FTwoST$ that has an $x$-axis scale factor of zero does not have a unique inverse. Therefore $\FTwoST$ is not a transformation group. 

Let $P,D\subset\reals^2$ and suppose that we have an occurrence set, $\occsetfun(P,D,\FTwoST)$, that contains an occurrence, $Q$, of $P$, such that the $x$-axis scale factor of the transformation that maps $P$ onto $Q$ is zero. Let $S$ be the subset of $\occsetfun(P,D,\FTwoST)$ that contains occurrences of $P$ in $D$ with respect to $\FTwoST$ that can be obtained from $P$ by transformations whose $x$-axis scale factor is {\em not\/} zero. There is then no transformation in $\FTwoST$ that maps $Q$ onto any of the occurrences in $S$, so if $S$ is not empty, then $\occsetfun(P,D,\FTwoST)$ is {\em not\/} a set of transformationally equivalent patterns. Moreover, this shows that $\mathcal M_{\FTwoST}$ is {\em not\/} a transformational equivalence relation.
\begin{figure}[t]
\begin{center}
\resizebox*{.7\columnwidth}{!}{
\begin{minipage}{\linewidth}
\begin{tabbing}
\input{ColumnDefinitions}
$\fun{$\BTSF_{\FTwoST'}$}(\objectBasis,\imageBasis)$\\
1\>$\dmif \imageBasis = \objectBasis$\\
2\>\>$\dmreturn\lbrace\langle1,0,0\rangle\rbrace\dmcomment{Identity transformation}$\\
3\>$p_1\becomes\objectBasis[0],p_2\becomes\objectBasis[1],q_1\becomes\imageBasis[0],q_2\becomes\imageBasis[1]$\\
4\>$\dmif q_2[0]\neq q_1[0]\land p_2[0]=p_1[0]\dmcomment{$\sigma[0]$ is undefined.}$\\
5\>\>$\dmreturn \lbrace\rbrace$\\
6\>$\dmif  q_2[0]= q_1[0] \land p_2[0]=p_1[0] \dmcomment{$\sigma[0]=0/0$}$\\
7\>\>$\sigma_0\becomes1$\\
\>$\dmcommentnotab{So $p_2[0]\neq p_1[0]$.}$\\
8\>$\dmelse\dmif q_2[0]= q_1[0]\dmcomment{Would require $\sigma_0$ to be zero.}$\\
9\>\>$\dmreturn \lbrace\rbrace$\\
10\>$\dmelse$\\
11\>\>$\sigma_0\becomes(q_2[0]-q_1[0])/(p_2[0]-p_1[0])\dmcomment{From \eqref{eleven-F2ST}}$\\
12\>$\sigma_1\becomes q_2[0]-p_2[0]\sigma_0\dmcomment{From \eqref{six-F2ST}}$\\
13\>$\dmif q_1[1]-p_1[1]=q_2[1] - p_2[1]\dmcomment{From \eqref{y-difference-equal-F2ST}}$\\
14\>\>$\sigma_2\becomes q_1[1]-p_1[1]\dmcomment{From \eqref{nine-F2ST}}$\\
15\>\>$\dmreturn\lbrace\langle\sigma_0,\sigma_1,\sigma_2\rangle\rbrace$\\
16\>$\dmreturn\lbrace\rbrace$
\end{tabbing}
\end{minipage}
}
\end{center}
\caption{Algorithm for computing the basis-pair transformation set function for $\FTwoST'$.}
\label{FTwoST-dash-basis-transformation-set-function}
\end{figure}

We define the class $\FTwoST'$ to be the set of 2-dimensional transformations that consist of an $x$-axis scaling by a {\em non-zero\/} real scale factor, followed by a translation. That is, $\FTwoST'$ is the subset of $\FTwoST$ that contains only the  transformations in $\FTwoST$ whose $x$-axis scale factors are non-zero. $\lambda(\FTwoST')=3$ and the transformation, $f\in\FTwoST'$, that consists of an $x$-axis scaling by a scale factor, $s$, followed by a translation by the vector, $\langle v_x, v_y\rangle$, can be uniquely identified by the transformation parameter vector, $\langle s,v_x, v_y\rangle\in\reals^3$, where $s\in\reals\setminus0$ and $v_x,v_y\in\reals$. The transformation class function for $\FTwoST'$ is the same as that for $\FTwoST$---that is,
\begin{equation}
\phi_{\FTwoST'}(\sigma,p)=\langle p[0]\sigma[0]+\sigma[1],p[1]+\sigma[2]\rangle\,.
\label{phi-f2st-dash}
\end{equation}
Also, as for $\FTwoST$, $\beta(\FTwoST')=2$ and $\tau(\FTwoST')=1$. The basis-pair transformation set function for $\FTwoST'$ is as follows:
\begin{equation}
\BTSF_{\FTwoST'}\left(\langle p_1,p_2\rangle,\langle q_1,q_2\rangle\right)=\left\lbrace\left\langle\sigma[0],q_2[0]-p_2[0]\sigma[0],q_1[1]-p_1[1]\right\rangle\right\rbrace\,,
\end{equation}
where
$$q_1[1]-p_1[1]=q_2[1] - p_2[1]$$
and where
\begin{equation}
\sigma[0]=\begin{cases}
1,&\text{ if $p_2[0]=p_1[0]$ and $q_2[0]= q_1[0]$; and}\\
\frac{q_2[0]-q_1[0]}{p_2[0]-p_1[0]},&\text{otherwise, provided that }p_2[0]\neq p_1[0]\text{ and }q_2[0]\neq q_1[0]\,.
\end{cases}
\end{equation}
$\BTSF_{\FTwoST'}$ can be computed using the algorithm in Fig.~\ref{FTwoST-dash-basis-transformation-set-function}. Note that this algorithm is the same as that in Fig.~\ref{FTwoST-basis-transformation-set-function} except that an extra check has been added in lines 8--9 to ensure that the $x$-axis scale factor cannot be zero.

We will now show that $\FTwoST'$ is a transformation group.

\begin{lemma}
$\FTwoST'$ is a transformation group.
\label{f2std-group}
\end{lemma}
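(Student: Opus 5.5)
I will follow the same template as the proofs of Lemmas~\ref{ftwot-transformation-group} and \ref{ftwotr-transformation-group}: verify closure, associativity, identity and invertibility in turn, working throughout with the transformation class function $\phi_{\FTwoST'}$ in \eqref{phi-f2st-dash}. For $f,g\in\FTwoST'$ I write $\sigma^f=\sigma_{\FTwoST'}(f)$, $\sigma^g=\sigma_{\FTwoST'}(g)$, $\sigma^x_i=\sigma^x[i]$ and $p=\langle p_x,p_y\rangle$.

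\textbf{Closure.} Composing the two maps gives
$$f(g(p))=\left\langle \sigma^f_0\sigma^g_0p_x+\sigma^f_0\sigma^g_1+\sigma^f_1,\; p_y+\sigma^g_2+\sigma^f_2\right\rangle=\phi_{\FTwoST'}\left(\left\langle\sigma^f_0\sigma^g_0,\;\sigma^f_0\sigma^g_1+\sigma^f_1,\;\sigma^g_2+\sigma^f_2\right\rangle,p\right).$$
The key point, and the only place where the restriction defining $\FTwoST'$ matters, is that $\sigma^f_0\sigma^g_0\neq0$ whenever $\sigma^f_0,\sigma^g_0\neq0$. Hence $f\circ g\in\FTwoST'$.

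\textbf{Associativity.} This holds because composition of functions is always associative. To stay in the paper's style, I will also expand both $(f\circ g)(h(p))$ and $f((g\circ h)(p))$ using the closure formula. Both give the $x$-coordinate $\sigma^f_0\sigma^g_0\sigma^h_0p_x+\sigma^f_0\sigma^g_0\sigma^h_1+\sigma^f_0\sigma^g_1+\sigma^f_1$ and the $y$-coordinate $p_y+\sigma^h_2+\sigma^g_2+\sigma^f_2$.

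\textbf{Identity.} Take $\sigma_{\FTwoST'}(e)=\langle1,0,0\rangle$. Its scale factor is non-zero, so $e\in\FTwoST'$, and \eqref{phi-f2st-dash} gives $e(p)=p$ for all $p$.

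\textbf{Invertibility.} Let $q=\phi_{\FTwoST'}(\sigma,p)$. Solving for $p$ gives $p_x=(q_x-\sigma_1)/\sigma_0$ and $p_y=q_y-\sigma_2$, so
$$p=\phi_{\FTwoST'}\left(\left\langle \tfrac1{\sigma_0},-\tfrac{\sigma_1}{\sigma_0},-\sigma_2\right\rangle,q\right).$$
Division by $\sigma_0$ is legitimate precisely because $\sigma_0\neq0$, and $1/\sigma_0\neq0$, so this inverse also lies in $\FTwoST'$.

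Finally, I will check with the closure formula that composing in either order yields $\langle1,0,0\rangle$. Uniqueness of the inverse then follows because each transformation is uniquely identified by its parameter vector.

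The main obstacle is not computational. It is making explicit that the non-zero scale factor condition is preserved under products and reciprocals. This is exactly what fails for $\FTwoST$, where a zero scale factor breaks invertibility.
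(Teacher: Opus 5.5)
Your proposal is correct and follows essentially the same route as the paper's proof: the same four-step verification via $\phi_{\FTwoST'}$, with the same composite parameter vector, the same identity $\langle1,0,0\rangle$ and the same inverse $\langle 1/\sigma_0,-\sigma_1/\sigma_0,-\sigma_2\rangle$. Your explicit remarks that $\sigma^f_0\sigma^g_0\neq0$ and $1/\sigma_0\neq0$, which guarantee that the composite and the inverse actually lie in $\FTwoST'$, and your two-sided check of the inverse, make explicit points that the paper leaves implicit.
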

\begin{proof}
In order to prove that $\FTwoST'$ is a transformation group, we need to show that it satisfies the conditions of closure, associativity, presence of a unique identity element, and invertibility.
\begin{enumerate}
\item{\bfseries Closure}: 
Let $f,g\in\FTwoST'$, $p\in\reals^2$ and $p=\langle p_x, p_y\rangle$. By definition, $f(g(p))=(f\circ g)(p)$. Let $\sigma^f = \sigma_{\FTwoST'}(f)$ and let $\sigma^g = \sigma_{\FTwoST'}(g)$. Let $\sigma^x_i=\sigma^x[i]$. From \eqref{phi-f2st-dash},
\begin{eqnarray}
f(g(p))&=&(f\circ g)(p)\nonumber\\
&=&\phiftwostd\left(\sigma^f,\left\langle p_x\sigma^g_0+\sigma^g_1,p_y+\sigma^g_2\right\rangle\right)\nonumber\\
&=&\left\langle\left(p_x\sigma^g_0+\sigma^g_1\right)\sigma^f_0+\sigma^f_1,\left(p_y+\sigma^g_2\right)+\sigma^f_2\right\rangle\nonumber\\
&=&\left\langle p_x\left(\sigma^g_0\sigma^f_0\right)+\left(\sigma^g_1\sigma^f_0+\sigma^f_1\right), p_y+\left(\sigma^g_2+\sigma^f_2\right)\right\rangle\nonumber\\
&=&\phiftwostd\left(\left\langle \sigma^g_0\sigma^f_0,\sigma^g_1\sigma^f_0+\sigma^f_1,\sigma^g_2+\sigma^f_2\right\rangle,p\right)\label{ftwost-dash-closure}
\end{eqnarray}
Therefore, 
$$\sigma_{\FTwoST'}(f\circ g)= \left\langle \sigma^g_0\sigma^f_0,\sigma^g_1\sigma^f_0+\sigma^f_1,\sigma^g_2+\sigma^f_2\right\rangle\,,$$
which implies that $f\circ g\in \FTwoST'$, which implies that $\FTwoST'$ is closed under transformation composition.
\item{\bfseries Associativity}: Let $f,g,h\in\FTwoST'$, $p\in\reals^2$ and $p=\langle p_x, p_y\rangle$. 
Let $\sigma^f = \sigma_{\FTwoST'}(f)$, $\sigma^g = \sigma_{\FTwoST'}(g)$ and $\sigma^h = \sigma_{\FTwoST'}(h)$. 
Let $\sigma^x_i=\sigma^x[i]$. 
From \eqref{ftwost-dash-closure},
\begin{eqnarray}
(f\circ g)(p)&=&\phiftwostd\left(\left\langle \sigma^g_0\sigma^f_0,\sigma^g_1\sigma^f_0+\sigma^f_1,\sigma^g_2+\sigma^f_2\right\rangle,p\right)\label{ftwostd-fog}\\
(g\circ h)(p)&=&\phiftwostd\left(\left\langle \sigma^h_0\sigma^g_0,\sigma^h_1\sigma^g_0+\sigma^g_1,\sigma^h_2+\sigma^g_2\right\rangle,p\right)\label{ftwostd-goh}
\end{eqnarray}
Therefore, from \eqref{phi-f2st-dash} and \eqref{ftwostd-fog},
\begin{eqnarray}
(f&\circ& g)(h(p))\nonumber\\
&=&\phiftwostd\left(\left\langle \sigma^g_0\sigma^f_0,\sigma^g_1\sigma^f_0+\sigma^f_1,\sigma^g_2+\sigma^f_2\right\rangle,\phiftwostd\left(\sigma_{\FTwoST'}\left(h\right),p\right)\right)\nonumber\\
&=&\phiftwostd\left(\left\langle \sigma^g_0\sigma^f_0,\sigma^g_1\sigma^f_0+\sigma^f_1,\sigma^g_2+\sigma^f_2\right\rangle,\left\langle p_x\sigma^h_0+\sigma^h_1,p_y+\sigma^h_2\right\rangle\right)\nonumber\\
&=&\left\langle \left(p_x\sigma^h_0+\sigma^h_1\right)\sigma^g_0\sigma^f_0+\sigma^g_1\sigma^f_0+\sigma^f_1, p_y+\sigma^h_2+\sigma^g_2+\sigma^f_2\right\rangle\nonumber\\
&=&\left\langle p_x\sigma^h_0\sigma^g_0\sigma^f_0+\sigma^h_1\sigma^g_0\sigma^f_0+\sigma^g_1\sigma^f_0+\sigma^f_1,p_y+\sigma^h_2+\sigma^g_2+\sigma^f_2\right\rangle\nonumber\\
&=&\phiftwostd\left(\left\langle\sigma^h_0\sigma^g_0\sigma^f_0,\sigma^h_1\sigma^g_0\sigma^f_0+\sigma^g_1\sigma^f_0+\sigma^f_1,\sigma^h_2+\sigma^g_2+\sigma^f_2\right\rangle,p\right)\nonumber\\
\label{f2std-assoc-1}
\end{eqnarray}
and, from \eqref{phi-f2st-dash} and \eqref{ftwostd-goh},
\begin{eqnarray}
f((g&\circ& h)(p))\nonumber\\
&=&\phiftwostd\left(\sigma_{\FTwoST'}(f),\phiftwostd\left(\left\langle \sigma^h_0\sigma^g_0,\sigma^h_1\sigma^g_0+\sigma^g_1,\sigma^h_2+\sigma^g_2\right\rangle,p\right)\right)\nonumber\\
&=&\phiftwostd\left(\left\langle\sigma^f_0,\sigma^f_1,\sigma^f_2\right\rangle,
\left\langle p_x\sigma^h_0\sigma^g_0+\sigma^h_1\sigma^g_0+\sigma^g_1,p_y+\sigma^h_2+\sigma^g_2\right\rangle
\right)\nonumber\\
&=&\left\langle \left(p_x\sigma^h_0\sigma^g_0+\sigma^h_1\sigma^g_0+\sigma^g_1\right)\sigma^f_0+\sigma^f_1,p_y+\sigma^h_2+\sigma^g_2+\sigma^f_2\right\rangle\nonumber\\
&=&\left\langle p_x\sigma^h_0\sigma^g_0\sigma^f_0+\sigma^h_1\sigma^g_0\sigma^f_0+\sigma^g_1\sigma^f_0+\sigma^f_1,p_y+\sigma^h_2+\sigma^g_2+\sigma^f_2\right\rangle\nonumber\\
&=&\phiftwostd\left(\left\langle\sigma^h_0\sigma^g_0\sigma^f_0,\sigma^h_1\sigma^g_0\sigma^f_0+\sigma^g_1\sigma^f_0+\sigma^f_1,\sigma^h_2+\sigma^g_2+\sigma^f_2\right\rangle,p\right)\nonumber\\
\label{f2std-assoc-2}
\end{eqnarray}
From \eqref{f2std-assoc-1} and \eqref{f2std-assoc-2} it follows that $(f\circ g)(h(p))=f((g\circ h)(p))$ and that therefore transformation composition is associative over $\FTwoST'$.
\item{\bfseries Identity element}: Suppose that $e\in\FTwoST'$ and $\sigma_{\FTwoST'}(e)=\langle1,0,0\rangle$. Let $p\in\reals^2$ and $p=\langle p_x, p_y\rangle$. Then, from \eqref{phi-f2st-dash},
\begin{eqnarray}
e(p)&=&\phiftwostd(\langle1,0,0\rangle,\langle p_x,p_y\rangle)\nonumber\\
&=&\langle 1p_x+0,p_y+0\rangle\nonumber\\
&=&\langle p_x,p_y\rangle=p\label{ftwostd-identity}
\end{eqnarray}
Therefore, $e(p)=p$ for all $p\in\reals^2$. $e$ is therefore a universal identity element in $\FTwoST'$.
\item{\bfseries Invertibility}: Let $p,q\in\reals^2$ and let $q=\phiftwostd(\sigma,p)$ where $\sigma=\langle\sigma_0,\sigma_1,\sigma_2\rangle$, $p=\langle p_x,p_y\rangle$ and $q=\langle q_x,q_y\rangle$. Then, from \eqref{phi-f2st-dash},
$$q=\langle p_x\sigma_0+\sigma_1,p_y+\sigma_2\rangle$$
and therefore
$$p=\langle p_x,p_y\rangle=\left\langle\frac{q_x}{\sigma_0}-\frac{\sigma_1}{\sigma_0},q_y-\sigma_2\right\rangle=\phiftwostd\left(\left\langle\frac1{\sigma_0},-\frac{\sigma_1}{\sigma_0},-\sigma_2\right\rangle,q\right)\,.$$
Therefore, if $f\in\FTwoST'$ and $\sigma=\sigma_{\FTwoST'}(f)$, then
$$\sigma_{\FTwoST'}(f^{-1})=\left\langle\frac1{\sigma[0]},-\frac{\sigma[1]}{\sigma[0]},-\sigma[2]\right\rangle\,.$$
Therefore, every transformation in $\FTwoST'$ has a unique inverse.
\end{enumerate}
$\FTwoST'$ is therefore a transformation group.\qed
\end{proof}

\begin{corollary}
$\mathcal M_{\FTwoST'}$ is a transformational equivalence relation.
\label{m-ftwostd-tran-eq-rel}
\end{corollary}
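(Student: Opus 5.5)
The corollary follows in the same way as Corollaries~\ref{m-ftwot-tran-eq-rel} and~\ref{m-ftwotr-tran-eq-rel}. Lemma~\ref{f2std-group} shows that $\FTwoST'$ is a transformation group, that is, a group under transformation composition. Corollary~\ref{match-rel-of-tran-grp-is-eq-rel} says that the match relation of any $k$-dimensional transformation class forming such a group is a transformational equivalence relation. So all that remains is to confirm that $\FTwoST'$ is a $2$-dimensional transformation class in the sense of Definition~\ref{transformation-class-definition}, and then apply that corollary.

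First I would check the three constraints of Definition~\ref{transformation-class-definition} for $\FTwoST'$, using the data given just before Lemma~\ref{f2std-group}.
\begin{enumerate}
\item Take $\lambda(\FTwoST')=3$, with parameter vector $\langle s,v_x,v_y\rangle$ and $s\neq0$.
\item Take $\phi_{\FTwoST'}$ to be the function in \eqref{phi-f2st-dash}.
\item Take $\beta(\FTwoST')=2$ and $\tau(\FTwoST')=1$, with the basis-pair transformation set function computed by the algorithm in Fig.~\ref{FTwoST-dash-basis-transformation-set-function}.
\end{enumerate}
For the third constraint, restricting $\FTwoST$ to non-zero scale factors can only remove solutions. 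Hence $|\BTSF_{\FTwoST'}|\le1$ follows from the $\FTwoST$ analysis. The value $1$ is attained, for instance by a basis pair related by the identity, so $\tau(\FTwoST')=1$.

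The minimality of $\beta(\FTwoST')$ is the same argument as for $\FTwoST$. With a single point, for every non-zero $s$ there is a translation completing the map, which gives infinitely many solutions. So $\beta=1$ fails and $\beta=2$ is minimal.

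With $\FTwoST'$ established as a transformation class and a transformation group, Theorem~\ref{thm-tran-grp-implies-eq-rel} applies with $A=2^{\reals^2}$, acting on point sets via $f(S)$. Strictly, that theorem is stated for functions on points. Its three steps (identity, inverse, composition) carry over verbatim to the induced action on sets, because $e(S)=S$, $f^{-1}(f(S))=S$ and $(h\circ g)(S)=h(g(S))$. This is exactly what Corollary~\ref{match-rel-of-tran-grp-is-eq-rel} already relies on, so I would simply cite it.

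The only step needing any care is this verification that $\FTwoST'$ is a transformation class, and in particular the minimality condition on $\beta$. Even that is routine given the $\FTwoST$ discussion. The rest is a one-line appeal to Lemma~\ref{f2std-group} and Corollary~\ref{match-rel-of-tran-grp-is-eq-rel}.
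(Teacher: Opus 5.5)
Your proposal is correct and is essentially the paper's proof, which consists only of the appeal to Lemma~\ref{f2std-group} and Corollary~\ref{match-rel-of-tran-grp-is-eq-rel}; the paper treats $\FTwoST'$ as a transformation class on the strength of the definitions given just before that lemma, so your extra verification is not required. If you keep it, note that $|\BTSF_{\FTwoST'}|\le1$ does not literally follow from the $\FTwoST$ analysis but only from the paper's convention of setting $\sigma[0]=1$ in the $\frac00$ case: when $p_1[0]=p_2[0]$, $q_1[0]=q_2[0]$ and $q_1[1]-p_1[1]=q_2[1]-p_2[1]$, every $\langle s,\,q_1[0]-p_1[0]s,\,q_1[1]-p_1[1]\rangle$ with $s\neq0$ satisfies the basis equations of Definition~\ref{transformation-class-definition}, although this does not affect the equivalence-relation argument itself, which uses only the group structure.
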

\begin{proof}
This follows directly from Lemma~\ref{f2std-group} and Corollary~\ref{match-rel-of-tran-grp-is-eq-rel}.\qed
\end{proof}

\begin{corollary}
If $P,D\subset\reals^2$, then $\occsetfun(P,D,\FTwoST')$ is a set of transformationally equivalent patterns.
\end{corollary}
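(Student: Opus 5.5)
The plan is to follow exactly the pattern of the analogous corollaries for $\FTwoT$ and $\FTwoTR$. First, by Corollary~\ref{m-ftwostd-tran-eq-rel}, $\mathcal M_{\FTwoST'}$ is a transformational equivalence relation. It is therefore an equivalence relation over $2^{\reals^2}$. By Lemma~\ref{tran-eq-rel-part-lemma}, it partitions $2^{\reals^2}$ exhaustively and exclusively into transformational equivalence classes of the form $[P]_{\FTwoST'}$.

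The key step is to show that every element of $\occsetfun(P,D,\FTwoST')$ lies in $[P]_{\FTwoST'}$. By Definition~\ref{occurrence-set-def}, each $Q\in\occsetfun(P,D,\FTwoST')$ has the form $Q=f(P)$ for some $f\in\FTwoST'$. Hence $P\rightarrow_{\FTwoST'}Q$ by Definition~\ref{match-rel-tc-def}. For the converse direction, Lemma~\ref{f2std-group} guarantees that $f^{-1}\in\FTwoST'$, with parameter vector $\langle 1/\sigma[0],-\sigma[1]/\sigma[0],-\sigma[2]\rangle$. Since $f$ is a bijection, $f^{-1}(Q)=P$, which gives $Q\rightarrow_{\FTwoST'}P$. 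Thus $P\equiv_{\FTwoST'}Q$ by Definition~\ref{tran-eq-def}, and so $Q\in[P]_{\FTwoST'}$.

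To finish, take any two occurrences $Q_1,Q_2$ in the occurrence set. Both are equivalent to $P$, so by symmetry and transitivity of the equivalence relation $\mathcal M_{\FTwoST'}$ we get $Q_1\equiv_{\FTwoST'}Q_2$. Concretely, $f_2\circ f_1^{-1}\in\FTwoST'$ maps $Q_1$ onto $Q_2$ by closure. This is precisely the condition of Definition~\ref{set-of-tran-eq-patterns-def}.

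The only point requiring care is this last pairwise step; the paper's earlier corollaries pass over it by writing $\occsetfun\subset[P]$. I would make it explicit using the closure and inverse properties from Lemma~\ref{f2std-group}. I would also note, to contrast with the discussion of $\FTwoST$, that non-zero scale factors are exactly what guarantee invertibility here.
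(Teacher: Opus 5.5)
Your proof is correct and follows the same route as the paper's: you use Corollary~\ref{m-ftwostd-tran-eq-rel} and Lemma~\ref{tran-eq-rel-part-lemma} to get the partition into equivalence classes, and then place $\occsetfun(P,D,\FTwoST')$ inside $[P]_{\FTwoST'}$. The difference is that you spell out the steps the paper leaves implicit. Each $Q=f(P)$ satisfies $P\equiv_{\FTwoST'}Q$ because $f^{-1}\in\FTwoST'$, and any two occurrences are then related via $f_2\circ f_1^{-1}$ by closure. This tightens the paper's argument rather than changing it.
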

\begin{proof}
From Corollary~\ref{m-ftwostd-tran-eq-rel}, we know that $\mathcal M_{\FTwoST'}$ is a transformational equivalence relation, and therefore an equivalence relation over $2^{\mathbb R^2}$. From Lemma~\ref{tran-eq-rel-part-lemma}, it follows that $\mathcal M_{\FTwoST'}$ partitions $2^{\mathbb R^2}$ exhaustively and exclusively into transformational equivalence classes. It follows that $\occsetfun(P,D,\FTwoST')\subset [P]_{\FTwoST'}$ and that therefore $\occsetfun(P,D,\FTwoST')$ is a set of transformationally equivalent patterns.\qed
\end{proof}

\subsubsection{$\FTwoSTR$ and $\FTwoSTR'$}
$\FTwoSTR$ was defined in Section~\ref{example-FTwoSTR} to be the set containing those 2-dimensional transformations that consist of a scaling parallel to the $x$-axis by a real-valued scale factor (which may be zero), followed by a translation, optionally followed by a reflection in the $x$-axis. Any transformation in $\FTwoSTR$ that has an $x$-axis scale factor of zero does not have a unique inverse. Therefore $\FTwoSTR$ is not a transformation group. 

Let $P,D\subset\reals^2$ and suppose that we have an occurrence set, $\occsetfun(P,D,\FTwoSTR)$, that contains an occurrence, $Q$, of $P$, such that the $x$-axis scale factor of the transformation that maps $P$ onto $Q$ is zero. Let $S$ be the subset of $\occsetfun(P,D,\FTwoSTR)$ that contains occurrences of $P$ in $D$ with respect to $\FTwoSTR$ that can be obtained from $P$ by transformations whose $x$-axis scale factor is {\em not\/} zero. There is then no transformation in $\FTwoSTR$ that maps $Q$ onto any of the occurrences in $S$, so if $S$ is not empty, then $\occsetfun(P,D,\FTwoSTR)$ is {\em not\/} a set of transformationally equivalent patterns. Moreover, this shows that $\mathcal M_{\FTwoSTR}$ is {\em not\/} a transformational equivalence relation.

We define the class $\FTwoSTR'$ to be the set of 2-dimensional transformations that consist of an $x$-axis scaling by a {\em non-zero\/} real scale factor, followed by a translation, optionally followed by a reflection in the $x$-axis. That is, $\FTwoSTR'$ is the subset of $\FTwoSTR$ that contains only the  transformations in $\FTwoSTR$ whose $x$-axis scale factors are non-zero. 
$\lambda(\FTwoSTR')=4$ and each transformation $f\in\FTwoSTR'$, can be uniquely identified by a 4-vector, $\langle s, v_x, v_y, b\rangle\in\reals^4$, where $s\in\reals\setminus0$, $v_x,v_y\in\reals$ and $b\in\lbrace-1,1\rbrace$. If $\sigma_{\FTwoSTR'}(f)=\langle s, v_x, v_y, b\rangle$, then $f$ is the transformation that scales parallel to the $x$-axis by a {\em non-zero\/} scale factor, $s$, then translates by the vector, $\langle v_x, v_y\rangle$ and then multiplies the $y$-coordinate of the resulting point by $b$ (which is either $-1$ or $1$). 
\begin{figure}[t]
\begin{center}
\resizebox*{.7\columnwidth}{!}{
\begin{minipage}{\linewidth}
\begin{tabbing}
\input{ColumnDefinitions}
$\fun{$\BTSF_{\FTwoSTR'}$}(\objectBasis,\imageBasis)$\\
1\>$\dmif \imageBasis = \objectBasis$\\
2\>\>$\dmreturn\lbrace\langle1,0,0,1\rangle\rbrace\dmcomment{Identity transformation}$\\
3\>$p_1\becomes\objectBasis[0],p_2\becomes\objectBasis[1],q_1\becomes\imageBasis[0],q_2\becomes\imageBasis[1]$\\
4\>$\dmif q_2[0]\neq q_1[0]\land p_2[0]=p_1[0]\dmcomment{$\sigma[0]$ is undefined.}$\\
5\>\>$\dmreturn \lbrace\rbrace$\\
6\>$\dmif q_2[0]= q_1[0] \land p_2[0]=p_1[0] \dmcomment{$\sigma[0]=0/0$}$\\
7\>\>$\sigma_0\becomes1$\\
\>$\dmcommentnotab{So $p_2[0]\neq p_1[0]$}$\\
8\>$\dmelse\dmif q_2[0]= q_1[0]\dmcomment{Would require $\sigma_0$ to be zero.}$\\
9\>\>$\dmreturn \lbrace\rbrace$\\
10\>$\dmelse$\\
11\>\>$\sigma_0\becomes(q_2[0]-q_1[0])/(p_2[0]-p_1[0])\dmcomment{From \eqref{eleven}}$\\
12\>$S\becomes\lbrace\rbrace$\\
13\>$\sigma_1\becomes q_2[0]-p_2[0]\sigma_0\dmcomment{From \eqref{six}}$\\
14\>$\dmif q_1[1]-p_1[1]=q_2[1] - p_2[1]\dmcomment{From \eqref{y-difference-equal}}$\\
15\>\>$\sigma_3\becomes1$\\
16\>\>$\sigma_2\becomes q_1[1]-p_1[1]\dmcomment{From \eqref{nine}}$\\
17\>\>$S\becomes \lbrace\langle\sigma_0,\sigma_1,\sigma_2,\sigma_3\rangle\rbrace$\\
18\>$\dmif p_1[1]+q_1[1]=p_2[1] + q_2[1]\dmcomment{From \eqref{y-difference-equal-sigma-3-equals-1}}$\\
19\>\>$\sigma_3\becomes-1$\\
20\>\>$\sigma_2\becomes-p_1[1]-q_1[1]\dmcomment{From \eqref{seven}}$\\
21\>\>$S\becomes S\cup\lbrace\langle\sigma_0,\sigma_1,\sigma_2,\sigma_3\rangle\rbrace$\\
22\>$\dmreturn S$
\end{tabbing}
\end{minipage}
}
\end{center}
\caption{Algorithm for computing the basis-pair transformation set function for $\FTwoSTR'$.}
\label{FTwoSTR-dash-basis-transformation-set-function}
\end{figure}

The transformation class function for $\FTwoSTR'$ is the same as that for $\FTwoSTR$---that is,
\begin{equation}
\phi_{\FTwoSTR'}(\sigma,p)=\langle p[0]\sigma[0]+\sigma[1],\sigma[3](p[1]+\sigma[2])\rangle\,.
\label{phi-f2str-dash}
\end{equation}
Also, as for $\FTwoSTR$, $\beta(\FTwoSTR')=2$ and $\tau(\FTwoSTR')=2$. The basis-pair transformation set function for $\FTwoSTR'$ is as follows:
\begin{multline}
\BTSF_{\FTwoSTR'}\left(\langle p_1,p_2\rangle,\langle q_1,q_2\rangle\right)
=\\
\left\lbrace\left\langle\sigma[0],q_2[0]-p_2[0]\sigma[0],q_1[1]-p_1[1],1\right\rangle\right.,\\
\left.\left\langle\sigma[0],q_2[0]-p_2[0]\sigma[0],-p_1[1]-q_1[1],-1\right\rangle\right\rbrace\,,
\label{equation-BTSF-FTwoSTR-dash}
\end{multline}
where 
$$q_1[1]-p_1[1]=q_2[1] - p_2[1] \text{ when } \sigma[3]=1\,,$$
and 
$$p_1[1]+q_1[1]=p_2[1] + q_2[1]\text{ when }\sigma[3]=-1\,,$$
and where 
\begin{equation}
\sigma[0]=\begin{cases}
1,&\text{ if $p_2[0]=p_1[0]$ and $q_2[0]= q_1[0]$; and}\\
\frac{q_2[0]-q_1[0]}{p_2[0]-p_1[0]},&\text{otherwise, provided that }p_2[0]\neq p_1[0]\text{ and }q_2[0]\neq q_1[0]\,.
\end{cases}
\end{equation}
$\BTSF_{\FTwoSTR'}$ can be computed using the algorithm in Fig.~\ref{FTwoSTR-dash-basis-transformation-set-function}. Note that this algorithm is the same as that in Fig.~\ref{FTwoSTR-basis-transformation-set-function} except that an extra check has been added in lines 9--10 to ensure that the $x$-axis scale factor cannot be zero.

We will now show that $\FTwoSTR'$ is a transformation group.

\begin{lemma}
$\FTwoSTR'$ is a transformation group.
\label{f2strd-group}
\end{lemma}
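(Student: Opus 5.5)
I will follow the same template as the proofs of Lemmas~\ref{ftwotr-transformation-group} and \ref{f2std-group}, checking closure, associativity, identity and invertibility directly from the transformation class function \eqref{phi-f2str-dash}. Throughout, I will write $\sigma^f=\sigma_{\FTwoSTR'}(f)$ and $\sigma^f_i=\sigma^f[i]$, and use the facts $\sigma_0\neq0$ and $\sigma_3\in\lbrace1,-1\rbrace$, so that $1/\sigma_3=\sigma_3$.

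\textbf{Closure.} For $f,g\in\FTwoSTR'$ and $p=\langle p_x,p_y\rangle$, I will expand $f(g(p))=\phi_{\FTwoSTR'}(\sigma^f,\phi_{\FTwoSTR'}(\sigma^g,p))$. The $x$-coordinate is $p_x\sigma^g_0\sigma^f_0+\sigma^g_1\sigma^f_0+\sigma^f_1$, exactly as in \eqref{ftwost-dash-closure}. The $y$-coordinate is $\sigma^f_3(\sigma^g_3(p_y+\sigma^g_2)+\sigma^f_2)$. Factoring out $\sigma^f_3\sigma^g_3$, as in \eqref{ftwotr-closure}, gives
$$\sigma_{\FTwoSTR'}(f\circ g)=\left\langle\sigma^g_0\sigma^f_0,\ \sigma^g_1\sigma^f_0+\sigma^f_1,\ \sigma^g_2+\sigma^f_2\sigma^g_3,\ \sigma^f_3\sigma^g_3\right\rangle.$$
The key observations are that $\sigma^g_0\sigma^f_0\neq0$ because both factors are non-zero, and that $\sigma^f_3\sigma^g_3\in\lbrace1,-1\rbrace$. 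Hence $f\circ g\in\FTwoSTR'$.

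\textbf{Associativity.} I could repeat the computation of $(f\circ g)(h(p))$ and $f((g\circ h)(p))$, as the paper does for the earlier classes. It is simpler, however, to note that composition of functions is associative pointwise: both sides equal $f(g(h(p)))$. If I want to match the paper's style, I will also expand both sides using the closure formula and show they agree. The $x$-part is identical to \eqref{f2std-assoc-1}. The $y$-part reduces to $\sigma^f_3\sigma^g_3\sigma^h_3\bigl(p_y+\sigma^h_2+\sigma^g_2\sigma^h_3+\sigma^f_2\sigma^g_3\sigma^h_3\bigr)$ in both cases.

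\textbf{Identity.} I will take $\sigma_{\FTwoSTR'}(e)=\langle1,0,0,1\rangle$ and check directly that $e(p)=p$. The scale factor of $e$ is $1\neq0$, so $e\in\FTwoSTR'$.

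\textbf{Invertibility.} I will solve $q=\langle p_x\sigma_0+\sigma_1,\sigma_3(p_y+\sigma_2)\rangle$ for $p$. This gives $p_x=(q_x-\sigma_1)/\sigma_0$ and $p_y=\sigma_3q_y-\sigma_2=\sigma_3(q_y-\sigma_2\sigma_3)$. Therefore
$$\sigma_{\FTwoSTR'}(f^{-1})=\left\langle\frac1{\sigma_0},-\frac{\sigma_1}{\sigma_0},-\sigma_2\sigma_3,\sigma_3\right\rangle,$$
which lies in $\FTwoSTR'$ since $1/\sigma_0\neq0$. Uniqueness follows because $f$ is a bijection.

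The only real obstacle is bookkeeping in the $y$-component, where the reflection factor multiplies the translation. The trick that handles it is rewriting $\sigma^f_2$ as $\sigma^g_3\cdot(\sigma^f_2\sigma^g_3)$, using $(\sigma^g_3)^2=1$. One must also check at each step that the new scale factor stays non-zero, since this is exactly what distinguishes $\FTwoSTR'$ from $\FTwoSTR$, which is not a group.
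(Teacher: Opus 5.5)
Your proposal is correct and follows essentially the same route as the paper. You make the same four checks, and your closure and inverse parameter vectors agree with the paper's once $1/\sigma_3=\sigma_3$ is used. The differences are cosmetic: you may cite pointwise associativity of function composition instead of expanding both sides, and you explicitly verify that the scale factors of $f\circ g$, $e$ and $f^{-1}$ are non-zero, which the paper leaves implicit.
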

\begin{proof}
In order to prove that $\FTwoSTR'$ is a transformation group, we need to show that it satisfies the conditions of closure, associativity, presence of a unique identity element, and invertibility.
\begin{enumerate}
\item{\bfseries Closure}: 
Let $f,g\in\FTwoSTR'$, $p\in\reals^2$ and $p=\langle p_x, p_y\rangle$. By definition, $f(g(p))=(f\circ g)(p)$. Let $\sigma^f = \sigma_{\FTwoSTR'}(f)$ and let $\sigma^g = \sigma_{\FTwoSTR'}(g)$. Let $\sigma^x_i=\sigma^x[i]$. 
From \eqref{phi-f2str-dash},
\begin{eqnarray}
f(g(p))&=&(f\circ g)(p)\nonumber\\
&=&\phiftwostrd\left(\sigma^f,\left\langle p_x\sigma^g_0 + \sigma^g_1,\sigma^g_3\left(p_y+\sigma^g_2\right)\right\rangle\right)\nonumber\\
&=&\left\langle\left(p_x\sigma^g_0+\sigma^g_1\right)\sigma^f_0+\sigma^f_1,\sigma^f_3\left(\sigma^g_3\left(p_y+\sigma^g_2\right)+\sigma^f_2\right)\right\rangle\nonumber\\
&=&\left\langle p_x\sigma^g_0\sigma^f_0+\sigma^g_1\sigma^f_0+\sigma^f_1,\sigma^f_3\sigma^g_3\left(p_y+\sigma^g_2+\frac{\sigma^f_2}{\sigma^g_3}\right)\right\rangle\nonumber\\
&=&\phiftwostrd\left(\left\langle \sigma^g_0\sigma^f_0, \sigma^g_1\sigma^f_0+\sigma^f_1, \sigma^g_2+\frac{\sigma^f_2}{\sigma^g_3}, \sigma^f_3\sigma^g_3\right\rangle,p\right)\nonumber\\
\label{ftwostr-dash-closure}
\end{eqnarray}
Therefore 
$$
\sigma_{\FTwoSTR'}\left(f\circ g\right)=\left\langle \sigma^g_0\sigma^f_0, \sigma^g_1\sigma^f_0+\sigma^f_1, \sigma^g_2+\frac{\sigma^f_2}{\sigma^g_3}, \sigma^f_3\sigma^g_3\right\rangle\,,
$$
which implies that $f\circ g\in \FTwoSTR'$, which implies that $\FTwoSTR'$ is closed under transformation composition.
\item{\bfseries Associativity}: Let $f,g,h\in\FTwoSTR'$, $p\in\reals^2$ and $p=\langle p_x, p_y\rangle$. 
Let $\sigma^f = \sigma_{\FTwoSTR'}(f)$, $\sigma^g = \sigma_{\FTwoSTR'}(g)$ and $\sigma^h = \sigma_{\FTwoSTR'}(h)$. 
Let $\sigma^x_i=\sigma^x[i]$. 
From \eqref{ftwostr-dash-closure},
\begin{eqnarray}
(f\circ g)(p)&=&\phiftwostrd\left(\left\langle \sigma^g_0\sigma^f_0, \sigma^g_1\sigma^f_0+\sigma^f_1, \sigma^g_2+\frac{\sigma^f_2}{\sigma^g_3}, \sigma^f_3\sigma^g_3\right\rangle,p\right)\label{ftwostrd-fog}\\
(g\circ h)(p)&=&\phiftwostrd\left(\left\langle \sigma^h_0\sigma^g_0, \sigma^h_1\sigma^g_0+\sigma^g_1, \sigma^h_2+\frac{\sigma^g_2}{\sigma^h_3}, \sigma^g_3\sigma^h_3\right\rangle,p\right)\label{ftwostrd-goh}
\end{eqnarray}
Therefore, from \eqref{phi-f2str-dash} and \eqref{ftwostrd-fog},
\begin{eqnarray}
(f&\circ& g)(h(p))\nonumber\\
&=&\phiftwostrd\left(\left\langle \sigma^g_0\sigma^f_0, \sigma^g_1\sigma^f_0+\sigma^f_1, \sigma^g_2+\frac{\sigma^f_2}{\sigma^g_3}, \sigma^f_3\sigma^g_3\right\rangle,h(p)\right)\nonumber\\
&=&\phiftwostrd\left(\left\langle \sigma^g_0\sigma^f_0, \sigma^g_1\sigma^f_0+\sigma^f_1, \sigma^g_2+\frac{\sigma^f_2}{\sigma^g_3}, \sigma^f_3\sigma^g_3\right\rangle,\left\langle p_x\sigma^h_0 + \sigma^h_1,\sigma^h_3\left(p_y+\sigma^h_2\right)\right\rangle\right)\nonumber\\
&=&	\left\langle 
		\left(p_x\sigma^h_0 + \sigma^h_1\right)\sigma^g_0\sigma^f_0+ \sigma^g_1\sigma^f_0+\sigma^f_1,
		\sigma^f_3\sigma^g_3\left(\sigma^h_3\left(p_y+\sigma^h_2\right)+\sigma^g_2+\frac{\sigma^f_2}{\sigma^g_3}\right)
	\right\rangle
	\nonumber\\
&=&\left\langle
	p_x\sigma^h_0\sigma^g_0\sigma^f_0
	+ \sigma^h_1\sigma^g_0\sigma^f_0+\sigma^g_1\sigma^f_0+\sigma^f_1,
	\sigma^f_3\sigma^g_3\sigma^h_3\left(
		p_y + \sigma^h_2 + \frac{\sigma^g_2}{\sigma^h_3} + \frac{\sigma^f_2}{\sigma^g_3\sigma^h_3}
	\right)
\right\rangle
\nonumber\\
&=&\phiftwostrd\left(
	\left\langle
		\sigma^h_0\sigma^g_0\sigma^f_0,
		\sigma^h_1\sigma^g_0\sigma^f_0+\sigma^g_1\sigma^f_0+\sigma^f_1,
		\sigma^h_2 + \frac{\sigma^g_2}{\sigma^h_3} + \frac{\sigma^f_2}{\sigma^g_3\sigma^h_3},
		\sigma^f_3\sigma^g_3\sigma^h_3
	\right\rangle,
	p
\right)\nonumber\\\label{f2strd-assoc-1}
\end{eqnarray}
and, from \eqref{phi-f2str-dash} and \eqref{ftwostrd-goh},
\begin{eqnarray}
f((g&\circ& h)(p))\nonumber\\
&=&\phiftwostrd\left(
	\sigma_{\FTwoSTR'}(f),
	\phiftwostrd\left(
		\sigma_{\FTwoSTR'}\left(g\circ h\right),
		p
	\right)
\right)\nonumber\\
&=&\phiftwostrd\left(
	\sigma_{\FTwoSTR'}(f),
	\phiftwostrd\left(
	\left\langle 
		\sigma^h_0\sigma^g_0, 
		\sigma^h_1\sigma^g_0+\sigma^g_1, 
		\sigma^h_2+\frac{\sigma^g_2}{\sigma^h_3}, 
		\sigma^g_3\sigma^h_3
	\right\rangle,
	p
	\right)
\right)\nonumber\\
&=&\phiftwostrd\left(
	\sigma_{\FTwoSTR'}(f),
	\left\langle
		p_x\sigma^h_0\sigma^g_0 + \sigma^h_1\sigma^g_0+\sigma^g_1,
		\sigma^g_3\sigma^h_3\left(
			p_y + \sigma^h_2+\frac{\sigma^g_2}{\sigma^h_3}
		\right)
	\right\rangle
\right)\nonumber\\
&=&\left\langle
	\sigma^f_0\left(
		p_x\sigma^h_0\sigma^g_0 + \sigma^h_1\sigma^g_0+\sigma^g_1
	\right)
	+\sigma^f_1,
	\sigma^f_3\left(
		\sigma^g_3\sigma^h_3\left(
			p_y + \sigma^h_2+\frac{\sigma^g_2}{\sigma^h_3}
		\right)
		+\sigma^f_2
	\right)
\right\rangle\nonumber\\
&=&\left\langle
	p_x\sigma^f_0\sigma^h_0\sigma^g_0 + \sigma^f_0\sigma^h_1\sigma^g_0+\sigma^f_0\sigma^g_1
	+\sigma^f_1,
		\sigma^f_3\sigma^g_3\sigma^h_3\left(
			p_y + \sigma^h_2+\frac{\sigma^g_2}{\sigma^h_3}
			+ \frac{\sigma^f_2}{\sigma^g_3\sigma^h_3}
		\right)
\right\rangle
\nonumber\\
&=&\phiftwostrd\left(
	\left\langle
		\sigma^f_0\sigma^h_0\sigma^g_0,
		\sigma^f_0\sigma^h_1\sigma^g_0+\sigma^f_0\sigma^g_1+\sigma^f_1,
		\sigma^h_2+\frac{\sigma^g_2}{\sigma^h_3} + \frac{\sigma^f_2}{\sigma^g_3\sigma^h_3},
		\sigma^f_3\sigma^g_3\sigma^h_3
	\right\rangle,
	p
\right)
\nonumber\\
\label{f2strd-assoc-2}
\end{eqnarray}
From \eqref{f2strd-assoc-1} and \eqref{f2strd-assoc-2}, it follows that $(f\circ g)(h(p))=f((g\circ h)(p))$ and that therefore transformation composition is associative over $\FTwoSTR'$.
\item{\bfseries Identity element}: Suppose that $e\in\FTwoSTR'$ and $\sigma_{\FTwoSTR'}(e)=\langle1,0,0,1\rangle$. Let $p\in\reals^2$ and $p=\langle p_x, p_y\rangle$. Then, from \eqref{phi-f2str-dash},
\begin{eqnarray}
e(p)&=&\phiftwostrd(\langle1,0,0,1\rangle,\langle p_x,p_y\rangle)\nonumber\\
&=&\langle 1p_x+0,1(p_y+0)\rangle\nonumber\\
&=&\langle p_x,p_y\rangle=p\label{ftwostrd-identity}
\end{eqnarray}
Therefore, $e(p)=p$ for all $p\in\reals^2$. $e$ is therefore a universal identity element in $\FTwoSTR'$.
\item{\bfseries Invertibility}: Let $p,q\in\reals^2$ and let $q=\phiftwostrd(\sigma,p)$ where $\sigma=\langle\sigma_0,\sigma_1,\sigma_2,\sigma_3\rangle$, $p=\langle p_x,p_y\rangle$ and $q=\langle q_x,q_y\rangle$. Then, from \eqref{phi-f2str-dash},
$$q=\langle p_x\sigma_0+\sigma_1,\sigma_3(p_y+\sigma_2)\rangle$$
and therefore
$$p=\langle p_x,p_y\rangle=\left\langle
	\frac{q_x}{\sigma_0}-\frac{\sigma_1}{\sigma_0},
	\frac{1}{\sigma_3}\left(q_y-\sigma_3\sigma_2\right)
\right\rangle=\phiftwostrd\left(
	\left\langle\frac1{\sigma_0}, -\frac{\sigma_1}{\sigma_0},-\sigma_3\sigma_2,\frac1{\sigma_3}\right\rangle,
	q\right)\,.$$
Recall that $\sigma_3\in\left\lbrace1,-1\right\rbrace$. Therefore, if $f\in\FTwoSTR'$ and $\sigma=\sigma_{\FTwoSTR'}(f)$, then
$$\sigma_{\FTwoSTR'}(f^{-1})=\left\langle\frac1{\sigma_0}, -\frac{\sigma_1}{\sigma_0},-\sigma_3\sigma_2,\sigma_3\right\rangle\,.$$
Therefore, every transformation in $\FTwoSTR'$ has a unique inverse.
\end{enumerate}
$\FTwoSTR'$ is therefore a transformation group.\qed
\end{proof}

\begin{corollary}
$\mathcal M_{\FTwoSTR'}$ is a transformational equivalence relation.
\label{m-ftwostrd-tran-eq-rel}
\end{corollary}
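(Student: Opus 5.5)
The statement is that $\mathcal M_{\FTwoSTR'}$ is a transformational equivalence relation. I will derive it the same way as Corollaries~\ref{m-ftwot-tran-eq-rel}, \ref{m-ftwotr-tran-eq-rel} and \ref{m-ftwostd-tran-eq-rel}, by invoking Lemma~\ref{f2strd-group} together with Corollary~\ref{match-rel-of-tran-grp-is-eq-rel}. Lemma~\ref{f2strd-group} establishes that $\FTwoSTR'$ is a transformation group. Corollary~\ref{match-rel-of-tran-grp-is-eq-rel} says that the match relation of any transformation class forming a group under composition is a transformational equivalence relation. So the first step is to check that $\FTwoSTR'$ is a transformation class in the sense of Definition~\ref{transformation-class-definition}, and the second step is to apply the corollary.

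For the first step, I will collect the ingredients already given in the text preceding Lemma~\ref{f2strd-group}:
\begin{enumerate}
\item the parameter vector length is $\lambda(\FTwoSTR')=4$, with parameter vectors $\langle s,v_x,v_y,b\rangle$ where $s\neq0$ and $b\in\lbrace-1,1\rbrace$;
\item the transformation class function $\phi_{\FTwoSTR'}$ is given by \eqref{phi-f2str-dash};
\item the basis sizes are $\beta(\FTwoSTR')=2$ and $\tau(\FTwoSTR')=2$, with $\BTSF_{\FTwoSTR'}$ given by \eqref{equation-BTSF-FTwoSTR-dash}.
\end{enumerate}
The only point needing a remark is that restricting $\FTwoSTR$ to non-zero scale factors does not disturb the basis-pair analysis. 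Removing the $\sigma[0]=0$ case can only shrink each $\BTSF$ set, so the bound $|\BTSF|\le2$ still holds. That bound is attained, for instance by the basis pair in Fig.~\ref{f2str-example}, whose scale factor is $2$.

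For the second step, I will note that $\mathcal M_{\FTwoSTR'}$ is exactly the relation $R$ of Theorem~\ref{thm-tran-grp-implies-eq-rel}, taken over $A=2^{\reals^2}$. Each $f\in\FTwoSTR'$ acts on point sets via $f(S)$, and for pattern sets this action preserves the group properties: $e(S)=S$, $(h\circ g)(S)=h(g(S))$, and $f^{-1}(f(S))=S$. Reflexivity, symmetry and transitivity of $\mathcal M_{\FTwoSTR'}$ then follow. Corollary~\ref{match-rel-of-tran-grp-is-eq-rel} packages exactly this argument, so citing it suffices.

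The only potential obstacle is this lifting from points to point sets, but it is already implicit in Corollary~\ref{match-rel-of-tran-grp-is-eq-rel}. I therefore expect the proof to be a one-line appeal to Lemma~\ref{f2strd-group} and Corollary~\ref{match-rel-of-tran-grp-is-eq-rel}.
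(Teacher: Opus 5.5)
Your proposal is correct and matches the paper's proof, which is exactly the one-line appeal to Lemma~\ref{f2strd-group} (that $\FTwoSTR'$ is a transformation group) together with Corollary~\ref{match-rel-of-tran-grp-is-eq-rel}. Your additional checks, namely that $\FTwoSTR'$ is a transformation class and that the group action lifts from points to point sets, are not needed at this step, because the paper already takes them as established.
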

\begin{proof}
This follows directly from Lemma~\ref{f2strd-group} and Corollary~\ref{match-rel-of-tran-grp-is-eq-rel}.\qed
\end{proof}

\begin{corollary}
If $P,D\subset\reals^2$, then $\occsetfun(P,D,\FTwoSTR')$ is a set of transformationally equivalent patterns.
\end{corollary}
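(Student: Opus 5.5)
The plan mirrors the proofs of the analogous corollaries for $\FTwoT$, $\FTwoTR$ and $\FTwoST'$. First, Corollary~\ref{m-ftwostrd-tran-eq-rel} gives that $\mathcal M_{\FTwoSTR'}$ is a transformational equivalence relation, that is, an equivalence relation over $2^{\reals^2}$. Lemma~\ref{tran-eq-rel-part-lemma} then says it partitions $2^{\reals^2}$ exhaustively and exclusively into transformational equivalence classes. The target is the inclusion $\occsetfun(P,D,\FTwoSTR')\subseteq[P]_{\FTwoSTR'}$, from which the result follows.

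To establish the inclusion, take any $Q\in\occsetfun(P,D,\FTwoSTR')$. By Definition~\ref{occurrence-set-def}, there is some $f\in\FTwoSTR'$ with $Q=f(P)$, so $P\rightarrow_{\FTwoSTR'}Q$. By the invertibility part of Lemma~\ref{f2strd-group}, $f^{-1}\in\FTwoSTR'$ and $P=f^{-1}(Q)$, so $Q\rightarrow_{\FTwoSTR'}P$. Hence $P\equiv_{\FTwoSTR'}Q$ by Definition~\ref{tran-eq-def}, and therefore $Q\in[P]_{\FTwoSTR'}$ by Definition~\ref{tran-eq-class-def}.

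It remains to pass from "every occurrence is equivalent to $P$" to "any two occurrences are equivalent", which is what Definition~\ref{set-of-tran-eq-patterns-def} requires. Take $Q_1,Q_2$ in the occurrence set. Symmetry gives $Q_1\equiv P$, and transitivity of $\mathcal M_{\FTwoSTR'}$ then gives $Q_1\equiv Q_2$. Concretely, the map $f_2\circ f_1^{-1}$ lies in $\FTwoSTR'$ by closure and sends $Q_1$ onto $Q_2$; the map $f_1\circ f_2^{-1}$ gives the reverse direction.

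The only point needing care is this last step, and the paper's earlier corollaries pass over it quickly; I will make the symmetry and transitivity argument explicit there. One further detail: should $\occsetfun(P,D,\FTwoSTR')$ be empty, the statement holds vacuously.
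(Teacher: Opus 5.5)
Your proposal is correct and follows the same route as the paper's proof. Corollary~\ref{m-ftwostrd-tran-eq-rel} makes $\mathcal M_{\FTwoSTR'}$ an equivalence relation, Lemma~\ref{tran-eq-rel-part-lemma} gives the partition, and the result follows from $\occsetfun(P,D,\FTwoSTR')\subseteq[P]_{\FTwoSTR'}$. Where you differ is only in spelling out what the paper leaves implicit: the inclusion via $f^{-1}\in\FTwoSTR'$, and the passage to pairwise equivalence via $f_2\circ f_1^{-1}$, which tightens the same argument rather than changing it.
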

\begin{proof}
From Corollary~\ref{m-ftwostrd-tran-eq-rel}, we know that $\mathcal M_{\FTwoSTR'}$ is a transformational equivalence relation, and therefore an equivalence relation over $2^{\mathbb R^2}$. From Lemma~\ref{tran-eq-rel-part-lemma}, it follows that $\mathcal M_{\FTwoSTR'}$ partitions $2^{\mathbb R^2}$ exhaustively and exclusively into transformational equivalence classes. It follows that $\occsetfun(P,D,\FTwoSTR')\subset [P]_{\FTwoSTR'}$ and that therefore $\occsetfun(P,D,\FTwoSTR')$ is a set of transformationally equivalent patterns.\qed
\end{proof}

\subsubsection{$\FTwoSTRMod{\mu}$ and $\FTwoSTRMod{\mu}'$}

For some specified value of $\mu$, $\FTwoSTRMod{\mu}$ was defined in Section~\ref{example-FTwoSTRmodMu} above to be the set of 2-dimensional transformations, such that each transformation, $f\in\FTwoSTRMod{\mu}$, takes a point, $p=\langle x, y\rangle$, multiplies $x$ by a real-valued scale factor, $s$, (which may be zero) to get a new point, $p'=\langle sx,y\rangle$, then translates $p'$ by a vector, $v= \langle v_x, v_y\rangle$, with the $y$-component of $p'$ being added modulo $\mu$ to the $y$-component of $v$, to give a new point, $p''= \langle sx+v_x,(y+v_y)\bmod\mu\rangle$. Finally, $p''$ may be reflected in the $x$-axis, modulo $\mu$, to give a point, $q=\langle sx+v_x,(b(y+v_y))\bmod\mu\rangle$, where $b=-1$ if $p''$ is reflected and $b=1$ if it is not. As was the case with $\FTwoST$ and $\FTwoSTR$, any transformation in $\FTwoSTRMod{\mu}$ that has an $x$-axis scale factor of zero does not have a unique inverse. Therefore $\FTwoSTRMod{\mu}$ is not a transformation group.

Let $P,D\subset\reals^2$ and suppose that we have an occurrence set, $\occsetfun(P,D,\FTwoSTRMod{\mu})$, for a given value of $\mu$, that contains an occurrence, $Q$, of $P$, such that the $x$-axis scale factor of the transformation that maps $P$ onto $Q$ is zero. Let $S$ be the subset of $\occsetfun(P,D,\FTwoSTRMod{\mu})$ that contains occurrences of $P$ in $D$ with respect to $\FTwoSTRMod{\mu}$ that can be obtained from $P$ by transformations whose $x$-axis scale factor is {\em not\/} zero. There is then no transformation in $\FTwoSTRMod{\mu}$ that maps $Q$ onto any of the occurrences in $S$, so if $S$ is not empty, then $\occsetfun(P,D,\FTwoSTRMod{\mu})$ is {\em not\/} a set of transformationally equivalent patterns. Moreover, this shows that $\mathcal M_{\FTwoSTRMod{\mu}}$ is {\em not\/} a transformational equivalence relation.

We therefore define that, for a given value of $\mu$, $\FTwoSTRMod{\mu}'$ is the subset of $\FTwoSTRMod{\mu}$ that contains only the  transformations in $\FTwoSTRMod{\mu}$ whose $x$-axis scale factors are non-zero. 
$\lambda\left(\FTwoSTRMod{\mu}'\right)=4$ and each transformation, $f\in\FTwoSTRMod{\mu}'$, can be uniquely identified by a 4-vector, $\langle s, v_x, v_y, b\rangle$, where $s\in\reals\setminus0$, $v_x\in\reals$, $v_y\in\reals_{\mu}$ and $b\in\lbrace-1,1\rbrace$. If $\sigma_{\FTwoSTRMod{\mu}'}(f)=\langle s, v_x, v_y, b\rangle$, then $f(p)=\langle sx+v_x,(b(y+v_y))\bmod\mu\rangle$ and the transformation class function is the same as for $\ftwostrmu$, that is,
\begin{equation}
\phi_{\FTwoSTRMod{\mu}'}(\sigma,p)=\left\langle p[0]\sigma[0]+\sigma[1],\left(\sigma[3]\left(p[1]+\sigma[2]\right)\right)\bmod\mu\right\rangle\,,
\label{phi-f2str-mod-dash}
\end{equation}
where $p\in\reals\times\reals_{\mu}$ and there must exist a transformation, $f\in\FTwoSTRMod{\mu}'$, such that $\sigma=\sigma_{\FTwoSTRMod{\mu}'}(f)$.

We can define that $\beta\left(\FTwoSTRMod{\mu}'\right)=2$ and that $\tau\left(\FTwoSTRMod{\mu}'\right)=2$. Given an object basis, $\langle p_1,p_2\rangle$, and an image basis, $\langle q_1,q_2\rangle$, the basis pair transformation function for $\ftwostrmu'$ is defined as follows:
\begin{multline}
\BTSF_{\FTwoSTRMod{\mu}'}\left(\langle p_1,p_2\rangle,\langle q_1,q_2\rangle\right)=\\
\left\lbrace\left\langle\sigma[0],q_2[0]-p_2[0]\sigma[0],(q_1[1]-p_1[1])\bmod\mu,1\right\rangle\right.,\\
\left.\left\langle\sigma[0],q_2[0]-p_2[0]\sigma[0],(-p_1[1]-q_1[1])\bmod\mu,-1\right\rangle\right\rbrace
\label{equation-BTSFFTwoSTRModMuDash}
\end{multline}
where
\begin{equation}
(q_1[1]-p_1[1])\bmod\mu=(q_2[1] - p_2[1])\bmod\mu\text{ when }\sigma[3]=1\,,
\label{tfstrmodmu-dash-condition-1}
\end{equation}
and
\begin{equation}
(p_1[1]+q_1[1])\bmod\mu=(p_2[1] + q_2[1])\bmod\mu\text{ when }\sigma[3]=-1\,,
\label{tfstrmodmu-dash-condition-2}
\end{equation}
and
\begin{equation}
\sigma[0]=\begin{cases}
1,&\text{ if $p_2[0]=p_1[0]$ and $q_2[0]= q_1[0]$; and}\\
\frac{q_2[0]-q_1[0]}{p_2[0]-p_1[0]},&\text{otherwise, provided that }p_2[0]\neq p_1[0]\text{ and }q_2[0]\neq q_1[0]\,.
\end{cases}
\end{equation}
$\BTSF_{\FTwoSTRMod{\mu}'}$ can be computed using the algorithm in Fig.~\ref{FTwoSTRModMu-dash-basis-transformation-set-function}. Note that this algorithm is the same as that in Fig.~\ref{FTwoSTRModMu-basis-transformation-set-function} except that an extra check has been added in lines 8--9 to ensure that the $x$-axis scale factor cannot be zero.

\begin{figure}[t]
\begin{center}
\resizebox*{.75\columnwidth}{!}{
\begin{minipage}{\linewidth}
\begin{tabbing}
\input{ColumnDefinitions}
$\fun{$\BTSF_{\FTwoSTRMod{\mu}'}$}(\objectBasis,\imageBasis)$\\
1\>$\dmif \imageBasis = \objectBasis$\\
2\>\>$\dmreturn\lbrace\langle1,0,0,1\rangle\rbrace\dmcomment{Identity transformation}$\\
3\>$p_1\becomes\objectBasis[0],p_2\becomes\objectBasis[1],q_1\becomes\imageBasis[0],q_2\becomes\imageBasis[1]$\\
4\>$\dmif q_2[0]\neq q_1[0]\land p_2[0]=p_1[0]\dmcomment{$\sigma[0]$ is undefined.}$\\
5\>\>$\dmreturn \lbrace\rbrace$\\
6\>$\dmif  q_2[0]= q_1[0]\land p_2[0]=p_1[0] \dmcomment{$\sigma[0]=0/0$}$\\
7\>\>$\sigma_0\becomes1$\\
\>$\dmcommentnotab{So $p_2[0]\neq p_1[0]$}$\\
8\>$\dmelse\dmif q_2[0]= q_1[0]\dmcomment{Would require $\sigma_0$ to be zero.}$\\
9\>\>$\dmreturn \lbrace\rbrace$\\
10\>$\dmelse$\\
11\>\>$\sigma_0\becomes(q_2[0]-q_1[0])/(p_2[0]-p_1[0])\dmcomment{From \eqref{f2strmod-eq8}}$\\
12\>$S\becomes\lbrace\rbrace$\\
13\>$\sigma_1\becomes q_2[0]-p_2[0]\sigma_0\dmcomment{From \eqref{f2strmod-eq10}}$\\
14\>$\dmif (q_1[1]-p_1[1])\bmod\mu=(q_2[1] - p_2[1])\bmod\mu\dmcomment{From \eqref{f2strmod-y-difference-equal}}$\\
15\>\>$\sigma_3\becomes1$\\
16\>\>$\sigma_2\becomes (q_1[1]-p_1[1])\bmod\mu\dmcomment{From \eqref{f2strmod-eq5}}$\\
17\>\>$S\becomes\lbrace\langle\sigma_0,\sigma_1,\sigma_2,\sigma_3\rangle\rbrace$\\
18\>$\dmif (p_1[1]+q_1[1])\bmod\mu=(p_2[1] + q_2[1])\bmod\mu\dmcomment{From \eqref{f2strmod-y-difference-equal-sigma-3-equals-1}}$\\
19\>\>$\sigma_3\becomes-1$\\
20\>\>$\sigma_2\becomes(-p_1[1]-q_1[1])\bmod\mu\dmcomment{From \eqref{f2strmod-eq12}}$\\
21\>\>$S\becomes S\cup\lbrace\langle\sigma_0,\sigma_1,\sigma_2,\sigma_3\rangle\rbrace$\\
22\>$\dmreturn S$
\end{tabbing}
\end{minipage}
}
\end{center}
\caption{Algorithm for computing the basis-pair transformation set function for $\FTwoSTRMod{\mu}'$.}
\label{FTwoSTRModMu-dash-basis-transformation-set-function}
\end{figure}

We will now show that $\ftwostrmu'$ is a transformation group.
\begin{lemma}
For any real value of $\mu>0$, $\FTwoSTRMod{\mu}'$ is a transformation group.
\label{f2str-mod-dash-group}
\end{lemma}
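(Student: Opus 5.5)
I would follow exactly the template of the proofs of Lemmas~\ref{f2strd-group} and \ref{ftwotr-transformation-group}, verifying closure, associativity, identity and invertibility in turn, working throughout on the domain $\reals\times\reals_\mu$ (as stipulated after \eqref{phi-f2str-mod-dash}) and using \eqref{phi-f2str-mod-dash} as the action. The $x$-coordinate behaves exactly as in $\FTwoST'$, so the $x$-parts of all calculations can be copied from Lemma~\ref{f2std-group}. The only new work is in the $y$-coordinate, where I will repeatedly use the standard fact, immediate from Def.~\ref{mod-def}, that $a\bmod\mu=b\bmod\mu$ iff $a-b$ is an integer multiple of $\mu$. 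Two consequences follow: $(a\bmod\mu + c)\bmod\mu=(a+c)\bmod\mu$, and $(b\,(a\bmod\mu))\bmod\mu=(ba)\bmod\mu$ for $b\in\lbrace1,-1\rbrace$. The second is essentially Lemma~\ref{mod-lemma-1} combined with the first.

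For closure, with $\sigma^f,\sigma^g$ as before, I would compute
\[
f(g(p))[1]=\left(\sigma^f_3\left(\left(\sigma^g_3(p_y+\sigma^g_2)\right)\bmod\mu+\sigma^f_2\right)\right)\bmod\mu .
\]
Using the two facts, this reduces to $\left(\sigma^f_3\sigma^g_3\left(p_y+\sigma^g_2+\sigma^g_3\sigma^f_2\right)\right)\bmod\mu$, where I use $1/\sigma^g_3=\sigma^g_3$. So I would take
\[
\sigma_{\FTwoSTRMod{\mu}'}(f\circ g)=\left\langle\sigma^g_0\sigma^f_0,\ \sigma^g_1\sigma^f_0+\sigma^f_1,\ \left(\sigma^g_2+\sigma^g_3\sigma^f_2\right)\bmod\mu,\ \sigma^f_3\sigma^g_3\right\rangle .
\]
The third component is reduced mod $\mu$ so that it lies in $\reals_\mu$, as required by the parametrisation; this reduction does not change the action, again by the first fact. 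The scale factor is non-zero because it is a product of non-zeros, so $f\circ g\in\FTwoSTRMod{\mu}'$.

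Associativity I would not redo by brute expansion. Composition of functions $\reals\times\reals_\mu\to\reals\times\reals_\mu$ is always associative, since $((f\circ g)\circ h)(p)=f(g(h(p)))=(f\circ(g\circ h))(p)$. For consistency with the paper's style, I may also note that both groupings yield the same parameter vector modulo $\mu$ in the third slot. For the identity, $\langle1,0,0,1\rangle$ gives $e(p)=\langle p_x,p_y\bmod\mu\rangle=p$, because $p_y\in\reals_\mu$. Here the restriction of the domain to $\reals\times\reals_\mu$ is essential, and I will say so explicitly. For invertibility, I would solve $q_y=(\sigma_3(p_y+\sigma_2))\bmod\mu$ for $p_y\in\reals_\mu$. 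This gives $p_y=(\sigma_3 q_y-\sigma_2)\bmod\mu=(\sigma_3(q_y-\sigma_3\sigma_2))\bmod\mu$, so
\[
\sigma_{\FTwoSTRMod{\mu}'}(f^{-1})=\left\langle1/\sigma_0,\ -\sigma_1/\sigma_0,\ (-\sigma_3\sigma_2)\bmod\mu,\ \sigma_3\right\rangle .
\]
I would then check that both compositions with $f$ reduce to $\langle1,0,0,1\rangle$ using the closure formula.

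The main obstacle is handling the modular reductions carefully. Specifically, I must justify pulling the inner $\bmod$ out through addition and through multiplication by $\pm1$, and show that the resulting map is genuinely a bijection of $\reals\times\reals_\mu$, with uniqueness of the preimage coming from the fact that distinct elements of $\reals_\mu$ are incongruent mod $\mu$. I would state and prove the two modular facts as a small auxiliary claim first, mirroring the style of the proof of Lemma~\ref{mod-lemma-1}, and then the four group axioms become routine adaptations of Lemma~\ref{f2strd-group}.
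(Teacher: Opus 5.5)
Your proposal is correct and follows essentially the same route as the paper. The paper likewise verifies closure, associativity, identity and invertibility through explicit parameter-vector formulas, reusing the $\FTwoSTR'$ computation with $\bmod\mu$ appended to the $y$-coordinate.

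Your differences are refinements that tighten the paper's argument. You justify removing the inner $\bmod\mu$ via congruence modulo $\mu$, where the paper drops it silently. You reduce the third component of the composite and inverse vectors modulo $\mu$ so they actually lie in $\reals_\mu$; the paper's $\sigma^g_2+\sigma^f_2/\sigma^g_3$ and $-\sigma_3\sigma_2$ need not. Finally, you replace the paper's brute-force associativity expansion with the general associativity of function composition.
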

\begin{proof}
In order to prove that $\FTwoSTRMod{\mu}'$ is a transformation group, we need to show that it satisfies the conditions of closure, associativity, presence of a unique identity element, and invertibility.
\begin{enumerate}
\item{\bfseries Closure}: 
Let $f,g\in\FTwoSTRMod{\mu}'$, $p\in\reals\times\reals_\mu$ and $p=\langle p_x, p_y\rangle$. By definition, $f(g(p))=(f\circ g)(p)$. Let $\sigma^f = \sigma_{\FTwoSTRMod{\mu}'}(f)$ and let $\sigma^g = \sigma_{\FTwoSTRMod{\mu}'}(g)$. Let $\sigma^x_i=\sigma^x[i]$. 
From \eqref{phi-f2str-mod-dash},
\begin{eqnarray}
f(g(p))&=&(f\circ g)(p)\nonumber\\
&=&\phi_{\ftwostrmu}'\left(\sigma^f,\left\langle p_x\sigma^g_0 + \sigma^g_1,\left(\sigma^g_3\left(p_y+\sigma^g_2\right)\right)\bmod\mu\right\rangle\right)\nonumber\\
&=&\left\langle\left(p_x\sigma^g_0+\sigma^g_1\right)\sigma^f_0+\sigma^f_1,\left(\sigma^f_3\left(\sigma^g_3\left(p_y+\sigma^g_2\right)+\sigma^f_2\right)\right)\bmod\mu\right\rangle\nonumber\\
&=&\left\langle p_x\sigma^g_0\sigma^f_0+\sigma^g_1\sigma^f_0+\sigma^f_1,\left(\sigma^f_3\sigma^g_3\left(p_y+\sigma^g_2+\frac{\sigma^f_2}{\sigma^g_3}\right)\right)\bmod\mu\right\rangle\nonumber\\
&=&\phi_{\ftwostrmu'}\left(\left\langle \sigma^g_0\sigma^f_0, \sigma^g_1\sigma^f_0+\sigma^f_1, \sigma^g_2+\frac{\sigma^f_2}{\sigma^g_3}, \sigma^f_3\sigma^g_3\right\rangle,p\right)\nonumber\\
\label{ftwostrmud-closure}
\end{eqnarray}
Therefore 
$$
\sigma_{\ftwostrmu'}\left(f\circ g\right)=\left\langle \sigma^g_0\sigma^f_0, \sigma^g_1\sigma^f_0+\sigma^f_1, \sigma^g_2+\frac{\sigma^f_2}{\sigma^g_3}, \sigma^f_3\sigma^g_3\right\rangle\,,
$$
which implies that $f\circ g\in \ftwostrmu'$, which implies that $\ftwostrmu'$ is closed under transformation composition.
\item{\bfseries Associativity}: Let $f,g,h\in\ftwostrmu'$, $p\in\reals\times\reals_\mu$ and $p=\langle p_x, p_y\rangle$. 
Let $\sigma^f = \sigma_{\ftwostrmu'}(f)$, $\sigma^g = \sigma_{\ftwostrmu'}(g)$ and $\sigma^h = \sigma_{\ftwostrmu'}(h)$. 
Let $\sigma^x_i=\sigma^x[i]$. 
From \eqref{ftwostrmud-closure},
\begin{eqnarray}
(f\circ g)(p)&=&\phi_{\ftwostrmu'}\left(\left\langle \sigma^g_0\sigma^f_0, \sigma^g_1\sigma^f_0+\sigma^f_1, \sigma^g_2+\frac{\sigma^f_2}{\sigma^g_3}, \sigma^f_3\sigma^g_3\right\rangle,p\right)\nonumber\\
\label{ftwostrmud-fog}\\
(g\circ h)(p)&=&\phi_{\ftwostrmu'}\left(\left\langle \sigma^h_0\sigma^g_0, \sigma^h_1\sigma^g_0+\sigma^g_1, \sigma^h_2+\frac{\sigma^g_2}{\sigma^h_3}, \sigma^g_3\sigma^h_3\right\rangle,p\right)\nonumber\\\label{ftwostrmud-goh}
\end{eqnarray}
Therefore, from \eqref{phi-f2str-mod-dash} and \eqref{ftwostrmud-fog},
\begin{eqnarray}
(f&\circ& g)(h(p))\nonumber\\
&=&\phi_{\ftwostrmu'}\left(\left\langle \sigma^g_0\sigma^f_0, \sigma^g_1\sigma^f_0+\sigma^f_1, \sigma^g_2+\frac{\sigma^f_2}{\sigma^g_3}, \sigma^f_3\sigma^g_3\right\rangle,h(p)\right)\nonumber\\
&=&\phi_{\ftwostrmu'}\left(\left\langle \sigma^g_0\sigma^f_0, \sigma^g_1\sigma^f_0+\sigma^f_1, \sigma^g_2+\frac{\sigma^f_2}{\sigma^g_3}, \sigma^f_3\sigma^g_3\right\rangle,
\left\langle p_x\sigma^h_0 + \sigma^h_1,\left(\sigma^h_3\left(p_y+\sigma^h_2\right)\right)\bmod\mu\right\rangle\right)\nonumber\\
&=&	\left\langle 
		\left(p_x\sigma^h_0 + \sigma^h_1\right)\sigma^g_0\sigma^f_0+ \sigma^g_1\sigma^f_0+\sigma^f_1,
		\left(\sigma^f_3\sigma^g_3\left(\sigma^h_3\left(p_y+\sigma^h_2\right)+\sigma^g_2+\frac{\sigma^f_2}{\sigma^g_3}\right)\right)\bmod\mu
	\right\rangle
	\nonumber\\
&=&\left\langle
	p_x\sigma^h_0\sigma^g_0\sigma^f_0
	+ \sigma^h_1\sigma^g_0\sigma^f_0+\sigma^g_1\sigma^f_0+\sigma^f_1,
	\left(\sigma^f_3\sigma^g_3\sigma^h_3\left(
		p_y + \sigma^h_2 + \frac{\sigma^g_2}{\sigma^h_3} + \frac{\sigma^f_2}{\sigma^g_3\sigma^h_3}
	\right)\right)\bmod\mu
\right\rangle
\nonumber\\
&=&\phi_{\ftwostrmu'}\left(
	\left\langle
		\sigma^h_0\sigma^g_0\sigma^f_0,
		\sigma^h_1\sigma^g_0\sigma^f_0+\sigma^g_1\sigma^f_0+\sigma^f_1,
		\sigma^h_2 + \frac{\sigma^g_2}{\sigma^h_3} + \frac{\sigma^f_2}{\sigma^g_3\sigma^h_3},
		\sigma^f_3\sigma^g_3\sigma^h_3
	\right\rangle,
	p
\right)\nonumber\\\label{f2strmud-assoc-1}
\end{eqnarray}
and, from \eqref{phi-f2str-mod-dash} and \eqref{ftwostrmud-goh},
\begin{eqnarray}
f((g&\circ& h)(p))\nonumber\\
&=&\phi_{\ftwostrmu'}\left(
	\sigma^f,
	\phi_{\ftwostrmu'}\left(
		\sigma_{\ftwostrmu'}\left(g\circ h\right),
		p
	\right)
\right)\nonumber\\
&=&\phi_{\ftwostrmu'}\left(
	\sigma^f,
	\phi_{\ftwostrmu'}\left(
	\left\langle 
		\sigma^h_0\sigma^g_0, 
		\sigma^h_1\sigma^g_0+\sigma^g_1, 
		\sigma^h_2+\frac{\sigma^g_2}{\sigma^h_3}, 
		\sigma^g_3\sigma^h_3
	\right\rangle,
	p
	\right)
\right)\nonumber\\
&=&\phi_{\ftwostrmu'}\left(
	\sigma^f,
	\left\langle
		p_x\sigma^h_0\sigma^g_0 + \sigma^h_1\sigma^g_0+\sigma^g_1,
		\left(\sigma^g_3\sigma^h_3\left(
			p_y + \sigma^h_2+\frac{\sigma^g_2}{\sigma^h_3}
		\right)\right)\bmod\mu
	\right\rangle
\right)\nonumber\\
&=&\left\langle
	\sigma^f_0\left(
		p_x\sigma^h_0\sigma^g_0 + \sigma^h_1\sigma^g_0+\sigma^g_1
	\right)
	+\sigma^f_1,
	\left(\sigma^f_3\left(
		\sigma^g_3\sigma^h_3\left(
			p_y + \sigma^h_2+\frac{\sigma^g_2}{\sigma^h_3}
		\right)
		+\sigma^f_2
	\right)\right)\bmod\mu
\right\rangle\nonumber\\
&=&\left\langle
	p_x\sigma^f_0\sigma^h_0\sigma^g_0 + \sigma^f_0\sigma^h_1\sigma^g_0+\sigma^f_0\sigma^g_1
	+\sigma^f_1,
		\left(\sigma^f_3\sigma^g_3\sigma^h_3\left(
			p_y + \sigma^h_2+\frac{\sigma^g_2}{\sigma^h_3}
			+ \frac{\sigma^f_2}{\sigma^g_3\sigma^h_3}
		\right)\right)\bmod\mu
\right\rangle
\nonumber\\
&=&\phi_{\ftwostrmu'}\left(
	\left\langle
		\sigma^f_0\sigma^h_0\sigma^g_0,
		\sigma^f_0\sigma^h_1\sigma^g_0+\sigma^f_0\sigma^g_1+\sigma^f_1,
		\sigma^h_2+\frac{\sigma^g_2}{\sigma^h_3} + \frac{\sigma^f_2}{\sigma^g_3\sigma^h_3},
		\sigma^f_3\sigma^g_3\sigma^h_3
	\right\rangle,
	p
\right)
\nonumber\\
\label{f2strmud-assoc-2}
\end{eqnarray}
From \eqref{f2strmud-assoc-1} and \eqref{f2strmud-assoc-2}, it follows that $(f\circ g)(h(p))=f((g\circ h)(p))$ and that therefore transformation composition is associative over $\ftwostrmu'$.
\item{\bfseries Identity element}: Suppose that $e\in\ftwostrmu'$ and $\sigma_{\ftwostrmu'}(e)=\langle1,0,0,1\rangle$. Let $p\in\reals\times\reals_{\mu}$ and $p=\langle p_x, p_y\rangle$. Then, from \eqref{phi-f2str-mod-dash},
\begin{eqnarray}
e(p)&=&\phi_{\ftwostrmu'}(\langle1,0,0,1\rangle,\langle p_x,p_y\rangle)\nonumber\\
&=&\langle 1p_x+0,(1(p_y+0))\bmod\mu\rangle\nonumber\\
&=&\langle p_x,p_y\rangle=p\label{ftwostrmud-identity}
\end{eqnarray}
Therefore, $e(p)=p$ for all $p\in\reals^2$. $e$ is therefore a universal identity element in $\ftwostrmu'$.
\item{\bfseries Invertibility}: Let $p,q\in\reals\times\reals_\mu$ and let $q=\phi_{\ftwostrmu'}(\sigma,p)$ where $\sigma=\langle\sigma_0,\sigma_1,\sigma_2,\sigma_3\rangle$, $p=\langle p_x,p_y\rangle$ and $q=\langle q_x,q_y\rangle$. Then, from \eqref{phi-f2str-mod-dash},
$$q=\langle p_x\sigma_0+\sigma_1,(\sigma_3(p_y+\sigma_2))\bmod\mu\rangle$$
and therefore
\begin{eqnarray}
p&=&\langle p_x,p_y\rangle\nonumber\\
&=&\left\langle
	\frac{q_x}{\sigma_0}-\frac{\sigma_1}{\sigma_0},
	\left(\frac{1}{\sigma_3}\left(q_y-\sigma_3\sigma_2\right)\right)\bmod\mu
\right\rangle\nonumber\\
&=&\phi_{\ftwostrmu'}\left(
	\left\langle\frac1{\sigma_0}, -\frac{\sigma_1}{\sigma_0},-\sigma_3\sigma_2,\frac1{\sigma_3}\right\rangle,
	q\right)\nonumber
\end{eqnarray}
Recall that $\sigma_3\in\left\lbrace1,-1\right\rbrace$. Therefore, if $f\in\ftwostrmu'$ and $\sigma=\sigma_{\ftwostrmu'}(f)$, then
$$\sigma_{\ftwostrmu'}(f^{-1})=\left\langle\frac1{\sigma_0}, -\frac{\sigma_1}{\sigma_0},-\sigma_3\sigma_2,\sigma_3\right\rangle\,.$$
Therefore, every transformation in $\ftwostrmu'$ has a unique inverse.
\end{enumerate}
$\ftwostrmu'$ is therefore a transformation group.\qed
\end{proof}

\begin{corollary}
$\mathcal M_{\ftwostrmu'}$ is a transformational equivalence relation.
\label{m-ftwostrmud-tran-eq-rel}
\end{corollary}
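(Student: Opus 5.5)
The plan is to follow exactly the pattern used for Corollaries~\ref{m-ftwot-tran-eq-rel}, \ref{m-ftwotr-tran-eq-rel}, \ref{m-ftwostd-tran-eq-rel} and \ref{m-ftwostrd-tran-eq-rel}. By Lemma~\ref{f2str-mod-dash-group}, for any real $\mu>0$, $\ftwostrmu'$ is a transformation group under composition. Corollary~\ref{match-rel-of-tran-grp-is-eq-rel} states that the match relation of any transformation class that forms a group under composition is a transformational equivalence relation. Applying it with $F=\ftwostrmu'$ gives the statement directly.

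The one point I would check explicitly is the domain. Here the transformations act on $\reals\times\reals_\mu$ rather than all of $\reals^2$, so Theorem~\ref{thm-tran-grp-implies-eq-rel} should be invoked with $A$ equal to $\reals\times\reals_\mu$. It is then lifted to point sets: set $A'$ to be the power set of $\reals\times\reals_\mu$, and apply each $f$ pointwise to sets. The group properties carry over to the induced maps on sets, since $(f\circ g)(S)=f(g(S))$, $e(S)=S$, and $f^{-1}(f(S))=S$.

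The argument then runs through the three properties of the match relation $\mathcal M_{\ftwostrmu'}$:
\begin{enumerate}
\item \textbf{Reflexivity} follows from the identity $\langle1,0,0,1\rangle$, which fixes every point of $\reals\times\reals_\mu$ because $p_y\bmod\mu=p_y$ there.
\item \textbf{Symmetry} follows from the inverse parameter vector $\langle1/\sigma_0,-\sigma_1/\sigma_0,-\sigma_3\sigma_2,\sigma_3\rangle$. Its mod-$\mu$ correctness uses the fact that reduction mod $\mu$ commutes with the affine $y$-map up to multiples of $\mu$, which is essentially Lemma~\ref{mod-lemma-1}.
\item \textbf{Transitivity} follows from the closure formula \eqref{ftwostrmud-closure}.
\end{enumerate}

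The main, if minor, obstacle is that the composed parameter's third component, $\sigma^g_2+\sigma^f_2/\sigma^g_3$, need not lie in $\reals_\mu$. So the parameter vector should be normalised by reducing that component mod $\mu$, which does not change the function since the result is reduced mod $\mu$ anyway. With this remark, membership in $\ftwostrmu'$ is secure, and the corollary follows as a one-line consequence of Lemma~\ref{f2str-mod-dash-group} and Corollary~\ref{match-rel-of-tran-grp-is-eq-rel}.
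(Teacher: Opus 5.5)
Your proposal is correct and follows the paper's proof, which simply cites Lemma~\ref{f2str-mod-dash-group} together with Corollary~\ref{match-rel-of-tran-grp-is-eq-rel}. Your extra remarks are valid refinements of details the paper glosses over, not a different route: you take the underlying set to be $\reals\times\reals_\mu$ and its power set, and you reduce the third parameter component modulo $\mu$ after composition (the inverse's component $-\sigma_3\sigma_2$ needs the same reduction).
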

\begin{proof}
This follows directly from Lemma~\ref{f2str-mod-dash-group} and Corollary~\ref{match-rel-of-tran-grp-is-eq-rel}.\qed
\end{proof}

\begin{corollary}
If $P,D\subset\reals\times\reals_\mu$, then $\occsetfun\left(P,D,\ftwostrmu'\right)$ is a set of transformationally equivalent patterns.
\end{corollary}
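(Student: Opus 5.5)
The plan follows the pattern of the corresponding corollaries for $\FTwoT$, $\FTwoTR$, $\FTwoST'$ and $\FTwoSTR'$. The only difference is that the ambient space is now $\reals\times\reals_\mu$ rather than $\reals^2$, so every step is read relative to that set.

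\emph{Step 1 (equivalence relation).} Corollary~\ref{m-ftwostrmud-tran-eq-rel} gives that $\mathcal M_{\ftwostrmu'}$ is a transformational equivalence relation. That corollary rests on Lemma~\ref{f2str-mod-dash-group} and Theorem~\ref{thm-tran-grp-implies-eq-rel}, applied with $A=\reals\times\reals_\mu$. So the relation is an equivalence relation on $2^{\reals\times\reals_\mu}$. By Lemma~\ref{tran-eq-rel-part-lemma}, read with $\reals\times\reals_\mu$ in place of $\reals^k$, it partitions this powerset into the classes $[P]_{\ftwostrmu'}$.

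\emph{Step 2 (occurrences lie in $[P]$).} Take any $Q\in\occsetfun(P,D,\ftwostrmu')$. By Definition~\ref{occurrence-set-def}, $Q=f(P)$ for some $f\in\ftwostrmu'$, so $P\rightarrow_{\ftwostrmu'}Q$.

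\emph{Step 3 (symmetry).} By the invertibility part of Lemma~\ref{f2str-mod-dash-group}, $f^{-1}\in\ftwostrmu'$, and $f^{-1}(Q)=P$. Hence $Q\rightarrow_{\ftwostrmu'}P$, and therefore $P\equiv_{\ftwostrmu'}Q$. This shows $\occsetfun(P,D,\ftwostrmu')\subseteq[P]_{\ftwostrmu'}$.

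\emph{Step 4 (pairwise equivalence).} For any $Q_1,Q_2$ in the occurrence set, Step 3 gives $Q_1\equiv P\equiv Q_2$. Transitivity, which comes from closure under composition, then gives $Q_1\equiv_{\ftwostrmu'}Q_2$. This is exactly Definition~\ref{set-of-tran-eq-patterns-def}.

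The main point needing care is the modular setting in Step 3. One must check that $f^{-1}(f(P))=P$ holds pointwise, which requires $P\subset\reals\times\reals_\mu$. For such points, $p_y \bmod \mu = p_y$. With this, the parameter vector of $f^{-1}$ given in the proof of Lemma~\ref{f2str-mod-dash-group} indeed recovers each $p\in P$. This is why the hypothesis restricts $P$ and $D$ to $\reals\times\reals_\mu$. Otherwise the identity $e(p)=p$ would fail for $p_y\notin\reals_\mu$, and reflexivity would break.

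If the partition lemma is regarded as stated only for $\reals^k$, I would skip it and argue directly via Steps 2–4. Those steps use only the group axioms established in Lemma~\ref{f2str-mod-dash-group}.
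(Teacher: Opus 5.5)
Your proof is correct and follows the same route as the paper. The paper invokes Corollary~\ref{m-ftwostrmud-tran-eq-rel} and Lemma~\ref{tran-eq-rel-part-lemma} and then simply asserts $\occsetfun(P,D,\ftwostrmu')\subset[P]_{\ftwostrmu'}$; your Steps 2--4 spell out that inclusion and the pairwise equivalence using inverses and closure. Your choice to work over $2^{\reals\times\reals_\mu}$ rather than the paper's $2^{\mathbb R^2}$ is the more accurate one, since reflexivity fails for any set containing a point whose $y$-coordinate lies outside $\reals_\mu$.
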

\begin{proof}
From Corollary~\ref{m-ftwostrmud-tran-eq-rel}, we know that $\mathcal M_{\ftwostrmu'}$ is a transformational equivalence relation, and therefore an equivalence relation over $2^{\mathbb R^2}$. From Lemma~\ref{tran-eq-rel-part-lemma}, it follows that $\mathcal M_{\ftwostrmu'}$ partitions $2^{\mathbb R^2}$ exhaustively and exclusively into transformational equivalence classes. It follows that $\occsetfun\left(P,D,\ftwostrmu'\right)\subset [P]_{\ftwostrmu'}$ and that therefore $\occsetfun(P,D,\ftwostrmu')$ is a set of transformationally equivalent patterns.\qed
\end{proof}

\section{Summary}
In this paper,
we have presented a framework of concepts, theory, notation and terminology for geometric pattern matching.
Specifically, we began by providing a formal definition of the concept of a transformation class, and derived precise definitions of the transformation classes $\FTwoT$, $\FTwoTR$, $\FTwoST$, $\FTwoSTR$ and $\FTwoSTRMod{\mu}$. For each of these transformation classes, we derived the transformation class function and the basis-pair transformation set function. Where necessary, we also provided algorithms for computing these basis-pair transformation set functions.

We then provided precise definitions of the concepts of a match and an occurrence, preparing the way for defining the concepts of a maximal match and a maximal pattern and proving a lemma that characterises the relationship between the two. We then focused on maximal matches and maximal patterns for transformations that are bijections. In particular, we showed that the maximal pattern for a pattern, a dataset and a bijection is equal to the intersection of the pattern and the dataset transformed by the inverse of the bijection. We also proved a theorem that fully describes the relationships between the maximal patterns and maximal matches for two patterns with respect to a bijection and its inverse. We noted that maximal patterns correspond to partial symmetries.

We then defined the concepts of occurrence set and heterogeneous occurrence set, along with the notions of the covered set of an occurrence set and an occurrence set's compression factor. 

Finally, we examined in detail the concepts of transformational equivalence and transformation groups and analysed the group-theoretic and transformation-equivalence properties of the specific transformation classes defined earlier in the paper.
In particular, we showed that $\FTwoST$, $\FTwoSTR$ and $\FTwoSTRMod{\mu}$ do not form transformation groups, but that the subsets of these classes containing transformations with non-zero $x$-axis scale factors {\em do\/} form transformation groups. We name these subsets $\FTwoST'$, $\FTwoSTR'$ and $\FTwoSTRMod{\mu}'$, respectively.

\bibliographystyle{apalike}
\bibliography{/Users/susanne/Repos/bibliography/bibliography}

%% file: ColumnDefinitions.tex
\firstColWidth\=\colWidth\=\colWidth\=\colWidth\=\colWidth\=\colWidth\=\colWidth\=\colWidth\=\colWidth\=\colWidth\=\\

%% file: RAVEL-HAYDN-GROUND-TRUTH-CPO-F2STRBook-IPTG.tex
\begin{sidewaystable}
\caption{Inter-pattern transformation graph for the set of patterns in the CPO representation in Fig.~\ref{RAVEL-HAYDN-GROUND-TRUTH-CPO-GT} and the transformation class, $F_{\mathrm{2STR}}$.}
\label{RAVEL-HAYDN-GROUND-TRUTH-CPO-IPTG}
\resizebox{\linewidth}{!}{
\begin{tabularx}{1.33\linewidth}{ll|lllllllllllllllllllllllllllllll}
&&\multicolumn{31}{c}{{\bfseries To pattern}}\\
&&1&2&3&4&5&6&7&8&9&10&11&12&13&14&15&16&17&18&19&20&21&22&23&24&25&26&27&28&A&B&C\\\cline{1-33}
&1 &$I$ & & & & &$T_{\langle48,0\rangle}$ & & & & & & & & & & & & & & & & & & & & & & & & & \\\cline{2-33}
&2 & &$I$ & & & & &$T_{\langle48,0\rangle}$ & & & & & & & & & & & & & & & & & & & & & & & & \\\cline{2-33}
&3 & & &$I$ &$T_{\langle6,0\rangle}$ & & & &$T_{\langle48,0\rangle}$ &$T_{\langle54,0\rangle}$ & & & & & & & & & & & & & & & & & & & & & & \\\cline{2-33}
&4 & & &$T_{\langle-6,0\rangle}$ &$I$ & & & &$T_{\langle42,0\rangle}$ &$T_{\langle48,0\rangle}$ & & & & & & & & & & & & & & & & & & & & & & \\\cline{2-33}
&5 & & & & &$I$ & & & & &$T_{\langle48,0\rangle}$ & & & & & & & & & & & & & & & & & & & & & \\\cline{2-33}
&6 &$T_{\langle-48,0\rangle}$ & & & & &$I$ & & & & & & & & & & & & & & & & & & & & & & & & & \\\cline{2-33}
&7 & &$T_{\langle-48,0\rangle}$ & & & & &$I$ & & & & & & & & & & & & & & & & & & & & & & & & \\\cline{2-33}
&8 & & &$T_{\langle-48,0\rangle}$ &$T_{\langle-42,0\rangle}$ & & & &$I$ &$T_{\langle6,0\rangle}$ & & & & & & & & & & & & & & & & & & & & & & \\\cline{2-33}
&9 & & &$T_{\langle-54,0\rangle}$ &$T_{\langle-48,0\rangle}$ & & & &$T_{\langle-6,0\rangle}$ &$I$ & & & & & & & & & & & & & & & & & & & & & & \\\cline{2-33}
&10 & & & & &$T_{\langle-48,0\rangle}$ & & & & &$I$ & & & & & & & & & & & & & & & & & & & & & \\\cline{2-33}
&11 & & & & & & & & & & &$I$ & & & & & & & & & & & & & & & & & & & & \\\cline{2-33}
&12 & & & & & & & & & & & &$I$ & & & & & &$T_{\langle60,0\rangle}$ & &$T_{\langle84,-7\rangle}$ & & & & & & & & & & & \\\cline{2-33}
&13 & & & & & & & & & & & & &$I$ & & & & & & & & & & & & & & & & & & \\\cline{2-33}
&14 & & & & & & & & & & & & & &$I$ & & & & & & & & & & & & & & & & & \\\cline{2-33}
&15 & & & & & & & & & & & & & & &$I$ & & & & & & & & & & & & & & & & \\\cline{2-33}
&16 & & & & & & & & & & & & & & & &$I$ & & &$T_{\langle24,-12\rangle}$ & & & & & & & & & & & & \\\cline{2-33}
&17 & & & & & & & & & & & & & & & & &$I$ & & & & & & & & & & & & & & \\\cline{2-33}
&18 & & & & & & & & & & & &$T_{\langle-60,0\rangle}$ & & & & & &$I$ & &$T_{\langle24,-7\rangle}$ & & & & & & & & & & & \\\cline{2-33}
&19 & & & & & & & & & & & & & & & &$T_{\langle-24,12\rangle}$ & & &$I$ & & & & & & & & & & & & \\\cline{2-33}
&20 & & & & & & & & & & & &$T_{\langle-84,7\rangle}$ & & & & & &$T_{\langle-24,7\rangle}$ & &$I$ & & & & & & & & & & & \\\cline{2-33}
&21 & & & & & & & & & & & & & & & & & & & & &$I$ & & & & & & & & & & \\\cline{2-33}
&22 & & & & & & & & & & & & & & & & & & & & & &$I$ & & & & & & & & & \\\cline{2-33}
&23 & & & & & & & & & & & & & & & & & & & & & & &$I$ & & & & & & & & \\\cline{2-33}
&24 & & & & & & & & & & & & & & & & & & & & & & & &$I$ & & & & & & & \\\cline{2-33}
&25 & & & & & & & & & & & & & & & & & & & & & & & & &$I$ & & & & & & \\\cline{2-33}
&26 & & & & & & & & & & & & & & & & & & & & & & & & & &$I$ &$T_{\langle6,0\rangle}$ & & & & \\\cline{2-33}
&27 & & & & & & & & & & & & & & & & & & & & & & & & & &$T_{\langle-6,0\rangle}$ &$I$ & & & & \\\cline{2-33}
&28 & & & & & & & & & & & & & & & & & & & & & & & & & & & &$I$ & & & \\\cline{2-33}
&A & & & & & & & & & & & & & & & & & & & & & & & & & & & & &$I$ & & \\\cline{2-33}
&B & & & & & & & & & & & & & & & & & & & & & & & & & & & & & &$I$ & \\\cline{2-33}
\multirow{-31}{*}{\rotatebox[origin=c]{90}{{\bfseries From pattern}}}&C & & & & & & & & & & & & & & & & & & & & & & & & & & & & & & &$I$ \\\cline{2-33}
\end{tabularx}
}
\end{sidewaystable}